\def\qed{\hfill $\Box$}
\theoremstyle{plain}
\newtheorem{theorem}{\textsf{Theorem}}
\theoremstyle{plain}
\newtheorem{proposition}{\textsf{Proposition}}
\theoremstyle{plain}
\newtheorem{lemma}{\textsf{Lemma}}
\theoremstyle{plain}
\newtheorem{claim}{\textsf{Claim}}
\theoremstyle{plain}
\newtheorem{fact}{\textsf{Fact}}
\theoremstyle{plain}
\newtheorem{corollary}{\textsf{Corollary}}
\theoremstyle{definition}
\newtheorem{remark}{\textsf{Remark}}
\theoremstyle{definition}
\newtheorem{definition}{\textsf{Definition}}
\theoremstyle{definition}
\newtheorem{axiom}{\textsf{Axiom}}
\theoremstyle{definition}
\newtheorem{AXIOM}{\textsf{AXIOM}}
\theoremstyle{definition}
\newtheorem{example}{\textsf{Example}}
\renewcommand*{\backref}[1]{}
\renewcommand*{\backrefalt}[4]{[%
    \ifcase #1 Not cited.%
          \or Cit. on pp.~#2.%
          \else Cit. on pp. #2.%
    \fi%
    ]}
\begin{document}
\title{\textsf{Social Preferences and Deliberately Stochastic Behavior}\thanks{An earlier version of this study was presented under the title ``Stochastic Choice and Social Preferences: Inequity Aversion versus Shame Aversion.'' We thank Ryota Iijima, Tetsuya Kawamura, Fabrice Le Lec, Yusuke Osaki, and the participants at the CTW Summer Camp (Hirosaki University), ABEF (Nagoya University of Commerce and Business), the Game Theory Workshop (Komazawa University and Online), ASSA 2021 (Virtual), Kansai University, Hitotsubashi University, and EWMES 2021(virtual) for their helpful comments and suggestions. This research was supported by a grant-in-aid from the Tokyo Center for Economic Research (TCER) and JSPS (KAKENHI Grant Number JP19J01049 and JP20K13457) (Hashidate). Part of this research was completed while Yoshihara was a research associate at the Center for Mathematics and Data Science, Gunma University. Yoshihara is grateful for their hospitality. All remaining errors are ours.}}
\author{\textsc{Yosuke Hashidate}\thanks{JSPS Overseas Research Fellow, Department of Economics, Boston University; Email: {\tt yosukehashidate@gmail.com}; Address: 270 Bay State Road, Boston, MA 02215.}
\and \textsc{Keisuke Yoshihara}\thanks{Email: {\tt ksk0110@gmail.com}}}
\date{First Draft: April 27, 2020; Current Draft: April 27, 2023}
\maketitle
\begin{abstract}
This study proposes a tractable stochastic choice model to identify motivations for prosocial behavior, and to explore alternative motivations of deliberate randomization beyond ex-ante fairness concerns. To represent social preferences, we employ an additively perturbed utility model consisting of the sum of expected utility and a nonlinear cost function, where the utility function is purely selfish while the cost function depends on social preferences. Using the cost function, we study stochastic choice patterns to distinguish between stochastic inequity-averse behavior and stochastic shame-mitigating behavior. Moreover, we discuss how our model can complement recent experimental evidence of ex-post and ex-ante fairness concerns.
\end{abstract}
\emph{Keywords}: Preference for Randomization; Perturbed Utility; Personal Norms; Social Preferences; Ex-Ante Fairness.
\\\\
\emph{JEL Classification Numbers}: D63; D64; D81; D91.

\section{Introduction}
\label{introduction}
One of the most significant and difficult issues concerning understanding social preferences is identifying the motivations behind ``seemingly'' altruistic and prosocial actions. As various models of social preferences can explain altruistic and prosocial behavior, it is difficult to identify which motivation is important in various contexts.\footnote{There are three main topics of social preferences: (i) outcome-based social preferences (See, e.g., \citet{FS_1999}.), (ii) intention-based social preferences  (See, e.g., \citet{R_1993}.), and (iii) belief-based social preferences  (See, e.g., \citet{BT_2006}.).} In particular, we cannot easily identify the motivations behind deterministic behavior. Consider a simple dictator game as a motivating example, where the dictator chooses one allocation from the choice set $\lbrace (10, 0), (5, 5) \rbrace$.\footnote{The allocation $(10, 0)$ means that the dictator obtains a payoff of 10, while the recipient obtains nothing.} Assuming that the decision-maker chooses $(10, 0)$ or $(5, 5)$ in a deterministic manner, the former choice exhibits \textit{selfishness}, whereas the latter exhibits \textit{altruism}. However, we do not have sufficient information to judge the motivation behind this altruistic behavior.\footnote{On the one hand, the decision maker is purely altruistic in the sense of outcome-based social preferences. On the other hand, the experimental evidence shows that, in the presence of recipients, it is psychologically costly to choose the former allocation due to social pressure \citep{DLM_2012}. In \citet{DCD_2006}, the decision-maker acts altruistically out of concern for their social image. In other words, they may feel social pressure to behave generously.}

One way to identify motivations is to consider stochastic behavior. Indeed, the observed choice behavior in many settings is generally stochastic. One of the reasons why this behavior is stochastic is \textit{deliberate randomization}, which was experimentally studied in \citet{AO_2017} (See also \citet{S_2017}.). Decision makers deliberately choose allocations randomly because they are \textit{optimal} for them to do so for various reasons, such as trembling hands or errors with implementation costs \citep{FIS_2015}, Allais-style lottery preferences \citep{M_1985, CDOR_2019}, hedging against ambiguity \citep{S_2015_b}, and regret minimization \citep{DKW_2018}.

Even in social preferences, deliberate randomization can occur. For example, \textit{ex-ante fairness} concern can lead to stochastic behavior \citep{HY_2021}. This taste captures equality of ``opportunity,'' i.e., equality of ex-ante expected payoffs \citep{S_2013}. As in \citet{MZ_2018}, who experimentally study combinational preferences of ex-post and ex-ante fairness, a substantial proportion of subjects is inconsistent with this type of taste.\footnote{In inequity-averse preferences, decision-makers dislike the difference between their payoffs and their recipients' payoffs, which can lead to (ex-post) fairness concerns. Such a taste can provoke ex-ante fairness concerns.} This data suggests other possibilities for decision-makers to randomize allocations in social contexts.

The trade-off between selfishness and altruism can also provoke deliberate randomization. Consider a simple dictator game with $\lbrace (10, 0), (5, 5) \rbrace$ as a thought experiment. By allowing stochastic behavior, impurely altruistic behavior can be captured by \textit{deliberate randomization}. Let $\rho$ be the choice probability of choosing $(10, 0)$ and $1-\rho$ be the choice probability of choosing $(5, 5)$. Suppose that the decision maker has some motivation for social preferences. How does the decision maker determine the choice probability? If $\rho = 0$, the decision maker obtains a payoff of 5 with certainty. In this case, the decision maker may feel the temptation to act selfishly.\footnote{The publicity of behavior does not cause the \textit{temptation}, but by the immediacy of the timing that the decision maker obtains payoffs \citep{S_2015_a}. See also \cite{N_2011} and \cite{NR_2023}.} Thus, they have the incentive to increase $\rho$. Conversely, if $\rho = 1$, there are two main motives for decreasing $\rho$. First, the realized allocation $(10, 0)$ is unfair, provoking the decision maker to engage in \textit{guilt avoidance} because of payoff differences. Second, the decision-maker may be perceived as selfish if he/she chooses $\rho = 1$. Hence, the decision-maker is averse to choosing $\rho = 1$.

We consider two explanations for the deliberately stochastic behavior through the trade-off. First, the decision maker may be inequity-averse; that is, the decision maker dislikes ``unfair'' allocations.\footnote{An allocation is \textit{fair} if every agent obtains the same payoff.} However, the decision maker is partially selfish, as proven by substantial experimental evidence.\footnote{In the model of \citet{FS_1999}, empirical finding shows that the parameter of envy $\alpha$ is larger than the parameter of guilt $\beta$ \citep{EG_2010}.} Thus, they can choose the former even if they have inequity-averse preferences. These tastes may exhibit deliberate randomization because there is a trade-off between selfishness and fairness.\footnote{This type of deliberate randomization is different from behavior stemming from \textit{ex-ante} fairness in \citet{S_2013}. See \cite{HY_2021} for stochastic behavior due to ex-ante fairness concerns.} Second, the decision maker may be perceived as selfish by other agents if he/she engages deterministically in selfish behavior because the realized allocation has the role of public motives mentioned above.\footnote{Decision makers pay considerable attention to social concerns, as they care about how other agents perceive their decision-making. Recent experimental evidence, such as \citet{DCD_2006}, has suggested a possible explanation: the decision maker acts altruistically out of concern for their social image.} In particular, the decision maker can feel shame by acting selfishly. This perception makes the image-conscious decision maker averse to openly displaying such deterministic selfish behavior. Such concerns affect the trade-off between selfish motivations and personal norms, impacting behavior.\footnote{Personal norms might differ from social norms. \citet{H_2021} studies such social decision-making with \textit{reference-dependent} preferences.}$^,$\footnote{Personal norms can be affected by social norms. In this paper, we do not distinguish personal norms from social norms. In addition, there is a difference between self-image and social image in personal norms. There is a crucial distinction between shame and guilt. Generally, the difference between them is captured by the publicity of behavior; shame is caused by the publicity of behavior, but guilt can occur even in private \citep{B_1980, GS_1988}. Personal norms can be affected by both shame and guilt.} Therefore, to mitigate the feeling of shame, he/she may engage in deliberately stochastic behavior. We thus need to examine whether stochastic behavior is due to impure altruism, even if we observe that a stochastic prosocial action has been taken.

In this study, we provide a unified framework for identifying the motivations behind prosocial stochastic behavior.\footnote{In general, intrinsic reciprocity does not provoke deliberate randomization because deliberately stochastic behavior may give opponents misperceptions of each other. In this paper, we do not consider intrinsic reciprocity.} We use stochastic choice functions as primitives and axiomatically characterize the deliberate randomization model stemming from inequity aversion and shame.\footnote{The two main theories of behavior adopt different approaches. On the one hand, the axiomatic study of \textit{inequity aversion} regards preferences over allocations as primitives \citep{R_2010}. On the other hand, the axiomatic study of \textit{social concerns} considers preferences over choice sets, that is, the \textit{menus} of allocations, as primitives \citep{S_2015_a, DS_2012}.} One of the objectives of this study is to distinguish between inequity-averse behavior and shame-mitigating behavior. To do so, we study stochastic choice functions in a unified manner.

The contributions of this study are threefold. First, we characterize the deliberate randomization model in social preferences, following the additive perturbed utility (henceforth, APU) models of \citet{FIS_2015}, consisting of the sum of expected utility and a nonlinear cost function. The characterization gives us the additively separable preference structure between selfishness and personal norms. We introduce the following four axioms. The first axiom of \textit{Continuity} is standard. The second axiom of \textit{Menu Acyclicity} requires that the rankings between menus are acyclic. Cycles between allocations can occur as a stochastic behavior. The third axiom of \textit{Selfishness} requires the utility function to be \emph{purely selfish}, in the sense that the decision-maker maximizes only his/her payoff. The fourth axiom of \textit{Personal Norm Ranking} requires stochastic behavior to be characterized by a norm utility representation, which is related to the cost of randomization.

Second, we characterize the costs of deliberate randomization. The tractability of our model stems from the additive separability between selfishness and altruistic personal norms. As the private ranking represented by the utility function is purely selfish, the altruistic aspects of the decision-maker are captured by the cost function of randomization. In particular, two cases have been considered. One is that personal norm utility is inequity-averse. The other is that personal norm utility is altruistic. As a benchmark, we consider a special case in which personal norms are selfish. We then study both stochastic and deterministic behavior. The latter corresponds to the expected utility of the risk preferences.

Third, we provide a method for identification based on stochastic behavior. In particular, we study stochastic choice patterns stemming from ex-post fairness concerns and shame-mitigating behavior. On the one hand, we identify the ex-post fairness concern by observing that the decision-maker chooses fair allocation with a higher probability. On the other hand, we identify shame-driven behavior by observing that the decision-maker stochastically reveals the trade-off between selfishness and altruism.

Moreover, we consider the relationship between our model and the experimental data in \citet{MZ_2018}. \citet{MZ_2018} discuss alternative motives of pure selfishness, efficiency concerns, i.e., utilitarian preferences, and inequality aversion. This study can capture a trade-off between selfishness and personal norms, such as efficiency and fairness concerns. Their experimental data is based on \textit{objective randomization}; Subjects' probabilistic allocation choices are implemented through a randomization device. This study's target is rather \textit{subjective randomization}, i.e., a ``mental'' coin toss in one's mind \citep{S_2017}. However, we contribute to the field of social preferences by proposing an alternative stochastic model, which is different from ex-ante fairness. By doing so, we seek to gain a deeper understanding of the motivations behind altruistic or prosocial behavior.

The remainder of this paper is organized as follows. Section \ref{preview} reviews the results of this study and describes the model. Section \ref{axiomatization} introduces the axioms that characterize the model and states the main results. Section \ref{literature} presents a literature review. Section \ref{conclusion} concludes the paper. All proofs are provided in the Appendix.

\section{The Model}
\label{preview}
This section presents a stochastic choice model of social preferences that captures the following aspects for which we apply the perturbed utility models of \citet{FIS_2015} to social preferences. 

We investigate a decision maker who chooses the ``optimal'' probability over allocations in menus; that is, \textit{deliberate randomization} is beneficial.\footnote{Given a decision context or situation, deliberate randomization can be interpreted as a mixed strategy in terms of the best response for decision problems. For example, in an experiment, if the dictator's choices are private, he/she may choose the most selfish allocation with certainty. If the dictator's choices are public, his/her behavior can be stochastic to avoid being perceived as selfish.}
\begin{enumerate}
		\renewcommand{\labelenumi}{(\roman{enumi})}
			\item The decision-maker is privately selfish.
			\item The decision-maker can have a \textit{personal norm}, leading to a trade-off between selfishness and altruism.
			\item The decision-maker deliberately randomizes allocations to conceal selfishness.
	\end{enumerate}

First, we assume that the utility function is selfish. Although this setting can be seen as an extreme assumption, substantial evidence shows that a decision-maker's personal ranking is selfish \citep{DCD_2006, DWK_2007}.

Second, we consider personal norms to study stochastic behavior stemming from the trade-off between selfishness and altruism. Personal norms include fairness, social/image concerns, and selfishness. Deliberate randomization can occur not only when the choice deviates from fairness but also when the choice deviates from socially acceptable behavior. If selfishness is the personal norm, his/her behavior can correspond to the standard additively perturbed utility model \citep{FIS_2015}.

Third, to model deliberately stochastic behavior, we construct the cost function of \textit{deliberate randomization}, which imposes penalties for choosing ``selfish'' allocations. In general, the choice of allocation is publicly recognized. The key is that choosing the most selfish allocation with certainty may be psychologically costly. Thus, fairness concerns and deviation from socially acceptable behavior can lead to deliberate randomization. The model captures this aspect by using a cost function, and tractability is based on perturbed utility models.

\paragraph*{\textsf{Setup}.}
Let $I = \lbrace 1, 2 \rbrace$ be a set of individuals, where $1$ is the decision-maker, and $2$ is the other (passive) agent.\footnote{We can extend the $n$-th agents' case.} We assume that the set of payoffs is $\mathbb{R}$: A vector $\bm{x} = (x_1, x_2) \in \mathbb{R}^2$ is called an \textit{allocation} of payoffs among individuals, yielding a payoff $x_i \in \mathbb{R}$ for each $i \in I$. Let $X \subseteq \mathbb{R}^2$ be a \textit{compact} set of allocations. A choice set, the \textit{menu}, is a nonempty subset of $X$. Let $\mathcal{A}$ be a collection of all nonempty finite subsets of $X$ (See \citet{FIS_2014}.). The elements in $\mathcal{A}$ are denoted by $A, B, C \in \mathcal{A}$. Let $\mathcal{D} = \lbrace ({\bm x}, A) \in X \times A \mid {\bm x} \in A \rbrace$.

We study a stochastic choice rule $\rho$ that maps a \textit{menu} $A$ to a probability distribution over the \textit{allocations} in menu $A$, denoted by $\rho(A)$. Formally, we denote the stochastic choice rule by $\rho: \mathcal{A} \rightarrow \Delta(X)$, where $\Delta(X)$ is the set of probability distributions over $X$ with finite support. Given a menu $A \in \mathcal{A}$ with ${\bm x} \in A$, we denote the probability that an allocation ${\bm x}$ is chosen from menu $A$ by a non-negative number $\rho({\bm x}, A) \geq 0$ and $\sum_{{\bm x} \in A} \rho({\bm x}, A) = 1$. In other words, $\rho(A)$ defines the probability distribution over menu $A$. For example, consider menu $A = \lbrace {\bm x}, {\bm y} \rbrace$. Then, $\rho(A) = ({\bm x}, \rho({\bm x}, A); {\bm y}, \rho({\bm y}, A))$, where the allocation ${\bm x}$ is realized with probability $\rho({\bm x}, A)$ and the allocation ${\bm y}$ is realized with probability $\rho({\bm y}, A)$.

\paragraph*{\textsf{The Model}.}
Before describing the model, we state the additional notation. First, let $u: \mathbb{R} \rightarrow \mathbb{R}$ be a continuous and monotonic function. We call $u$ a selfish utility function if, for each $x_1, y_1 \in \mathbb{R}$, $x_1 > y_1 \Leftrightarrow u(x_1) > u(y_1)$. The utility function $u$ captures the decision maker's personal ranking. Second, let $\succsim_n^{\rho}$ over $X$ be a personal norm ranking over allocations, formally introduced in Definition \ref{norm} (Section \ref{axiom}). The binary relation captures the decision maker's personal norm ranking, being generally more altruistic than the decision maker's personal ranking. Personal norm ranking $\succsim_n^{\rho}$ is represented by a continuous and weakly monotone function $\varphi: X \rightarrow \mathbb{R}$. We call $\varphi$ the personal norm utility function if $\varphi$ represents $\succsim_n^{\rho}$.

We introduce an \textit{additive perturbed utility} representation \citep{FIS_2015}. We say that a function $c: [0,1] \rightarrow \mathbb{R} \cup \lbrace + \infty \rbrace$ is a \textit{cost} function if $c$ is strictly convex, $C^1$ over $(0, 1)$, and $\lim_{q \rightarrow 0} c'(q) = - \infty$.  If the property of steep cost, $\lim_{q \rightarrow 0} c'(q) = - \infty$, is ruled out, we call $c$ a \textit{weak cost} function. Hence, an APU with a weak cost function is called a \textit{weak APU}. An APU representation has the form:
\begin{equation}
\label{fis_apu}
\rho(A) = \arg \max_{\rho \in \Delta(A)} \sum_{{\bm x} \in A} \Bigl( u({\bm x}) \rho({\bm x}) - c(\rho({\bm x},A)) \Bigr),
\end{equation}
for some utility function $u : X \rightarrow \mathbb{R}$ and cost function $c$.

Now, we are ready to define the model in this paper. 
\begin{definition}
$\rho$ is an \textit{additive perturbed utility with social preferences} (APU(SP)) if  $\rho$ has an additive perturbed utility form with a pair $(u, \varphi)$ where $u: \mathbb{R} \rightarrow \mathbb{R}$ is a selfish utility function and $\varphi: X \rightarrow \mathbb{R}$ is a personal norm utility function: 
	\begin{equation}
	\label{model_sp}
	\rho(A) = \arg \max_{\rho \in \Delta(A)} \sum_{{\bm x} \in A} \Bigl( u(x_1) \rho({\bm x}) - c_{\varphi({\bm x})}(\rho({\bm x},A)) \Bigr),
	\end{equation}
where $c_{\varphi(\cdot)}: [0,1] \rightarrow \mathbb{R} \cup \lbrace + \infty \rbrace$ is a strictly convex function that satisfies
	\begin{enumerate}
		\renewcommand{\labelenumi}{(\roman{enumi})}
			\item $C^1_{\varphi(\cdot)}$ over $(0,1)$; and			
			\item $c'_{\varphi({\bm x})}(p) > c'_{\varphi({\bm y})}(p)$ for each $p \in (0, 1)$, if
				\begin{align*}
				\varphi({\bm x}) < \varphi({\bm y}).
				\end{align*}
	\end{enumerate}
\end{definition}

Here, we explain the interpretation of the model. First, regarding the model's components, the selfish utility function $u$ represents a personal ranking on $X$ that maximizes only the decision maker's payoff. Second, the cost function of randomization $c$ depends on personal norm utility $\varphi$. By definition, the marginal cost increases if allocations are not personally normatively better (Property (ii)).

Because $c_{\varphi(\cdot)}$ is strictly convex, deliberate randomization benefits the decision-maker.\footnote{Formally, we can also write down $c_{\varphi(\cdot)}$ in the following manner: $\mathcal{C}: \varphi(X) \times [0,1] \rightarrow \mathbb{R} \cup \lbrace + \infty \rbrace$. For example, property (ii) implies that
\begin{align*}
\dfrac{\partial^{2} \mathcal{C}(a, p)}{\partial a \partial p} < 0
\end{align*}
where $a \in \varphi(X)$.} If there is no social concern, the decision maker adopts deterministic behavior by choosing the most selfish allocation in the menu (See Corollary \ref{eut}.).

\paragraph*{\textsf{FOC for $\rho$}.}
We consider the first-order condition (FOC) for $\rho$. Let the Lagrange multiplier be denoted as $\lambda: \mathcal{A} \rightarrow \mathbb{R}$. For each menu $A \in \mathcal{A}$, the optimal randomization has constraint $\sum_{{\bm x} \in A} \rho({\bm x}, A) = 1$. We immediately obtain the FOC for $\rho$ as follows:
\begin{equation}
u(x_1) - c'_{\varphi({\bm x})}(\rho({\bm x},A)) + \lambda(A)
\begin{cases}
    \geq 0 & \text{if $\rho({\bm x}, A) = 1$,} \\
    = 0   & \text{if $\rho({\bm x}, A) \in (0, 1)$,} \\
    \leq 0 & \text{if $\rho({\bm x}, A) = 0$.}
  \end{cases}
\end{equation}

What captures the Lagrange multiplier, $\lambda$? As the decision maker might bear a moral cost due to acting selfishly with certainty, APU(SP) deliberately describes stochastic prosocial behavior as optimal randomization. Constraint level $\lambda$ corresponds to the psychological cost of prosocial motives, which can be interpreted as the decision-maker's social preferences. This cost includes not only guilt and shame but also inequity aversion. Therefore, deliberate stochastic behavior can occur.

\begin{example}
Consider the menu $\lbrace (10, 0), (5, 5) \rbrace$ mentioned in Section \ref{introduction} and an APU(SP) with a tuple $(u, \varphi_1)$ where $u(x_1) = x_1$ and $\varphi_1({\bm x}) = (x_1 + 1)(x_2 + 1)$ for each ${\bm x} \in X$. The maximizer of $\varphi_1$ in $A$ is the latter allocation $(5, 5)$; that is, we obtain $11 = \varphi(10, 0)) < \varphi(5, 5)) = 36$. By definition, we have
	\begin{align*}
	c'_{\varphi_1(10,0)}(\rho((10, 0), \lbrace (10, 0), (5, 5) \rbrace) > c'_{\varphi(5,5)}(\rho((5, 5), \lbrace (10, 0), (5, 5) \rbrace).
	\end{align*}
As $u(10) > u(5)$, there is a trade-off between private ranking ($u$) and personal norm ranking ($\varphi_1$). Thus, deliberate stochastic behavior can occur.

Similarly, $\varphi_2({\bm x}) = x_1 - \alpha \max \lbrace x_2-x_1, 0 \rbrace - \beta \max \lbrace x_1-x_2, 0 \rbrace$ for each ${\bm x} \in X$ with $\alpha > 0$ and $\beta \in (0, 1)$. Thus, the maximizer of $\varphi_2$ in $A$ is $(5, 5)$. A trade-off exists between $u$ and $\varphi_2$. Hence, deliberate stochastic behavior can occur. 
\end{example}

\section{Axiomatization}
\label{axiomatization}
This section describes the axioms, results, and testable implications. First, to characterize the APU(SP), we introduce five axioms: (i) \textit{positivity}, (ii) \textit{continuity}, (iii) \textit{menu acyclicity}, (iv) \textit{selfishness}, and (v) \textit{personal norm ranking}. Next, we state the representation theorem and uniqueness result. Finally, we consider the testable implications of the model.

\subsection{Axioms}
\label{axiom}
First, we state the basic axiom of \textit{continuity}, which guarantees that the utility functions are continuous. 
\begin{axiom}
\label{continuity}
(\textsf{Continuity}): For any menu $\lbrace {\bm x}^1, \cdots, {\bm x}^m \rbrace$ with allocation sequences $\lim_{n \rightarrow \infty} {\bm x}^k_n = {\bm x}^k$ for each $k = 1, \cdots, m$,
	\begin{align*}
	\lim_{n \rightarrow \infty} \rho({\bm x}^k_n , \lbrace {\bm x}^1_n, \cdots, {\bm x}^m_n \rbrace) = \rho({\bm x}^k , \lbrace {\bm x}^1, \cdots, {\bm x}^m \rbrace).
	\end{align*}
\end{axiom}

Next, we introduce the acyclic condition introduced by \citet{FIS_2014, FIS_2015}, which is the axiom of \textit{menu acyclicity}. This axiom requires that menu rankings do not have cycles, whereas allocation rankings can have cycles.\footnote{Allocation rankings are induced as follows. For any ${\bm x}, {\bm y} \in X$, ${\bm x}$ is preferred to ${\bm y}$, ${\bm x} \succ^{\rho} {\bm y}$ if $\rho({\bm x}, A) > \rho({\bm y}, A)$ for some $A \ni {\bm x}, {\bm y}$. Similarly, for any ${\bm x}, {\bm y} \in X$, ${\bm x}$ is indifferent to ${\bm y}$, ${\bm x} \sim^{\rho} {\bm y}$ if $\rho({\bm x}, A) = \rho({\bm y}, A) \in (0, 1)$ for some $A \ni {\bm x}, {\bm y}$. Then, define $\succsim^\rho := \succ^{\rho} \cup \sim^{\rho}$. In the same way, menu rankings are induced as follows. For any $A, B \in \mathcal{A}$, $A$ is weaker than $B$, $A \succ_m^{\rho} B$ if $\rho({\bm x}, A) > \rho({\bm x}, B)$ for some ${\bm x} \in A \cap B$. Similarly, for any $A, B \in \mathcal{A}$, $A$ is tied with $B$, $A \sim_m^{\rho} B$ if $\rho({\bm x}, A) = \rho({\bm x}, B) \in (0, 1)$ for some ${\bm x} \in A \cap B$. Then, define $\succsim_m^{\rho} := \succ_m^{\rho} \cup \sim_m^{\rho}$.}  This axiom characterizes a menu-invariant APU.

\begin{axiom}
\label{menu_acyclicity}
(\textsf{Menu Acyclicity}): For any finite sequences of pairs $\lbrace ({\bm x}_k, A_k) \rbrace_{k = 1}^n$ with ${\bm x}_k \in A_k \cap A_{k+1}$ and ${\bm x_n} \in A_1$,
\begin{align*}	
	\begin{rcases}
    \rho({\bm x}_1, A_1) > \rho({\bm x}_1, A_2) \\
	\rho({\bm x}_k, A_k) \geq \rho({\bm x}_k, A_{k+1}) & \text{for $1 < k < m$}
	\end{rcases}
	\Rightarrow \rho({\bm x}_n, A_n) \ngeq \rho({\bm x}_n, A_1).	
\end{align*}
\end{axiom}

Moreover, we introduce an axiom of \textit{selfishness}. This axiom requires the decision maker to be \textit{selfish} at the \textit{private} stage of decision making. We excluded the behavior of \textit{overwhelming norms}. For example, considering a menu $\lbrace (10, 0), (5, 5) \rbrace$, if the decision maker adopts deterministic behavior, the axiom requires the decision maker to choose $(10, 0)$ with certainty. Owing to public recognition, the decision-maker may randomize this, but we do not assume that the decision-maker is purely altruistic.

\begin{axiom}
\label{selfishness}
(\textsf{Selfishness}): If there exists $A \in \mathcal{A}$ such that $\rho({\bm x}, A) = 1$, then $x_1 > y_1$ for all ${\bm y} \in A \setminus \lbrace {\bm x} \rbrace$. 
\end{axiom}

Finally, we introduce the axiom of the personal norm ranking. First, we induce a personal norm ranking, that is, a binary relation on $X$, as follows:
\begin{definition}
\label{norm}
 ${\bm y}$ is \textit{normatively better} than ${\bm x}$, that is,
	\begin{equation}
	{\bm y} \succ_n^\rho {\bm x} \ \text{if} \ \rho({\bm x}, A) > \rho({\bm x}, A \cup \lbrace {\bm y} \rbrace) \nonumber
	\end{equation}
for some $A \in \mathcal{A}$ with ${\bm x} \in A$ and ${\bm y} \in X \setminus A$.
\end{definition}

Before stating the axiom in detail, we state the following remark. To study personal norms, we pay attention to Remark \ref{menu_acyclicity_regularity}, focus on the strict part $\succ_n^{\rho}$, and then define a binary relation. 
\begin{remark}
\label{menu_acyclicity_regularity}
\textit{Menu Acyclicity} implies \textit{Regularity} \citep{FIS_2015}. 
\begin{axiom}
\label{regularity}
(\textsf{Regularity}): For all $A, B \in \mathcal{A}$ and ${\bm x} \in A \subseteq B$, $\rho({\bm x}, A) \geq \rho({\bm x}, B)$. 
\end{axiom}
\end{remark}

Now, we define $\succsim_n^\rho$ and $\sim_n^\rho$ as follows: For each ${\bm x}, {\bm y} \in X$, ${\bm y} \succsim_n^\rho {\bm x}$ if ${\bm x} \nsucc_n^\rho {\bm y}$. Similarly, for each ${\bm x}, {\bm y} \in X$, ${\bm y} \sim_n^\rho {\bm x}$ if ${\bm x} \nsucc_n^\rho {\bm y}$ and ${\bm y} \nsucc_n^\rho {\bm x}$. Thus, we obtain a utility function that represents $\succsim_n^\rho$.

Personal norm ranking is interpreted as follows. Given a menu $A$ with ${\bm x} \in A$, suppose that allocation ${\bm y}$ is added to menu $A$. Suppose that $\rho({\bm x}, A) > \rho({\bm x}, A \cup \lbrace {\bm y} \rbrace)$, that is, the added allocation decreases the choice probability of ${\bm x}$. Subsequently, the added allocation ${\bm y}$ reveals that it is normatively better than ${\bm x}$ in the sense that the choice probability of ${\bm x}$ decreases in the presence of ${\bm y}$. However, the monotone behavioral pattern can occur through choice-set size effects like \textit{regularity}. We need to restrict it to distinguish \textit{peronal norms} from \textit{regularity}. 

We consider the properties of personal norm ranking $\succsim_n^\rho$, summarized in the following axiom: The first two axioms guarantee that $\succ_n^{\rho}$ is represented by the utility function \citep{K_1988}. The third axiom is the weaker version of monotonicity. We relax the monotone condition as personal norm rankings include inequity-averse preferences, such as \citet{FS_1999}.
\begin{axiom}
\label{personal_norm}
(\textsf{Personal Norm Ranking}): $\succ_n^{\rho}$ satisfies the following statements:
\begin{enumerate}
\renewcommand{\labelenumi}{(\roman{enumi})}
	\item $\succ_n^{\rho}$ is \textit{asymmetric}: For any ${\bm x}, {\bm y} \in X$, ${\bm x} \succ_n^{\rho} {\bm y}$ implies ${\bm y} \nsucc_n^{\rho} {\bm x}$.
	\item $\succ_n^{\rho}$ is \textit{negatively transitive}:  For any ${\bm x}, {\bm y}, {\bm z} \in X$, ${\bm x} \nsucc_n^{\rho} {\bm y}$ and ${\bm y} \nsucc_n^{\rho} {\bm z}$ implies ${\bm x} \nsucc_n^{\rho} {\bm z}$.
	\item $\succ_n^{\rho}$ is \textit{monotone with respect to fair allocations}: For any $x, y \in \mathbb{R}$ with $x > y$, $(x, x) \succ_n^{\rho} (y, y)$. 
	\end{enumerate}
\end{axiom}
 
\subsection{Results}
\label{result}
We state the representation theorem (Theorem \ref{representation_SP}) and uniqueness result (Proposition \ref{uniqueness_SC}). We also study the case where pure selfishness is a personal norm (Proposition \ref{cost_selfishness}). Moreover, we find that the expected utility representation is a special case of an APU. We restrict the case to which the utility function is purely selfish (Corollaries \ref{fosd_selfishness}, \ref{apu}, and \ref{eut}). 

\paragraph*{\textsf{Representation Result}.}
First, we present the main result of this study.
\begin{theorem}
\label{representation_SP}
The following statements are equivalent:
	\begin{enumerate}
		\renewcommand{\labelenumi}{(\alph{enumi})}
			\item $\rho$ satisfies \textit{Continuity}, \textit{Menu Acyclicity}, \textit{Selfishness}, and \textit{Personal Norm Ranking}.
			\item $\rho$ is an \textit{additive perturbed utility with social preferences} (APU(SP)).
	\end{enumerate}
\end{theorem}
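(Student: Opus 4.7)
The plan is to prove the two implications separately. The (b)$\Rightarrow$(a) direction is a direct verification of the four axioms from the representation, while the (a)$\Rightarrow$(b) direction is the substantive work and proceeds by first using Continuity and Menu Acyclicity to obtain an additively separable APU-type form in the spirit of \citet{FIS_2015}, and then invoking Selfishness and Personal Norm Ranking to refine this form into the required selfish utility $u$ and personal-norm-indexed cost family $c_{\varphi(\cdot)}$.

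For (b)$\Rightarrow$(a), strict convexity of $c_{\varphi(\cdot)}$ yields a unique interior maximizer that depends continuously on the payoff profile, which together with continuity of $u$ and $\varphi$ delivers Continuity. The first-order condition $c'_{\varphi({\bm x})}(\rho({\bm x},A)) = u(x_1) + \lambda(A)$ makes $\lambda$ a monotone scalar summary of each menu, precluding cycles and hence yielding Menu Acyclicity. For Selfishness, if $\rho({\bm x},A) = 1$, the boundary FOC inequalities imply $u(x_1) > u(y_1)$ for every ${\bm y} \in A \setminus \{{\bm x}\}$, hence $x_1 > y_1$ by monotonicity of $u$. Finally, applying the definition of $\succ_n^\rho$ to the FOC shows that ${\bm y} \succ_n^\rho {\bm x}$ corresponds to $\varphi({\bm y}) > \varphi({\bm x})$ via condition~(ii), from which asymmetry, negative transitivity, and monotonicity on fair allocations all follow.

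For (a)$\Rightarrow$(b), the plan proceeds in three steps. First, since Menu Acyclicity implies Regularity (Remark~\ref{menu_acyclicity_regularity}), together with Continuity we are in the domain of the FIS characterization, yielding an additive perturbed utility form with a continuous combined utility $U : X \to \mathbb{R}$ and a strictly convex, $C^1$ (allocation-dependent) cost. Second, I would use Selfishness to separate out the selfish component: for any two allocations with $x_1 = y_1$ the axiom forbids a deterministic choice from $\{{\bm x},{\bm y}\}$, and combining this with the FOC structure implies that the first-coordinate component of $U$ captures all order-relevant information, so $U$ can be written as $u(x_1)$ with any residual absorbed into the cost. Third, I would build $\varphi$ from $\succsim_n^\rho$: by Axiom~\ref{personal_norm}(i)--(ii) this relation is a weak order which, together with Continuity and monotonicity on fair allocations, admits a continuous, weakly monotone representation $\varphi$ via a Debreu-style argument. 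Reindexing the cost family by $\varphi$ then yields $c_{\varphi(\cdot)}$, and the definition of $\succ_n^\rho$ translates via the FOC into the strict marginal-cost inequality~(ii).

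The hardest step will be the last factorization: showing that the allocation-dependent cost factors through the single scalar index $\varphi({\bm x})$ and that the strict inequality $c'_{\varphi({\bm x})}(p) > c'_{\varphi({\bm y})}(p)$ holds uniformly in $p \in (0,1)$. The personal-norm relation is defined by an existential quantifier over menus, whereas condition~(ii) is a universal statement in $p$; bridging the gap requires using Menu Acyclicity to propagate local cost comparisons to all menus and Continuity to extend the resulting inequality from densely realized probabilities to the full open unit interval.
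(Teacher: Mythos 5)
Your proposal follows essentially the same route as the paper's proof: the necessity direction is the same axiom-by-axiom verification from the FOC, and the sufficiency direction matches the paper's four steps (menu-invariant APU via the FIS Farkas-type lemma, a continuous representation $\varphi$ of $\succsim_n^{\rho}$, restriction of $u$ to a selfish utility, and reindexing the allocation-dependent costs by $\varphi$). You also correctly single out the same delicate point the paper handles in its final claim, namely that the existentially-defined relation $\succ_n^{\rho}$ must be converted into the marginal-cost inequality holding for all $p \in (0,1)$.
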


We call the model in Theorem \ref{representation_SP} \textit{additive perturbed utility with social preferences} (APU(SP)) if the axioms in Theorem \ref{representation_SP} are satisfied. 
\paragraph*{\textsf{Proof Outline}.}
We provide the proof outline of the sufficiency part in Theorem \ref{representation_SP} (See Appendix \ref{proof_sufficiency} in detail.). There are four steps. In Step 1 (Appendix \ref{sufficiency_step1}), we show that we obtain a menu-invariant APU (Lemma \ref{apu_menu}). We apply the result of \citet{FIS_2014}, a version of Farkas lemma (Lemma \ref{apu_farkas}). In Step 2 (Appendix \ref{sufficiency_step2}), we show that we obtain a utility representation of personal norm ranking $\succsim^{\rho}_n$ (Lemma \ref{representation_norm}). To do so, we have the three claims. To begin, we show that the binary relation $\succsim^{\rho}_n$ has a utility representation $\varphi$ (Claim \ref{norm_utility}). Next, we show that $\varphi$ is continuous (Claim \ref{norm_continuity}). Finally, we show that $\varphi$ is monotone with respect to fair allocations. In Step 3 (Appendix \ref{sufficiency_step3}), we show that the selfish utility function is selfish (Claim \ref{selfish_utility}) and continuous (Claim \ref{selfish_utility_continuity}). In Step 4 (Appendix \ref{sufficiency_step4}), we complete the representation of APU(SP). We construct the cost function based on the personal norm utility function $\varphi$. We construct the cost function using Definition \ref{norm} (Claim \ref{cost_well-defined}).

\paragraph*{\textsf{Uniqueness Result}.}
Next, we present the uniqueness result. To obtain the uniqueness result, we introduce \textit{Positivity}. This axiom states that all probabilities are positive. Even though \textit{Positivity} seems to be an extreme assumption in social preferences, this axiom is helpful for the identification of the model. 
\begin{axiom}
\label{positivity}
(\textsf{Positivity}): For any menu $A \in \mathcal{A}$ and allocation ${\bm x} \in A$, $\rho({\bm x}, A) > 0$.
\end{axiom}

We state the uniqueness result. The proof is in the Appendix (See Appendix \ref{app_uniqueness}).
\begin{proposition}
\label{uniqueness_SC}
Suppose that $\rho$ satisfies Positivity in addition to the axioms in Theorem \ref{representation_SP}. If $(u, \varphi)$ and $(\widehat{u}, \widehat{\varphi})$  represent the same $\rho$, then the following holds: There exsit real numbers $\alpha > 0$ and $\beta \in \mathbb{R}$ such that 
\begin{enumerate}
\renewcommand{\labelenumi}{(\roman{enumi})}
	\item $\widehat{u} = \alpha u + \beta$;
	\item $\widehat{\varphi} = \alpha \varphi$. 
	\end{enumerate}
\end{proposition}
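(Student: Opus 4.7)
The overall strategy is to exploit the first-order conditions (FOCs), which become equalities for every $\bm{x} \in A$ under \textit{Positivity}, and to combine them with the fact that the personal norm ranking $\succsim_n^{\rho}$ is directly derived from $\rho$ via Definition \ref{norm}. Since $\varphi$ and $\widehat{\varphi}$ are both continuous representations of the same $\succsim_n^{\rho}$ (continuity follows from the argument used in Step~2 of the sufficiency proof), there is a strictly increasing continuous function $H$ with $\widehat{\varphi} = H\circ \varphi$. Moreover, \textit{Positivity} upgrades the FOC to
\begin{equation*}
c'_{\varphi(\bm{x})}(\rho(\bm{x},A)) = u(x_1) + \lambda(A), \qquad
\widehat{c}'_{\widehat{\varphi}(\bm{x})}(\rho(\bm{x},A)) = \widehat{u}(x_1) + \widehat{\lambda}(A),
\end{equation*}
for every $A \in \mathcal{A}$ and every $\bm{x} \in A$. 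Subtracting the FOC at two allocations $\bm{x},\bm{y} \in A$ eliminates the Lagrange multiplier and yields a direct link between $u$-differences and $c'$-differences for each representation.

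To identify $u$ up to a positive affine transformation, the plan is to restrict attention to menus whose elements share a common personal-norm value, i.e., $\varphi(\bm{x}) = \varphi(\bm{y}) = a$ (by \textit{monotonicity with respect to fair allocations} and continuity, the level sets of $\varphi$ are non-degenerate on a two-dimensional $X$ and typically contain allocations with distinct first coordinates). On such sub-families, the APU(SP) collapses to a standard single-cost APU in the sense of \citet{FIS_2015} with cost $c_a$ and utility $u$. Invoking the FIS uniqueness result on these sub-families yields constants $\alpha_a > 0$ and $\beta_a \in \mathbb{R}$ with $\widehat{u}(x_1) = \alpha_a u(x_1) + \beta_a$ on the first-coordinates appearing in that level set. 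Because $u$ and $\widehat{u}$ are global functions of $x_1$ alone, consistency between distinct level sets that share even a single first-coordinate value forces $\alpha_a \equiv \alpha$ and $\beta_a \equiv \beta$, establishing part~(i).

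Having fixed $\widehat{u} = \alpha u + \beta$, I would plug this back into the FOC system to pin down $H$. Subtracting the hat FOC from $\alpha$ times the original FOC at a common pair $\bm{x},\bm{y} \in A$ and rearranging yields an identity forcing the quantity
\begin{equation*}
\widehat{c}'_{H(a)}(p) - \alpha\,c'_{a}(p)
\end{equation*}
to be independent of the level $a\in \varphi(X)$ at each $p\in(0,1)$. Combined with the strict monotonicity of $a \mapsto c'_a(p)$ (and of $b\mapsto \widehat{c}'_b(p)$) guaranteed by property (ii) of the cost family, this identity forces $H$ to be affine, and a further matching argument, using that $\rho$ covers enough of $(0,1)$ by \textit{Positivity}, pins the slope of $H$ down to $\alpha$ and kills any additive constant. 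That delivers part~(ii).

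\textbf{Main obstacle.} The delicate step will be the last one: extracting from the FOC identity the precise scaling $H(a) = \alpha a$ with the same $\alpha$ that rescales $u$, and with no additive constant. A generic increasing reparametrization $\varphi \mapsto H\circ\varphi$ could in principle be absorbed by reparametrizing the family $c_{(\cdot)}$, so the argument must leverage the additive interplay of $u$ and $c'_{\varphi(\cdot)}$ in the FOC—together with property (ii), which rigidly ties the ordering of $c'_{a}$ across $a$ to the ordering of $\varphi$—to force the scale and zero of $\widehat{\varphi}$ to coincide with those implicit in $\widehat{u}$. \textit{Positivity} is essential here, since without it the FOC is only an inequality at boundary probabilities and the cost derivatives are not fully identified.
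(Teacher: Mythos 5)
There are two genuine gaps. First, your route to part~(i) via level sets of $\varphi$ does not work in general. The paper instead identifies $u$ from binary menus using the Fechnerian representation (Corollary~1 of \citet{FIS_2015}): under \textit{Positivity} and \textit{Menu Acyclicity}, $\rho({\bm x},\lbrace {\bm x},{\bm y}\rbrace)$ is a strictly increasing function of $u(x_1)-u(y_1)$, which pins down $u$ up to positive affine transformation with no reference to $\varphi$ at all. Your restriction to menus with $\varphi({\bm x})=\varphi({\bm y})=a$ fails precisely in the benchmark case the paper cares about, \textit{Selfishness as a Norm}, where $\varphi$ depends only on $x_1$: every level set is then a vertical line, all its elements share the same first coordinate, and the FIS uniqueness result applied on that sub-family yields no information about how $u$ varies. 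Even when level sets are non-degenerate in $x_1$, your patching step is wrong as stated: two affine maps that agree at a \emph{single} shared first-coordinate value need not have the same slope and intercept, so ``consistency at one shared value'' does not force $\alpha_a\equiv\alpha$ and $\beta_a\equiv\beta$.

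Second, for part~(ii) you correctly identify the delicate step but do not close it, and it cannot be closed by the means you invoke. The FOC manipulation (which the paper also performs) only yields $\widehat{c}'_{\widehat{\varphi}({\bm x})}(p)=\alpha\, c'_{\varphi({\bm x})}(p)+\gamma$, i.e., it identifies the \emph{composite} family $c_{\varphi(\cdot)}$ up to the stated affine relation; it says nothing about how the label $\varphi({\bm x})$ itself is scaled. Property~(ii) of the cost family is purely ordinal ($c'_a(p)$ decreasing in $a$) and is preserved under any strictly increasing reparametrization $H$, so it cannot force $H$ to be affine, let alone force $H(a)=\alpha a$. The paper resolves this by additionally requiring the cost functional $\mathcal{C}(a,p)$ to be affine in its first argument, so that $\widehat{c}_{\widehat{\varphi}}=\alpha c_{\varphi}+\gamma p+\delta$ can be read as $\widehat{\varphi}=\alpha\varphi$; without some such normalization of the cost family, $\varphi$ is identified only ordinally and conclusion~(ii) does not follow from the axioms. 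If you want to complete your argument, you must either import that normalization explicitly or find a substitute for it; the ``additive interplay of $u$ and $c'_{\varphi(\cdot)}$'' alone will not deliver it.
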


One of the motivations of deliberate randomization stems from the ``payoff differences'' $u(x_1) - u(y_1)$ between allocations in the menu (See, in detail, Definition \ref{fenchnerian} of a Fenchnerian utility representation in Appendix \ref{app_uniqueness}.). In addition, the personal norm utility matters. The key point is that the two utility functions have the same unit $\alpha$. Indeed, the absolute level of the cost function $c_{\varphi}$ does not matter. Thus, we are free to shift it by constants.

\begin{remark}
Generally, a menu-invariant APU \citep{FIS_2014}, characterized by \textit{Menu Acyclicity}, is not uniquely identified. Any utility function $u$ can be transformed into a cost function $c_{\bm x}$ with ${\bm x} \in X$. However, we can recover the uniqueness result by considering specific properties of the cost function.
\end{remark}

\paragraph*{\textsf{Selfishness}.}
We study a particular case in which pure selfishness is a personal norm. We can imagine simplistic moral maxims such as ``selfishness is good.'' Again, consider a simplified dictator game, $\lbrace (9, 1), (5, 5) \rbrace$. If many people predict that dictators choose the former allocation, they may not incur moral costs such as shame by choosing it from the menu.

\begin{axiom}
\label{axiom_selfishness}
(\textsf{Selfishness as a Norm}): For any ${\bm x}, {\bm y} \in X$, $x_1 > y_1$ if and only if ${\bm x} \succ_n^{\rho} {\bm y}$. 
\end{axiom}
 
We obtain the following result. The proof is in the Appendix (Appendix \ref{app_cost_selfishness}).
\begin{proposition}
\label{cost_selfishness}
Suppose that $\rho$ satisfies the axioms in Theorem \ref{representation_SP}. Then, $\rho$ satisfies Selfishness as a Norm if and only if there exists a selfish utility $u: \mathbb{R} \rightarrow \mathbb{R}$ such that $\rho$ has an additive perturbed utility form in the following way: 
	\begin{equation}
	\label{apu_sp}
	\rho(A) = \arg \max_{\rho \in \Delta(A)} \sum_{{\bm x} \in A} \Bigl( u(x_1) \rho({\bm x}) - c_{u(x_1)}(\rho({\bm x},A)) \Bigr),
	\end{equation}
where $c_{u(\cdot)}: [0,1] \rightarrow \mathbb{R} \cup \lbrace + \infty \rbrace$ is a strictly convex function that satisfies
	\begin{enumerate}
		\renewcommand{\labelenumi}{(\roman{enumi})}
			\item $C^1_{u(\cdot)}$ over $(0,1)$; and			
			\item $c'_{u(x_1)}(p) > c'_{u(y_1)}(p)$ for each $p \in (0, 1)$, if
				\begin{align*}
					u(x_1) > u(y_1).
				\end{align*}
	\end{enumerate}
\end{proposition}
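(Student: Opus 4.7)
The plan is to exploit the fact that Selfishness as a Norm forces the personal norm utility $\varphi$ in the APU(SP) representation guaranteed by Theorem \ref{representation_SP} to be a strictly monotone function of own-payoff, and hence of the selfish utility $u$. Reindexing the cost function by $u(x_1)$ rather than by $\varphi({\bm x})$ will yield exactly the form in \eqref{apu_sp}.

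For necessity, suppose Selfishness as a Norm holds. By Theorem \ref{representation_SP}, $\rho$ admits an APU(SP) representation with some pair $(u, \varphi)$. The axiom states ${\bm x} \succ_n^{\rho} {\bm y} \iff x_1 > y_1$, and Selfishness makes $u$ strictly increasing in $x_1$. Since $\varphi$ represents $\succsim_n^{\rho}$, combining these gives $\varphi({\bm x}) > \varphi({\bm y}) \iff u(x_1) > u(y_1)$ and $\varphi({\bm x}) = \varphi({\bm y})$ whenever $x_1 = y_1$. Hence $\varphi({\bm x})$ depends only on $x_1$, and there is a strictly increasing function $h: u(\mathbb{R}) \to \mathbb{R}$ with $\varphi({\bm x}) = h(u(x_1))$. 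Defining $\tilde{c}_{u(x_1)}(p) := c_{h(u(x_1))}(p) = c_{\varphi({\bm x})}(p)$ and substituting into \eqref{model_sp} produces the target form \eqref{apu_sp}. Strict convexity and $C^1$ smoothness transfer immediately since $h$ is a fixed bijection onto its image, and the APU(SP) property $c'_{\varphi({\bm x})}(p) > c'_{\varphi({\bm y})}(p)$ whenever $\varphi({\bm x}) < \varphi({\bm y})$ translates, via the strict monotonicity of $h$, into the corresponding comparative statics for $\tilde{c}'_{u(x_1)}$ versus $\tilde{c}'_{u(y_1)}$.

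For sufficiency, assume $\rho$ is represented as in \eqref{apu_sp} with cost $c_{u(\cdot)}$ satisfying (i) and (ii). Set $\varphi({\bm x}) := u(x_1)$. This function is continuous, weakly monotone, and strictly monotone with respect to fair allocations (since $u$ is selfish). Moreover, plugging $\varphi({\bm x}) = u(x_1)$ into \eqref{model_sp} reproduces the objective \eqref{apu_sp} verbatim, so $(u, \varphi)$ is a bona fide APU(SP) representation of $\rho$. By Theorem \ref{representation_SP}, $\varphi$ represents the induced personal norm ranking $\succsim_n^{\rho}$, so ${\bm x} \succ_n^{\rho} {\bm y} \iff u(x_1) > u(y_1) \iff x_1 > y_1$, which is exactly Selfishness as a Norm.

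The main obstacle is the bookkeeping in the reindexing step: one must check that the reparameterization $c_{\varphi({\bm x})} \rightsquigarrow c_{u(x_1)}$ preserves strict convexity, $C^1$ regularity on $(0,1)$, and — most delicately — the direction of the marginal-cost comparison. Since $h$ is a strictly increasing bijection between the relevant index sets, each property transfers pointwise, but one has to be explicit about the domains $u(\mathbb{R})$ and $\varphi(X)$ and verify that the order on indices is faithfully carried over by $h$. Beyond this, the argument is essentially a matter of invoking Theorem \ref{representation_SP} and unwinding Definition \ref{norm}.
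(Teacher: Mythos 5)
Your overall strategy---invoke Theorem \ref{representation_SP}, use \textsf{Selfishness as a Norm} to identify $\varphi$ with a strictly increasing transform $h(u(x_1))$, and reindex the cost---is close in spirit to the paper's, but the reindexing step does not deliver the representation as stated, and this is exactly the point you flag as ``delicate'' without resolving. Under the axiom, $\varphi({\bm x}) < \varphi({\bm y}) \Leftrightarrow u(x_1) < u(y_1)$, so property (ii) of the APU(SP) definition, namely $c'_{\varphi({\bm x})}(p) > c'_{\varphi({\bm y})}(p)$ whenever $\varphi({\bm x}) < \varphi({\bm y})$, transfers under $\tilde{c}_{u(x_1)} := c_{\varphi({\bm x})}$ to the statement that $\tilde{c}'_{u(x_1)}(p) > \tilde{c}'_{u(y_1)}(p)$ whenever $u(x_1) < u(y_1)$. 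Proposition \ref{cost_selfishness} asserts the \emph{opposite} direction: $c'_{u(x_1)}(p) > c'_{u(y_1)}(p)$ whenever $u(x_1) > u(y_1)$. A pure relabelling therefore cannot produce condition (ii) as written; your claim that the comparative statics ``translate'' into the corresponding form is where the argument breaks. The paper does not obtain (ii) by relabelling at all---it derives the inequality directly from the first-order condition, locating (via the richness of the domain) pairs $({\bm x}, A)$ and $({\bm y}, A')$ with $\rho({\bm x}, A) = \rho({\bm y}, A') = p$ and reading off $c'_{u(x_1)}(p) - c'_{u(y_1)}(p) = u(x_1) - u(y_1) > 0$ from the optimality conditions. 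Your proof needs either that FOC-based derivation or an explicit reconciliation of the two inequality directions.

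The converse direction has a second gap. After setting $\varphi({\bm x}) := u(x_1)$ you assert that $(u, \varphi)$ is a bona fide APU(SP) and then conclude that $\varphi$ represents $\succsim_n^{\rho}$. But $\succsim_n^{\rho}$ is defined behaviorally (Definition \ref{norm}) through how adjoining an allocation changes existing choice probabilities, and being a ``personal norm utility function'' \emph{means} representing that induced ranking; you cannot get this for free from matching the functional form of \eqref{model_sp}. You must prove from \eqref{apu_sp}, using the FOC and \textsf{Regularity}, that ${\bm x} \succ_n^{\rho} {\bm y}$ holds exactly when $x_1 > y_1$---i.e., that adding an allocation with higher own payoff strictly depresses the choice probabilities of incumbents, and that this is the only way such a depression arises. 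The paper carries out this verification explicitly in its necessity argument; your proposal presupposes it.
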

Proposition \ref{cost_selfishness} implies that the decision-maker chooses a selfish allocation with a high probability through the selfish norm. This result is a special case of weak APU if a utility function $u$ is purely selfish, and a cost function $c$ is independent of $u$ (See Example \ref{selfish_cost} and Corollary \ref{apu}.). 

To explore this type of stochastic behavior, we define the selfishness-based FOSD as follows: 
\begin{definition}
\label{FOSD}
(\textsf{Selfishness-Based FOSD}): For each $A \in \mathcal{A}$, $\rho({\bm x}, A) > \rho({\bm y}, A)$ if $x_1 > y_1$. 
\end{definition}
We obtain the following corollary. The proof is in the Appendix (Appendix \ref{app_fosd}).
\begin{corollary}
\label{fosd_selfishness}
Suppose that $\rho$ has an APU form in Proposition \ref{cost_selfishness}; then, $\rho$ satisfies selfishness-based FOSD. 
\end{corollary}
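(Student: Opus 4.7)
The plan is to prove Corollary~\ref{fosd_selfishness} by a direct contradiction argument driven by the first-order condition (FOC) of the APU maximization in Proposition~\ref{cost_selfishness}. Fix a menu $A$ and two allocations $\bm{x}, \bm{y} \in A$ with $x_1 > y_1$. Since $u$ is selfish (hence strictly monotone in the first coordinate), $u(x_1) > u(y_1)$, and the goal is $\rho(\bm{x}, A) > \rho(\bm{y}, A)$.

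I would first handle the interior case $\rho(\bm{x}, A), \rho(\bm{y}, A) \in (0,1)$ by appealing to the FOC displayed in Section~\ref{preview}. Writing $\lambda(A)$ for the Lagrange multiplier on $\sum_{\bm{z} \in A}\rho(\bm{z},A)=1$ and specializing $\varphi=u$ per Proposition~\ref{cost_selfishness}, the two interior FOCs subtract to give
\begin{equation*}
c'_{u(x_1)}(\rho(\bm{x}, A)) - c'_{u(y_1)}(\rho(\bm{y}, A)) = u(x_1) - u(y_1) > 0.
\end{equation*}
Suppose for contradiction that $\rho(\bm{x}, A) \leq \rho(\bm{y}, A)$. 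Strict convexity of $c_{u(x_1)}$ makes $c'_{u(x_1)}$ strictly increasing, so $c'_{u(x_1)}(\rho(\bm{x}, A)) \leq c'_{u(x_1)}(\rho(\bm{y}, A))$. Property~(ii) of the APU(SP) cost family, applied at the common argument $\rho(\bm{y},A)$ with $u(x_1) > u(y_1)$, yields $c'_{u(x_1)}(\rho(\bm{y}, A)) < c'_{u(y_1)}(\rho(\bm{y}, A))$. Chaining these two inequalities contradicts the positive difference from the FOC, forcing $\rho(\bm{x}, A) > \rho(\bm{y}, A)$.

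The main obstacle is disposing of the boundary cases. The easier side is $\rho(\bm{y}, A) = 1$: by Axiom~\ref{selfishness} (Selfishness) this would force $y_1$ to strictly exceed the first coordinate of every other allocation in $A$, contradicting the existence of $\bm{x} \in A$ with $x_1 > y_1$; hence $\rho(\bm{y}, A) < 1$. The trickier side is excluding $\rho(\bm{x}, A) = 0$. I expect to lean on the steep-cost property $\lim_{q \to 0^+} c'(q) = -\infty$ built into the term ``cost function'' in Section~\ref{preview}: at a corner the complementary-slackness FOC would demand $u(x_1) + \lambda(A) \leq c'_{u(x_1)}(0^+) = -\infty$, which is impossible. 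With both boundary possibilities ruled out, the two probabilities lie in $(0,1)$, and the interior contradiction argument delivers the strict selfishness-based FOSD inequality.
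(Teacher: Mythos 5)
Your overall strategy coincides with the paper's proof in Appendix C.1: subtract the two interior first-order conditions to obtain $c'_{u(x_1)}(\rho({\bm x},A)) - c'_{u(y_1)}(\rho({\bm y},A)) = u(x_1)-u(y_1) > 0$, then exploit monotonicity of the marginal costs. You are in fact more careful than the paper on the two points it glosses over: the paper passes from that identity to $\rho({\bm x},A) > \rho({\bm y},A)$ ``by the strict convexity of $c'_{u(\cdot)}$,'' which is not valid on its own because $c'_{u(x_1)}$ and $c'_{u(y_1)}$ are \emph{different} increasing functions, and it never mentions corner solutions. Your contradiction argument, which compares the two marginal costs at the common argument $\rho({\bm y},A)$, supplies exactly the missing step, and ruling out $\rho({\bm y},A)=1$ via Axiom \ref{selfishness} is fine.

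There is, however, one substantive issue you must confront: the inequality you extract from property (ii) is the \emph{reverse} of what Proposition \ref{cost_selfishness} literally states. Proposition \ref{cost_selfishness}(ii) says that $u(x_1) > u(y_1)$ implies $c'_{u(x_1)}(p) > c'_{u(y_1)}(p)$, whereas you use $c'_{u(x_1)}(\rho({\bm y},A)) < c'_{u(y_1)}(\rho({\bm y},A))$. Your direction is the one obtained by specializing property (ii) of Definition 1 (lower $\varphi$ means higher marginal cost) to $\varphi = u$, and it is the only direction under which your chain of inequalities closes; with the literal Proposition-\ref{cost_selfishness} direction the three inequalities are mutually consistent and no contradiction arises. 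The paper is internally inconsistent between the two statements, so you should say explicitly which version you rely on rather than citing ``property (ii)'' as if it delivered your inequality. A secondary gap: the cost family in Proposition \ref{cost_selfishness} is only required to be strictly convex and $C^1$ on $(0,1)$ --- the paper explicitly calls this a special case of \emph{weak} APU --- so the steepness condition $\lim_{q\to 0}c'(q) = -\infty$ is not available to exclude $\rho({\bm x},A)=0$. You can instead note that the corner FOC yields $c'_{u(x_1)}(0^{+}) - c'_{u(y_1)}(\rho({\bm y},A)) \geq u(x_1)-u(y_1) > 0$, which the same pointwise comparison of marginal costs contradicts; but observe that without steepness or \textit{Positivity} nothing prevents \emph{both} allocations from receiving probability zero in a larger menu, in which case the strict FOSD inequality fails and the corollary as stated needs an additional hypothesis.
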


We provide a numerical example. Consider three allocations ${\bm x} = (4, 4)$, ${\bm y} = (5, 2)$, and ${\bm z} = (6, 1)$. We consider two menus $\lbrace (4, 4), (5, 2) \rbrace$ and $\lbrace (4, 4), (5, 2) , (6, 1) \rbrace$.
\begin{example}
\label{selfish_cost}
Suppose that APU(SP) has an entropy-type cost function. Given menu $A$ with ${\bm x} \in A$,
\begin{equation}
\label{eq_Selfishness}
c_{u(x_1))}(\rho({\bm x}, A)) = \eta^{\frac{u(x_1)}{\gamma}} \rho({\bm x}, A) \log \rho({\bm x}, A),
\end{equation}
where $\eta = 2$ and $\gamma = 100$. Then, we obtain choice probabilities as follows. We verify that $\rho$ satisfies \textit{Regularity}, and that no preference reversals occur under $c_{u(\cdot)}$.

\begin{table}[hbtp]
  \caption{An Example of Stochastic Behavior when Personal Norm is Selfish}
  \label{tab:selfishness}
  \centering
  \begin{tabular}{lccc}
	\toprule
    Menus  & \multicolumn{3}{c}{Choice Probabilities} \\
    \cmidrule(lr){2-4}
      & $\rho({\bm x})$  & $\rho({\bm y})$ & $\rho({\bm z})$  \\
    \midrule
    $\lbrace (4, 4), (5, 2) \rbrace$  & $0.2753$  & $0.7247$  &  \\
    $\lbrace (4, 4), (5, 2) , (6, 1) \rbrace$  & $0.0944$  & $0.2504$  & $0.6552$ \\
     \bottomrule
  \end{tabular}
\end{table}
\end{example}
In Example \ref{selfish_cost}, as the parameter $\gamma \rightarrow + \infty$, we obtain an entropic-type cost function independent of $u$. This case corresponds to \citet{FIS_2015} (See Equation (\ref{fis_apu}).). 

Formally, we characterize the case that APU(SP) with $c_{u}$ corresponds to APU \citep{FIS_2015}. Following \citet{FIS_2014}, we consider a condition due to \citet{T_1972}:
\begin{axiom}
(\textsf{Order Independence}): For any $A, B \in \mathcal{A}$, ${\bm x}, {\bm y} \in A \setminus B$, and ${\bm z} \in B$,
\begin{align*}
\rho({\bm z}, B \cup \lbrace {\bm x} \rbrace) \leq \rho({\bm z}, B \cup \lbrace {\bm y} \rbrace) \Leftrightarrow \rho({\bm x}, A) \geq \rho({\bm y}, A).
\end{align*}
\end{axiom}
\citet{FIS_2014} show that \textit{Positivity} and \textit{Order Independence} implies \textit{Acyclitiy} (See Proposition 10 in \citet{FIS_2014}. To describe the axiom, let us introduce additional notation. We say that a finite sequence of quadruples $\lbrace (x_k, A_k), (y_k, B_k) \rbrace^{k}_{n=1}$ is \textit{admissible} if 
\begin{enumerate}
		\renewcommand{\labelenumi}{(\roman{enumi})}
			\item $x_k \in A_k$ and $y_k \in B_k$ for all $k \in \mathbb{N}$
			\item $(y_k)^{n}_{k=1}$ is a permutation of the $(x_k)^{n}_{k=1}$, and
			\item $(B_k)^{n}_{k=1}$ is a permutation of the $(A_k)^n_{k=1}$.\footnote{$y_k = x_{f(k)}$ for some permutation $f: \lbrace 1, \cdots, n \rbrace \rightarrow \lbrace 1, \cdots, n \rbrace$ and $B_k = A_{g(k)}$ for some permutation $g: \lbrace 1, \cdots, n \rbrace \rightarrow \lbrace 1, \cdots, n \rbrace$.}
	\end{enumerate}

\begin{axiom}
(\textsf{Acyclicity}):For any finite  sequence of quadruples $\lbrace (x_k, A_k), (y_k, B_k) \rbrace^{k}_{n=1}$, there exists no admissible sequence such that
$\rho({\bm x}_1, A_1) > \rho({\bm y}, B_1)$, $\rho({\bm x}_k, A_k) \geq \rho({\bm y}_k, A_k)$, and $\rho({\bm x}_n, A_n) \geq \rho({\bm y}_n, B_n)$. 
\end{axiom}
The axiom of \textit{Acyclicity} implies the following ordinal properties: 
\begin{enumerate}
		\renewcommand{\labelenumi}{(\alph{enumi})}
			\item $\rho({\bm x}, A) > \rho({\bm y}, A) \Leftrightarrow \rho({\bm x}, B) > \rho({\bm y}, B)$ for any ${\bm x}, {\bm y} \in A$ with $x_1 > y_1$;
			\item $\rho({\bm x}, A) > \rho({\bm x}, B) \Leftrightarrow \rho({\bm y}, A) > \rho({\bm y}, B)$ for any ${\bm x}, {\bm y} \in A$ with $x_1 > y_1$. 
\end{enumerate}
Thus, we obtain the following corollary. The proof is in the Appendix (Appendix \ref{app_apu}).
\begin{corollary}
\label{apu}
Suppose that $\rho$ satisfies the axioms in Theorem \ref{representation_SP}. Then, $\rho$ satisfies Positivity, Selfishness as a Norm, and Order Independence if and only if there exists a selfish utility $u: \mathbb{R} \rightarrow \mathbb{R}$ such that $\rho$ has an additive perturbed utility form in the following way: 
	\begin{equation}
	\label{apu_selfish}
	\rho(A) = \arg \max_{\rho \in \Delta(A)} \sum_{{\bm x} \in A} \Bigl( u(x_1) \rho({\bm x}) - c(\rho({\bm x},A)) \Bigr),
	\end{equation}
where $c: [0,1] \rightarrow \mathbb{R} \cup \lbrace + \infty \rbrace$ is a cost function.
\end{corollary}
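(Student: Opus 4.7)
The plan is to prove the stated equivalence by handling each direction separately, with the forward direction building on Proposition~\ref{cost_selfishness}.

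For the backward direction, suppose $\rho$ has form~(\ref{apu_selfish}) with selfish $u$ and cost $c$. Positivity follows from the steep-cost property $\lim_{q\to 0} c'(q) = -\infty$, which drives the objective to $-\infty$ whenever some allocation receives zero mass (see \citet{FIS_2015}). Order Independence is a standard property of the APU shown by \citet{FIS_2014}. For Selfishness as a Norm, the FOC $c'(\rho({\bm x}, A)) = u(x_1) + \lambda(A)$, combined with strict monotonicity of $c'$, gives $\rho({\bm x}, A) > \rho({\bm y}, A) \iff x_1 > y_1$ in any common menu, and adding an allocation with larger first coordinate strictly lowers $\lambda$ and hence the probabilities of incumbents; this recovers ${\bm x} \succ_n^\rho {\bm y} \iff x_1 > y_1$.

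For the forward direction, apply Proposition~\ref{cost_selfishness} to obtain representation~(\ref{apu_sp}) with cost family $c_{u(\cdot)}$ and selfish $u$. By Positivity every $\rho({\bm x}, A)$ is interior, so the FOC
\begin{equation*}
c'_{u(x_1)}(\rho({\bm x}, A)) = u(x_1) + \lambda(A)
\end{equation*}
holds with equality for every $({\bm x}, A) \in \mathcal{D}$. Next, use the fact (\citet{FIS_2014}, Proposition~10) that Positivity together with Order Independence implies Acyclicity, whose ordinal consequences~(a) and (b) are stated in the text. Applied to pairs of menus $A, B$ sharing the allocations ${\bm x}, {\bm y}$ and combined with the FOCs, this cross-menu consistency (together with Continuity and convexity of each $c_{u(\cdot)}$) forces the difference $c'_{u(x_1)}(p) - c'_{u(y_1)}(p)$ to be a constant in $p \in (0, 1)$.

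Integrating the constancy yields a decomposition $c_{u(x_1)}(p) = c(p) + \alpha(x_1)\, p + \beta(x_1)$, where the common $c$ inherits strict convexity, $C^1$ smoothness on $(0, 1)$, and the steep-cost property from the family $c_{u(\cdot)}$. Substituting back into~(\ref{apu_sp}), the $\beta(x_1)$ terms drop out of the arg-max and the linear terms absorb into the utility: setting $\tilde{u}(x_1) := u(x_1) - \alpha(x_1)$ yields form~(\ref{apu_selfish}), and $\tilde{u}$ is selfish by Corollary~\ref{fosd_selfishness} applied to the resulting FOC $c'(\rho({\bm x}, A)) = \tilde{u}(x_1) + \lambda(A)$. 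The main obstacle is the constancy step: extracting from Order Independence that $c'_{u(x_1)}(p) - c'_{u(y_1)}(p)$ does not depend on $p$ requires leveraging Acyclicity across a rich enough collection of menus to realize the ordinal consistency over all of $p \in (0, 1)$, and is precisely where Positivity and Continuity do the heavy lifting.
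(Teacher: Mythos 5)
Your backward direction matches the paper's: Positivity from the steepness of $c$, and Order Independence and Selfishness as a Norm read off the first-order condition. The forward direction is where you diverge, and where the gap is. The paper's own proof does something much lighter there: it verifies the two ordinal properties (a) and (b), concludes that $\rho$ satisfies \textit{Acyclicity}, and then simply sets $c_{u(\cdot)}=c$, i.e., invokes the representation theorem of \citet{FIS_2015} to collapse the item-indexed cost family into a single menu- and item-independent cost. You instead try to re-derive that collapse by hand via the decomposition $c_{u(x_1)}(p)=c(p)+\alpha(x_1)\,p+\beta(x_1)$, whose linchpin is that $c'_{u(x_1)}(p)-c'_{u(y_1)}(p)$ is constant in $p$. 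That is exactly the step you flag as ``the main obstacle,'' and you never prove it. It does not follow from the FOC alone: the FOC ties $c'_{u(x_1)}$ and $c'_{u(y_1)}$ together only at the \emph{realized} probabilities $\rho({\bm x},A)$ and $\rho({\bm y},A)$, which generally differ; to compare the two marginal costs at a common $p$ you need, for every $p\in(0,1)$, menus $A,B$ with $\rho({\bm x},A)=p=\rho({\bm y},B)$ together with control over $\lambda(A)-\lambda(B)$. That is precisely the \textit{Richness} condition (Axiom \ref{richness}) the paper has to introduce for the uniqueness argument in Appendix \ref{app_uniqueness}, and it is not among the hypotheses of Corollary \ref{apu}. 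So the constancy step is not merely unproven; it is unavailable on the stated assumptions.

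Two smaller problems sit in the same direction. First, the steepness of the common $c$ cannot be ``inherited from the family $c_{u(\cdot)}$,'' because Proposition \ref{cost_selfishness} only requires that family to be strictly convex and $C^1$, not steep. Second, the selfishness of the redefined $\tilde u=u-\alpha$ does not follow from Corollary \ref{fosd_selfishness}, which runs in the opposite direction (representation $\Rightarrow$ FOSD); you would need to argue monotonicity of $\tilde u$ in $x_1$ from Selfishness as a Norm directly. The repair is to drop the hand-made decomposition and follow the paper's route: show that Positivity and Order Independence (via the ordinal properties) yield \textit{Acyclicity}, and then cite the APU representation theorem of \citet{FIS_2015}.
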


Finally, as a benchmark case, we consider the case that $\rho$ is deterministic. We rule out \textit{positivity}. Then, we can obtain an expected utility representation where $u$ is purely selfish. Generally, we can consider the following axiom for deterministic choice behavior:
\begin{axiom}
\label{deterministic}
(\textsf{Deterministic Choice}): For any $A \in \mathcal{A}$, there exists ${\bm x} \in A$ such that $\rho({\bm x}, A) =1$. 
\end{axiom}

We obtain the following corollary. The proof is in the Appendix (Appendix \ref{app_eut}).
\begin{corollary}
\label{eut}
Suppose that $\rho$ has an APU form. Then, $\rho$ satisfies Deterministic Choice if and only if 
\begin{equation}
\rho(A) = \arg \max_{\rho \in \Delta(A)} \sum_{{\bm x} \in A} u(x_1) \rho({\bm x}).
\end{equation}
\end{corollary}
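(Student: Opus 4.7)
The plan is to prove both directions by directly comparing the maximizers, using the Selfishness axiom (which holds under an APU form by Theorem \ref{representation_SP}) as the bridge between the two representations. No heavy machinery from the earlier steps of the main theorem is needed; the argument reduces to the observation that a linear functional on a simplex is maximized at a vertex.

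For the $(\Rightarrow)$ direction, assume Deterministic Choice. For each menu $A \in \mathcal{A}$, let $\bm{x}_A$ denote the (unique) allocation with $\rho(\bm{x}_A, A) = 1$. The Selfishness axiom gives $x_{A,1} > y_1$ for every $\bm{y} \in A \setminus \{\bm{x}_A\}$, and strict monotonicity of the selfish utility $u$ then yields $u(x_{A,1}) > u(y_1)$ for all such $\bm{y}$. Since the functional $\rho \mapsto \sum_{\bm{x} \in A} u(x_1)\rho(\bm{x})$ is linear on the simplex $\Delta(A)$, its unique maximizer over $\Delta(A)$ is the Dirac mass on $\bm{x}_A$, which is exactly $\rho(A)$.

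For the $(\Leftarrow)$ direction, assume $\rho(A) = \arg\max_{\rho \in \Delta(A)} \sum_{\bm{x} \in A} u(x_1)\rho(\bm{x})$. The argmax of a linear function over $\Delta(A)$ lies in the convex hull of the vertices maximizing the coefficient $u(x_1)$. For $\rho(A)$ to be a single probability distribution, this set must reduce to one vertex, namely the Dirac mass on the allocation in $A$ with the largest first coordinate. Hence $\rho(\bm{x}_A, A) = 1$ for some $\bm{x}_A \in A$, which is Deterministic Choice.

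The only subtle point—and arguably the main obstacle, though it is essentially notational—is the handling of ties in the first coordinate within a menu. In the forward direction, Selfishness combined with Deterministic Choice already forces $\bm{x}_A$ to strictly dominate every other element of $A$ in $x_1$, so such ties cannot occur at the selected vertex. In the reverse direction, the equation $\rho(A) = \arg\max$ is understood as identifying $\rho(A)$ with a unique point distribution, which implicitly rules out tied maxima of $u(x_1)$ in $A$. Modulo this tacit domain condition, both implications are immediate from the argument above.
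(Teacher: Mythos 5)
Your proposal is correct and follows essentially the same route as the paper's own proof: both directions rest on the \emph{Selfishness} axiom identifying the degenerate choice with the $x_1$-maximizer, together with the observation that the linear expected-utility functional on $\Delta(A)$ is maximized at that vertex. Your explicit remark about ties in $u(x_1)$ is a point the paper's proof also glosses over, so you are not missing anything relative to the published argument.
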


This result is equivalent to ``deterministic'' choice behavior. Let $C$ be a choice correspondence; that is, for each $A \in \mathcal{A}$, $C(A) \subset A$ is a nonempty subset of $A$. Suppose that $u$ is selfish utility. Thus, we have 
\begin{align*}
C(A) = \arg \max_{{\bm x} \in A} u(x_1).
\end{align*}
Let $V: \Delta(X) \rightarrow \mathbb{R}$. Then, $V(\rho) = \sum_{{\bm x} \in {\rm supp}(\rho)} u(x_1) \rho({\bm x})$.

\subsection{A Testable Implication}
\label{test}
We study the testable implications of the APU(SP). In the APU(SP), we consider a case in which the utility function is selfish. We can consider various personal norms, from selfishness and altruism to inequity-aversion. If personal norms differ, the resulting stochastic behavior also differs. 

In this study, we provide simple and testable implications. We fix a doubleton $\lbrace {\bm x}, {\bm y} \rbrace$, where ${\bm x} = (4,4)$ and ${\bm y} = (5,2)$ with
\begin{align*}
\rho({\bm x}, \lbrace {\bm x}, {\bm y} \rbrace) > \rho({\bm y}, \lbrace {\bm x}, {\bm y} \rbrace),
\end{align*}
irrespective of their social preferences.\footnote{We can design such a menu irrespective of the preference type: inequity-averse or shame-averse. For example, we can collect data on $(\alpha, \beta)$ in pilot experiments.} We consider a case in which a new allocation  ${\bm z} = (6, 1)$ is added to the doubleton.

In fairness concerns, ${\bm x}$ is fairer than ${\bm y}$. Thus,
\begin{align*}
\rho({\bm x}, \lbrace {\bm x}, {\bm y}, {\bm z} \rbrace) > \rho({\bm y}, \lbrace {\bm x}, {\bm y}, {\bm z} \rbrace)
\end{align*}
holds. Conversely, for social concerns, that is, due to the trade-off between selfishness and personal norm,
\begin{align*}
\rho({\bm x}, \lbrace {\bm x}, {\bm y}, {\bm z} \rbrace) \leq \rho({\bm y}, \lbrace {\bm x}, {\bm y}, {\bm z} \rbrace) 
\end{align*}
can occur. Due to the addition of the third allocation ${\bm z}$, if the decision maker is socially-conscious, the trade-off between selfishness and altruism can occur. As a result, the ``intermidiate'' allocation ${\bm y}$ can be chosen with a higher probability. This type of behavioral pattern is known as the \textit{Compromise} effect or \textit{Extremeness Aversion}. 

In empirical studies, it is common to focus on parametric models. Here, we suggest examples of APU(SP) compatible with fairness and altruistic concerns. 

Throughout the empirical test, suppose that the observed data is explained by APU(SP), and that the utility function $u$ is selfish. Let $A = \lbrace {\bm x}, {\bm y} \rbrace$, and $B = \lbrace {\bm x}, {\bm y}, {\bm z} \rbrace$. Suppose that we observe the data of $\rho({\bm x}, \lbrace {\bm x}, {\bm y} \rbrace) > \rho({\bm y}, \lbrace {\bm x}, {\bm y} \rbrace)$. Then, we perform a one-sided test using the following hypotheses:
	\begin{align*}
	H_{0}: \rho({\bm x}, B) > \rho({\bm y}, B),
	\end{align*}
and
\begin{align*}
	H_{1}: \rho({\bm x}, B) \leq \rho({\bm y}, B).
\end{align*}
$H_{0}$ implies that the third allocation ${\bm z} = (6, 1)$ does not affect the magnitude relationship between $\rho({\bm x}, B)$ and $\rho({\bm y}, B)$; That is, the choice probability $\rho({\bm y}, A) > \rho({\bm y}, B)$ holds. In other words, if the cost function is fairness-driven, adding ${\bm z}$ to $A$ is irrelevant.

If $H_{0}$ is rejected, the observed behavior can stem from impurely altruistic behavior.   However, such a preference reversal phenomenon can occur if the cost function is based on social concerns. 

\paragraph*{\textsf{Fairness Concerns: The Case of Inequity Aversion}}
First, we consider the case in which stochastic choice behavior exhibits inequity-averse preferences. As stated above, the allocation ${\bm z} = (6,1)$ is the most selfish allocation in the menu $\lbrace {\bm x}, {\bm y}, {\bm z} \rbrace$, whereas ${\bm z}$ can also be interpreted as the most ``unfair'' allocation in the menu. Let $\succsim^{\rho}_l$ be a binary relation over $\Delta(X)$ that satisfies \textit{inequity-averse preferences}. We obtain ${\bm x} \succ^{\rho}_l {\bm y} \succ^{\rho}_l {\bm z}$. By definition, we also obtain $\rho({\bm x}, \lbrace {\bm x}, {\bm y}, {\bm z} \rbrace) > \rho({\bm y}, \lbrace {\bm x}, {\bm y}, {\bm z} \rbrace) > \rho({\bm z}, \lbrace {\bm x}, {\bm y}, {\bm z} \rbrace)$; that is, we have that
\begin{align*}
\rho({\bm x}, \lbrace {\bm x}, {\bm y}, {\bm z} \rbrace) > \rho({\bm y}, \lbrace {\bm x}, {\bm y}, {\bm z} \rbrace).
\end{align*}
Suppose that APU(SP) is represented by a quadratic-type cost function as follows:
\begin{align*}
\varphi({\bm x}) = \alpha \max \{ x_2 - x_1, 0 \} + \beta \max \{ x_1 - x_2, 0 \}
\end{align*}
for each ${\bm x} \in X$. Given menu $A$ with ${\bm x} \in A$, consider the following cost function:
\begin{equation}
\label{eq_IA}
c_{\varphi({\bm x})}(\rho({\bm x}, A)) = \eta^{\alpha \max \{ x_2 - x_1, 0 \} + \beta \max \{ x_1 - x_2, 0 \}} \rho({\bm x}, A)^2, 
\end{equation}
where $\eta = 2$, $\alpha =1$ and $\beta = 1$. Then, we have $\rho({\bm x}, \lbrace {\bm x}, {\bm y} \rbrace) \approx 0.8333$ and $\rho({\bm y}, \lbrace {\bm x}, {\bm y} \rbrace ) \approx 0.1667$. Thus, we infer that the decision maker prefers allocation ${\bm x}$ over allocation ${\bm y}$. In addition, because the third allocation ${\bm z}$ is selfish, it does not affect fairness-driven behavior. Furthermore, we have $\rho({\bm x}, \lbrace {\bm x}, {\bm y}, {\bm z} \rbrace) \approx 0.7838$ and $\rho({\bm y}, \lbrace {\bm x}, {\bm y}, {\bm z} \rbrace ) \approx 0.1605$. Thus, we have $\rho({\bm x}, \lbrace {\bm x}, {\bm y}, {\bm z} \rbrace) > \rho({\bm y}, \lbrace {\bm x}, {\bm y}, {\bm z} \rbrace)$.

\paragraph*{\textsf{Social Concerns: The Case of Impure Altruism}}
Next, we consider the case in which stochastic choice behavior exhibits altruism. On the one hand, because the added allocation ${\bm z} = (6, 1)$ is more selfish than ${\bm x} = (4,4)$ and ${\bm y} = (5,2)$, this allocation ${\bm z}$ achieves a trade-off between personal ranking (selfishness) and personal norms (altruism). On the other hand, by adding the allocation ${\bm z} = (6, 1)$ to the doubleton, the decision maker may be incentivized to engage in selfish behavior. This type of motivation can also be interpreted as the temptation to act selfishly. Therefore, we can observe the following:
\begin{align*}
\rho({\bm x}, \lbrace {\bm x}, {\bm y}, {\bm z} \rbrace) \leq \rho({\bm y}, \lbrace {\bm x}, {\bm y}, {\bm z} \rbrace),
\end{align*}
which is a different stochastic behavior stemming from inequity-averse preferences. For example, suppose that APU(SP) is represented by an entropy-type cost function as follows:
\begin{align*}
\varphi({\bm x}) = (x_1+1)(x_2+1)
\end{align*}
for each ${\bm x} \in X$. Given menu $A$ with ${\bm x} \in A$,
\begin{equation}
\label{eq_Shame}
c_{\varphi({\bm x})}(\rho({\bm x}, A)) = \eta^{\frac{\gamma}{\varphi({\bm x})}} \rho({\bm x}, A) \log \rho({\bm x}, A),
\end{equation}
where $\eta = 2$ and $\gamma = 100$. Then, we have $\rho({\bm x}, \lbrace {\bm x}, {\bm y} \rbrace) \approx 0.5651$ and $\rho({\bm y}, \lbrace {\bm x}, {\bm y} \rbrace ) \approx 0.4379$. In addition, because the third allocation ${\bm z}$ is more selfish than the other two allocations, it affects shame-mitigating behavior. Moreover, we have $\rho({\bm x}, \lbrace {\bm x}, {\bm y}, {\bm z} \rbrace) \approx 0.2901$ and $\rho({\bm y}, \lbrace {\bm x}, {\bm y}, {\bm z} \rbrace ) \approx 0.3466$. Hence, preference reversal can occur, that is, $\rho({\bm x}, \lbrace {\bm x}, {\bm y}, {\bm z} \rbrace) < \rho({\bm y}, \lbrace {\bm x}, {\bm y}, {\bm z} \rbrace)$. 

\subsection{Discussion through Experimental Data}
We discuss the explanatory power of APU(SP). We focus on experimental data reported in \citet{MZ_2018}, who experimentally investigate preferences for randomization in social decision-making. In their version of dictator games, dictators determine choice probabilities between two allocations. \citet{MZ_2018} observed that a substantial proportion of subjects exhibit preferences for randomization. 

According to individual-level analysis in \citet{MZ_2018}, more than half of the subjects' behavior can be explained by ex-post fairness, ex-ante fairness, or the combination. However, a substantial proportion of subjects is incompatible with ex-post and ex-ante fairness preferences. These choice patterns reflect different tastes, such as self-interest, equality and efficiency motives underlying preference for randomization.

\begin{remark}
\label{subjective_randomization}
(\textsf{Objective Randomization vs. Subjective Randomization}): Rigorously,  \textit{subjective randomization}, i.e., deliberate randomization like a ``mental'' coin toss in one's mind is different from \textit{objective randomization} like a randomization device. 

On the one hand, the choice of $p \in [0, 1]$ over two allocations can be interpreted as objective randomization. On the other hand, subjective randomization may not be directly observable because analysts cannot observe the ``mental coin'' directly. Here, there is no room to distinguish objective randomization from subjective randomization.

However, the data suggests that subjects are willing to pay for an ``objective'' coin toss.  Here, we suppose that preferences for randomization in the experimental data can lead to deliberately stochastic behavior.
\end{remark}

We consider APU(SP) to complement incompatible choice patterns with a combination of ex-post and ex-ante fairness. We present a numerical example. We study the three cases of cost functions in deliberate randomization: (i) inequity aversion, (ii) altruism, and (iii) selfishness. 

\begin{table}[hbtp]
  \caption{Experimental Data \citep{MZ_2018} and APU(SP)}
  \label{tab:data_mz}
  \centering
  \begin{tabular}{cccccc}
	\toprule
	  & \multicolumn{2}{c}{Miao \& Zhong (2018)}  & Selfishness & Inequity Aversion & Shame \\
    \cmidrule(lr){2-6}
   Menus & Interior Prob. & Mean Prob. & \multicolumn{3}{c}{Choice Prob. of the Left Allocation} \\
    \midrule
    $\lbrace (20, 0), (0, 20) \rbrace$ & 0.10 & 0.67 & 1  & 0.5  & 0.676  \\
    $\lbrace (16, 4), (4, 16) \rbrace$ & 0.26 & 0.74 & 1 & 0.5  & 0.995 \\
    $\lbrace (0, 20), (0, 0) \rbrace$  & 0.18 & 0.41 & 0.5   & 0 & 0.5 \\
    $\lbrace (4, 16), (0, 0) \rbrace$ & 0.12 & 0.52 & 0.9815 &  0.0007 & 0.5 \\
    $\lbrace (0, 20), (0, 10) \rbrace$ & 0.25 & 0.49 & 1 & 0.001 & 0.622 \\
    $\lbrace (4, 16), (2, 8) \rbrace$ & 0.27 & 0.56 & 1 & 0.156 & 0.6516 \\
    $\lbrace (20, 0), (0, 0) \rbrace$ & 0 & 0 & 0.5 & 0 & 0.5 \\
    $\lbrace (16, 4), (8, 2) \rbrace$ & 0.11 & 0.76 & 0.8766 & 0.163 & 0.7735 \\
    $\lbrace (20, 0), (10, 0) \rbrace$ & 0.07 & 0.74 & 0.9995  & 0.001& 0.629 \\
    $\lbrace (16, 4), (0, 0) \rbrace$ & 0.03 & 0.55 & 0.9999  & 0.0022 & 0.5 \\
    $\lbrace (10, 10), (0, 0) \rbrace$ & 0.03 & 0.66 & 1 & 1 & 0.5 \\
    \bottomrule
  \end{tabular}
\end{table}
In Table \ref{tab:data_mz}, to begin with, the data on \citet{MZ_2018} represents the percentage of choosing interior probability ($p \in (0,1)$) for the left allocation in menus, and the mean probability for interior probabilities implementing the left allocation, respectively (See Table 2 of Appendix 1 in \citet{MZ_2018}.). Next, Selfishness (Column 4) captures the choice probability of choosing the left allocation when the cost function is given in Equation (\ref{eq_Selfishness}). Moreover, Inequity Aversion (Column 5) represents the choice probability of choosing the left allocation when the cost function is given in Equation (\ref{eq_IA}). Finally, Shame (Column 6) represents the choice probability of choosing the left allocation when the cost function is given in Equation (\ref{eq_Shame}).

\citet{MZ_2018} report that a substantial proportion of subjects is inconsistent with ex-post or ex-ante fairness concerns. For example, in the two menus $\lbrace (20, 0), (0, 20) \rbrace$ and $\lbrace (16, 4), (4, 16) \rbrace$, compared with other menus, we may guess a substantial proportion of subjects chooses interior probabilites if ex-ante fairness matters. However, only 10 \% and 26 \% of subjects choose interior porbabilities, respectively. Rather, observing behavior in other menus, different motivations can matter. Indeed, beyond ex-ante fairness concerns, various motives can provoke deliberately stochastic behavior in social contexts. APU(SP) can capture these behavioral patterns. Such subjects can be explained by APU(SP).

\section{Literature Review}
\label{literature}
We review the literature on deliberate randomization in stochastic choices, inequality aversion, and shame/guilt.

\paragraph*{\textsf{Preference for Randomization and Deliberate Randomization}.}
Our model is related to the additive perturbed utility (APU) in \citet{FIS_2014, FIS_2015} because we apply the concept of APU to social contexts. We apply a weaker version of the \textit{acyclic} condition in the stochastic choice, that is, \textit{Menu Acyclicity}. We consider a specific model of a menu-invariant APU in which the uniqueness result is recovered through additional conditions. Our model allows for violations of \textit{Luce's IIA}, a well-known property of stochastic choice. In Corollary \ref{apu}, deliberate stochastic behavior can correspond to Weak APU, in which the personal norm is purely selfish.

\citet{CDOR_2019} characterize a deliberately stochastic choice model in risky choices inspired by \citet{M_1985}, and provide a new acyclic condition with \textit{first-order stochastic dominance} (FOSD) called Rational Mixing. \citet{HY_2021} verifies that social preferences like inequity aversion might deviate from FOSD. 

To begin with, in terms of the axiom of \textit{Regularity} (Axiom \ref{regularity}), APU(SP) satisfies this axiom, whereas \citet{CDOR_2019} do not. Their notion of deliberate randomization can occur through the violations of \textit{Regularity}. Hence, APU(SP) does not nest the deliberate stochastic model of  \citet{CDOR_2019}. 

Next, we consider the property of \textit{stochastic dominance}. In Corollary \ref{fosd_selfishness}, deliberately stochastic behavior satisfies the selfishness-based FOSD (Definition \ref{FOSD}); thus, it is a special case of \citet{CDOR_2019}. Indeed, \citet{M_1985}'s model can be interpreted as APU, where the decision maker maximizes expected utility with a convex function \citep{S_2017}. The non-linearity can capture non-expected utility in risky environments. This flavor is related to \citet{CDOR_2019}. Our setup is not based on risk preferences, but we can consider preferences over lotteries of allocations. We then consider both objective and subjective randomization together. 

In sum, our approach can have a flavor of Random Utility (henceforth, RU). In this paper, we have two multiple utility functions. One is a selfish utility function. The other is a personal norm function. APU(SP) captures the trade-off. \citet{FIS_2015} consider a relationship between random utility and APU. In this sense, APU(SP) has a more specific structure of the trade-off between selfishness and personal norms than RU does when we apply social preferences to RU.

\paragraph*{\textsf{Inequity Aversion}.}
\citet{S_2013} developed an inequity-averse model under risk, a convex combination of \textit{ex-ante} and \textit{ex-post} fairness, called the \textit{expected inequity-averse} model (EIA).\footnote{See, for the axiomatic study of \textit{ex-post} fairness, \citet{R_2010}, and for the experimental evidence on ex-ante fairness, \citet{BLO_2013} and \citet{MZ_2018}.} In this study, we do not consider ex-ante fairness because we have not considered the stochastic behavior stemming from ex-ante fairness. \citet{HY_2021} complements this aspect by assuming that the utility function is based on \citet{S_2013}, and that objective randomization can lead to deliberately stochastic behavior.

\citet{C_2023} axiomatically studies a model of \textit{guilt moderation}. In the \citet{C_2023}'s utility representation, both ``envy'' and ``guilt,'' stemming from payoff differences between dictator's payoffs and recipient's payoffs, are evaluated by Choquet expected utility \citep{S_1989}. Such a convex preference can also lead to deliberately stochastic behavior. 

Our approach differs from the previous studies. The primitive of the model differs in the related studies. In addition, the related studies modify or extend a selfish utility function to capture \textit{ex-ante}, \textit{ex-post} fairness, or their extensions. We do not modify a utility function; Rather, we consider the case that the randomization cost function in APU is based on inequity-averse preferences (See Subsection \ref{test}.).

\paragraph*{\textsf{Shame/Guilt}.}
\citet{DS_2012} is a seminal axiomatic study on \textit{shame}.\footnote{In related studies, see, for example, \citet{S_2015_a} and \citet{H_2021}.} They investigate preferences over menus and introduce a preference for commitment (preferring smaller menus) to capture the shame of acting selfishly. To identify shame, they study an \textit{asymmetric} two-stage decision problem; In the first stage of choosing a menu, other agents, such as recipients, can not publicly observe the decision-maker's choice. However, their behavior is publicly observed in the second stage of the choice from the menu chosen in the first stage.

We take a stochastic function as a primitive and construct a stochastic choice model considering randomization costs. Our approach differs in the preferences-over-menus framework. Even though the publicity of choice behavior is the same as the related studies, we try to capture shame-driven behavior by avoiding choosing the most selfish allocation with certainty, which leads to deliberate randomization. For example, the level of shame can differ in the timing of decision-making. Some image-conscious decision-makers can feel shame once they take selfish actions. Others can feel shame after they take selfish actions repeatedly. In both cases, shame can lead to deliberate randomization.

\citet{NR_2023} axiomatically characterize a model of ``guilt.'' They consider three-stage decision problems and investigate preferences over menus of menus. For example, they consider a choice between $\lbrace \lbrace {\bm x} \rbrace, \lbrace {\bm y} \rbrace \rbrace$ and $\lbrace \lbrace {\bm x}, {\bm y} \rbrace \rbrace$. The menu of menus $\lbrace \lbrace {\bm x} \rbrace, \lbrace {\bm y} \rbrace \rbrace$ is similar to the menu $\lbrace {\bm x}, {\bm y} \rbrace$, if the timing of decision-making does not matter. In reality, people may have time lags in each stage of decision problems. If so, the decision maker is tempted to be selfish when he/she faces the menu $\lbrace {\bm x}, {\bm y} \rbrace$, even though his/her personal norm can support that he/she chooses a more altruistic allocation. Such a self-control cost can be interpreted as ``guilt.'' To avoid such guilt, At the first stage of decision problems, the decision maker can commit to the singleton of the selfish allocation. 

Our approach is different from \citet{NR_2023}. First, for simplicity, we assume that the decision-maker's personal ranking is selfish, unlike the foundation of \citet{NR_2023}. They consider such selfishness stemming from ``guilt-avoidance'' at the choice of menus of menus. Next, the role of personal norms is considered differently. In APU(SP), the trade-off between selfishness and personal norms can lead to stochastic behavior. On the other hand, in \citet{NR_2023}, the existence of personal norms can lead to preferences for commitment to selfish allocations. 

\section{Conclusion}
\label{conclusion}
In this study, we examined stochastic prosocial behavior stemming from impure altruism, such as shame and inequity-aversion. By applying APU \citep{FIS_2014, FIS_2015} in social contexts, we characterized APU(SP), where the utility function is purely selfish, and the randomization cost function captures psychological costs such as shame and guilt-avoidance. 

There has been limited research on social preferences in the stochastic choice literature. Thus, this study contributes to a better understanding of human decision-making in social contexts. Our results suggest that fairness and social concerns can lead to deliberate stochastic behavior. It is also worth noting that different motivations for deliberate randomization in social contexts have different behavioral patterns in choice probabilities. The psychological costs of randomization characterize these behavioral patterns. Thus, axiomatization in this study facilitates the identification of motivations behind stochastic prosocial behavior.

This study has some limitations, which offer opportunities for future research. Theoretically, we do not consider ex-ante fairness \citep{FL_2012, S_2013}. Therefore, future studies should consider this aspect. In particular, our model is within the framework of perturbed utility models. We can consider a more general framework \citep{CDOR_2019}, beyond \textit{Positivity} and \textit{Regularity}.

Moreover, we can consider an axiomatic foundation for the distinction between objective and subjective randomization. Objective randomization can occur due to convex preferences in decision-making under risk/uncertainty. To capture subjective randomization, we can consider implementation costs as in APU. 

Additionally, it may be beneficial to conduct experimental studies on social preferences and stochastic or risky choices. The personal norms of decision-makers are diverse. Some may be based on social norms, conventions, and cultures, while others may differ. This difference can be captured and identified through behavior.

\appendix
\section{Proof of Theorem \ref{representation_SP}}
\label{proof_representation_SP}

\subsection{Sufficiency Part}
\label{proof_sufficiency}

\subsubsection{Step 1}
\label{sufficiency_step1}
In Step 1, we show Lemma \ref{apu_menu}. We say that a function $c: [0,1] \rightarrow \mathbb{R} \cup \lbrace \infty \rbrace$ is a \textit{cost function} if it is strictly convex, and differentiable, i.e., $C^1$ over $(0, 1)$. We say that $\rho$ has a \textit{menu-invariant APU} if there is a pair $(u, (c_{\bm x})_{{\bm x} \in X})$ where $u: X \rightarrow \mathbb{R}$ is a utility function, and $c_{\bm x} : [0, 1] \rightarrow \mathbb{R} \cup \lbrace + \infty \rbrace$ is a cost function for each ${\bm x} \in X$, such that $\rho$ has a form
\begin{align*}
\rho(A) = \arg \max_{\rho \in \Delta(A)} \sum_{{\bm x} \in A} \Bigl( u({\bm x}) \rho({\bm x}, A) - c_{\bm x}(\rho({\bm x}, A)) \Bigr).
\end{align*}

We show the following lemma, following \citet{FIS_2014}. Suppose that $\rho$ satisfies \textit{positivity}.
\begin{lemma}
\label{apu_menu}
The following statements are equivalent: 
\begin{enumerate}
\renewcommand{\labelenumi}{(\roman{enumi})}
	\item $\rho$ is a menu-invariant APU.
	\item There exists $\lambda: \mathcal{A} \rightarrow \mathbb{R}$ such that $\lambda(A) > \lambda(B)$ if $\rho({\bm x}, A) > \rho({\bm x}, B)$, and $\lambda(A) = \lambda(B)$ if $\rho({\bm x}, A) = \rho({\bm x}, B)$.
	\item $\rho$ satisfies Menu Acyclicity.
\end{enumerate}
\end{lemma}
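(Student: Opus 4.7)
The plan is to prove the cycle of implications (i) $\Rightarrow$ (ii) $\Rightarrow$ (iii) $\Rightarrow$ (ii) $\Rightarrow$ (i). Because \textit{Positivity} places every $\rho({\bm x}, A)$ in the open interval $(0,1)$, the APU optimization is always interior and its first-order condition is the equality
$$c'_{\bm x}(\rho({\bm x}, A)) = u({\bm x}) + \lambda(A),$$
with $\lambda(A)$ the multiplier on the simplex constraint. The implication (i) $\Rightarrow$ (ii) is then immediate, since strict convexity of $c_{\bm x}$ makes $c'_{\bm x}$ strictly increasing, so $\rho({\bm x}, A) \gtreqless \rho({\bm x}, B)$ forces $\lambda(A) \gtreqless \lambda(B)$. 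For (ii) $\Rightarrow$ (iii), I would telescope along the cyclic sequence $\{({\bm x}_k, A_k)\}_{k=1}^n$ from the hypothesis of \textit{Menu Acyclicity}: the assumptions give $\lambda(A_1) > \lambda(A_2) \geq \cdots \geq \lambda(A_n)$, hence $\lambda(A_n) < \lambda(A_1)$, which is incompatible with $\rho({\bm x}_n, A_n) \geq \rho({\bm x}_n, A_1)$ by the other direction of (ii).

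The main obstacle is (iii) $\Rightarrow$ (ii), which I would handle by invoking Lemma \ref{apu_farkas}. The idea is to set up the finite system of linear (in)equalities in the unknowns $(\lambda(A))_{A \in \mathcal{A}}$ consisting of one constraint $\lambda(A) > \lambda(B)$ per triple $(A, B, {\bm x})$ with ${\bm x} \in A \cap B$ and $\rho({\bm x}, A) > \rho({\bm x}, B)$, together with the two-sided inequalities $\lambda(A) \geq \lambda(B)$ and $\lambda(B) \geq \lambda(A)$ for each equality $\rho({\bm x}, A) = \rho({\bm x}, B)$. The Farkas-style alternative says this system is solvable unless some nonnegative combination of its rows derives a self-contradiction of the form $\lambda(A_1) > \lambda(A_1)$. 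Reading such a combination as a sequence of paired menus linked by common alternatives produces exactly a cycle forbidden by \textit{Menu Acyclicity}, so a solution $\lambda$ must exist. The delicate point is ensuring equality constraints are not effectively turned into strict ones in the wrong direction, which the two-sided encoding handles.

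Finally, for (ii) $\Rightarrow$ (i), I would fix any utility function $u: X \to \mathbb{R}$ (for instance $u \equiv 0$) and, for each ${\bm x} \in X$, define
$$c'_{\bm x}(\rho({\bm x}, A)) := u({\bm x}) + \lambda(A)$$
on the finite set $P_{\bm x} := \{\rho({\bm x}, A) : A \in \mathcal{A},\, {\bm x} \in A\} \subset (0,1)$. Property (ii) ensures that the right-hand side depends only on $\rho({\bm x}, A)$, making this map well-defined, and also strictly increasing on $P_{\bm x}$. I would then extend it to a strictly increasing continuous function $g_{\bm x}:(0,1)\to\mathbb{R}$ by piecewise linear interpolation between consecutive points of $P_{\bm x}$ and any strictly monotone extension past its extremes. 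Setting $c_{\bm x}(q) := \int_{1/2}^q g_{\bm x}(s)\,ds$ yields a $C^1$, strictly convex cost function; by construction the APU's first-order condition reproduces $\rho({\bm x}, A)$ on every menu, and sufficiency of the FOC follows from the strict concavity of the APU objective in $\rho$.
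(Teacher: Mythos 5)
Your proposal is correct and follows essentially the same route as the paper: the FOC-plus-strict-convexity argument for (i)~$\Rightarrow$~(ii), the Farkas-type Lemma \ref{apu_farkas} applied to the revealed menu ranking for the equivalence with \textit{Menu Acyclicity}, and the construction of each $c_{\bm x}$ as the integral of a monotone function $g_{\bm x}$ interpolating the values $\lambda(A)$ at the observed probabilities for (ii)~$\Rightarrow$~(i). Your explicit telescoping argument for (ii)~$\Rightarrow$~(iii) and the piecewise-linear interpolation are slightly more detailed renderings of steps the paper delegates to the cited lemma, but they are not a different method.
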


\begin{proof}
We show that $(i) \Rightarrow (ii)$, $(ii) \Rightarrow (i)$, and $(ii) \Leftrightarrow (iii)$.
\paragraph*{\textsf{$(i) \Rightarrow (ii)$}:}
Suppose that $\rho$ has a menu-invariant APU. Take arbitrary menus $A, B \in \mathcal{A}$ with ${\bm x} \in A \cap B$. 

Let $\lambda: \mathcal{A} \rightarrow \mathbb{R}$ be the Lagrangean multiplier associated with each menu. The first-order condition (FOC) for $\rho$ is as follows.
\begin{equation}
\label{eq_FOC}
u(x_1) - c'_{{\bm x}}(\rho({\bm x}, A)) + \lambda(A)
\begin{cases}
    \geq 0 & \text{if $\rho({\bm x}, A) = 1$,} \\
    = 0   & \text{if $\rho({\bm x}, A) \in (0, 1)$,} \\
    \leq 0 & \text{if $\rho({\bm x}, A) = 0$.}
  \end{cases}
\end{equation}
By the FOC for $\rho$ and the strict convexity of $c_{\bm x}$, 
\begin{align*}
\rho({\bm x}, A) > \rho({\bm x}, B) &\Leftrightarrow c'_{\bm x} (\rho({\bm x}, A)) > c'_{\bm x} (\rho({\bm x}, B)) \\
	&\Rightarrow \lambda(A) > \lambda(B).
\end{align*}
Hence, we have $\lambda(A) > \lambda(B)$ if $\rho({\bm x}, A) > \rho({\bm x}, B)$. And, in the same way, we have $\lambda(A) = \lambda(B)$ if $\rho({\bm x}, A) = \rho({\bm x}, B) \in (0, 1)$. 

\paragraph*{\textsf{$(ii) \Rightarrow (i)$}:}
Suppose that $\lambda: \mathcal{A} \rightarrow \mathbb{R}$ satisfies $\lambda(A) > \lambda(B)$ if $\rho({\bm z}, A) > \rho({\bm z}, B)$, and $\lambda(A) = \lambda(B)$ if $\rho({\bm z}, A) = \rho({\bm z}, B)$, for all $A, B \in \mathcal{A}$ and ${\bm z} \in X$.

We construct $(c_{{\bm x}})_{{\bm x} \in X}$ using $\lambda$. Without loss of generality, assume that $\lambda(\cdot) \in (0, 1)$. Take an arbitrary allocation ${\bm z} \in X$. Let
\begin{align*}
	\overline{w}({\bm z}) :=
	\begin{cases}
	1 & \text{if $\rho({\bm z}, A) > 0$, for all $A \ni {\bm z}$,} \\
	\min \lbrace \lambda(A) \mid A \in \mathcal{A}, \rho({\bm z}, A) = 1 \rbrace & \text{otherwise.}
	\end{cases}
\end{align*}
And, let
\begin{align*}
	\underline{w}({\bm z}) :=
	\begin{cases}
	0 & \text{if $\rho({\bm z}, A) > 0$, for all $A \ni {\bm z}$,} \\
	\max \lbrace \lambda(A) \mid A \in \mathcal{A}, \rho({\bm z}, A) = 1 \rbrace & \text{otherwise.}
	\end{cases}
\end{align*}
We construct a function $g_{\bm z} : [0, 1] \rightarrow \mathbb{R}$ for each ${\bm z} \in X$ such that
\begin{enumerate}
\renewcommand{\labelenumi}{(\roman{enumi})}
	\item $g_{\bm z}(0) = \underline{w}({\bm z})$;
	\item $g_{\bm z}(\rho({\bm z}, A)) = \lambda(A)$ if $\rho({\bm z}, A) \in (0, 1)$; and
	\item $g_{\bm z}(1) = \overline{w}({\bm z})$.
\end{enumerate}
Define a cost function $c_{\bm z}: [0,1] \rightarrow \mathbb{R}$ by
\begin{align*}
	c_{\bm z}(q) := \int_0^q g_{\bm z}(p) dp.
\end{align*}
Take ${\bm z} \in A \cap B$ for some $A,  B \in \mathcal{A}$. Suppose $\rho({\bm z}, A), \rho({\bm z}, B) \in (0, 1)$ with $\rho({\bm z}, A) > \rho({\bm z}, B)$. Then, we have
\begin{align*}
\lambda(A) > \lambda (B) &\Leftrightarrow g_{\bm z}(\rho({\bm z}, A)) > g_{\bm z}(\rho({\bm z}, B)) \\
&\Leftrightarrow \int_0^{\rho({\bm z}, A)} g_{\bm z}(p) dp > \int_0^{\rho({\bm z}, B)} g_{\bm z}(p) dp \\
&\Leftrightarrow u({\bm z}) - g_{\bm z}(\rho({\bm z}, A)) > u({\bm z}) - g_{\bm z}(\rho({\bm z}, B)).
\end{align*}
The last inequality is the FOC for $\rho$ in menu-invariant APU. 

We show $C^1_{{\bm z}}$ over $(0, 1) ({\bm z} \in X)$. By definition, we obtain
\begin{align*}
	\frac{d}{dq} \int_0^q g_{\bm z}(p) dp = g_{\bm z}(q).
\end{align*}

By definition and the assumption of $\lambda(\cdot) \in (0, 1)$, the strict convexity of $c_{\bm z}$ follows from $g'_{\bm z}(q) > 0$ for all $q \in (0, 1)$.

\paragraph*{\textsf{$(ii) \Leftrightarrow (iii)$}:}
We apply the following lemma in \citet{FIS_2014}.
\begin{lemma}
\label{apu_farkas}
Let $\mathcal{X}$ be a finite set. Suppose that $\succ, \sim \subset \mathcal{X} \times \mathcal{X}$ such that the $\succ$ is asymmetric and $\sim$ is symmetric. Then, the following conditions are equivalent.
\begin{enumerate}
\renewcommand{\labelenumi}{(\roman{enumi})}
	\item There is no cycle meaning that there is no sequence 
		\begin{align*}
			(\chi_1, \chi_2), (\chi_2, \chi_3), \cdots, (\chi_{m-1}, \chi_m), (\chi_m, \chi_1)
		\end{align*}	
	in $\succ \cup \sim$ where at least one of them belongs to $\succ$.
	\item There exists a function $v: \mathcal{X} \rightarrow \mathbb{R}$ such that $v(\chi) > v(\chi')$ if $\chi \succ \chi'$, and  $v(\chi) = v(\chi')$ if $\chi \sim \chi'$.
\end{enumerate}
\end{lemma}

Define a binary relation $\succsim_m$ over $\mathcal{A}$, i.e., a menu ranking, as follows. We say that a menu $A$ is \textit{revealed weaker} than a menu $B$, i.e., $A \succ_m B$, if $\rho({\bm x}, A) > \rho({\bm x}, B)$ for some ${\bm x} \in A \cap B$. In the similar way, we say that a menu $A$ is \textit{revealed tied} with a menu $B$, i.e., $A \sim_m B$, if $\rho({\bm x}, A) = \rho({\bm x}, B) \in (0, 1)$ for some ${\bm x} \in A \cap B$. Let $\succsim_m := \succ_m \cup \sim_m$. 

Let $\succsim$ be a menu ranking $\succsim_m$ with $\mathcal{X} = \mathcal{A}$. Then, condition (i) can be rewritten as follows. There does not a sequence of menus $\lbrace A_k \rbrace_{k=1}^m$ such that 
\begin{align*}
A_1 \succsim_m A_2 \succsim_m \cdots \succsim_m A_m \succ_m A_1.
\end{align*}
By the definition of $\succsim_m$, this condition satisfies \textit{Menu Acyclicity}. 

Let $v = \lambda$. Then, the desired result is obtained.
\end{proof}

\subsubsection{Step 2}
\label{sufficiency_step2}
In Step 2, we show Lemma \ref{representation_norm}, which describes the characterization of personal norm rankings $\succsim_n^\rho$. 
\begin{lemma}
\label{representation_norm}
$\succsim^\rho_n$ on $X$ is represented by a function $\varphi: X \rightarrow \mathbb{R}$ that is continuous and monotone with respect to fair allocations, provided that
	\begin{align*}
	{\bm x} \succsim_n^{\rho} {\bm y} \ \text{if} \ \text{only} \ \text{if} \ \varphi({\bm x}) \geq \varphi({\bm y}).
	\end{align*}
\end{lemma}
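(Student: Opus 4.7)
The plan is to follow the three claims mentioned in the proof outline of Theorem \ref{representation_SP}: first establish that $\succsim_n^\rho$ admits a utility representation $\varphi$, then upgrade $\varphi$ to a continuous function, and finally verify monotonicity on the fair diagonal.

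For the first step, I would exploit the \textit{Personal Norm Ranking} axiom. Asymmetry (i) and negative transitivity (ii) of $\succ_n^\rho$ imply that $\succsim_n^\rho$, defined by ${\bm x} \succsim_n^\rho {\bm y}$ iff ${\bm y} \nsucc_n^\rho {\bm x}$, is complete and transitive, i.e., a weak order on $X$. Since $X \subseteq \mathbb{R}^2$ is compact, it is separable, so a Debreu-type representation theorem (or the argument in \citet{K_1988} for weak orders on separable sets) yields a function $\varphi: X \to \mathbb{R}$ with ${\bm x} \succsim_n^\rho {\bm y}$ iff $\varphi({\bm x}) \geq \varphi({\bm y})$. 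This is the content of Claim \ref{norm_utility}.

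For continuity (Claim \ref{norm_continuity}), I would show that both upper and lower contour sets of $\succsim_n^\rho$ are closed and then apply the standard result that a continuous weak order on a separable, connected space has a continuous utility representation (so, after composing with a monotone transformation if necessary, $\varphi$ can be chosen continuous). The key reduction is to show: if ${\bm x}^n \to {\bm x}$ and ${\bm x}^n \succ_n^\rho {\bm y}$ for all $n$, then ${\bm x} \succsim_n^\rho {\bm y}$. By definition of $\succ_n^\rho$, for each $n$ there exists a menu $A_n \ni {\bm y}$ with ${\bm x}^n \notin A_n$ such that $\rho({\bm y}, A_n) > \rho({\bm y}, A_n \cup \{{\bm x}^n\})$. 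Passing to the limit, the \textit{Continuity} axiom applied to $\rho({\bm y}, A_n \cup \{{\bm x}^n\})$ along this sequence (possibly after extracting a subsequence along which the menus stabilize or using a diagonal menu construction) yields the weak inequality for the limiting point ${\bm x}$, which gives ${\bm y} \nsucc_n^\rho {\bm x}$, i.e., ${\bm x} \succsim_n^\rho {\bm y}$. The symmetric argument handles the other contour set.

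Finally, for monotonicity with respect to fair allocations, the conclusion is essentially immediate from axiom (iii) of \textit{Personal Norm Ranking}: for $x > y$, $(x,x) \succ_n^\rho (y,y)$ translates directly via the representation to $\varphi(x,x) > \varphi(y,y)$.

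I expect the main obstacle to lie in the continuity step: carefully passing from continuity of the stochastic choice rule $\rho$, which is phrased for fixed-cardinality menus with convergent component sequences, to closedness of contour sets of the derived relation $\succsim_n^\rho$, whose definition quantifies existentially over menus $A$. The cleanest route is to fix a convenient witness menu (for example, a doubleton $A = \{{\bm y}\}$ augmented appropriately, or a pair $\{{\bm y}, {\bm w}\}$ with ${\bm w}$ far from the limit point) so that the inequalities involved depend continuously on the varying allocation; \textit{Continuity} of $\rho$ then transfers to closedness of the contour set without needing any menu-indexed limiting argument.
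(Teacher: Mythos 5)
Your three-claim decomposition (representation, continuity, monotonicity on the diagonal) is exactly the paper's, and two of the three steps are fine. For the representation, the paper works through a Birkhoff order-separability construction rather than quoting Debreu directly, but on a compact subset of $\mathbb{R}^2$ these are interchangeable; for monotonicity, the paper builds a certainty-equivalent $\alpha({\bm x})$ with ${\bm x} \sim_n^\rho (\alpha({\bm x}),\alpha({\bm x}))$, whereas you read the diagonal ranking off the representation directly, which suffices for the lemma as stated.

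The genuine gap is in your continuity step, and it is precisely where you suspected. From ${\bm x}^k \succ_n^\rho {\bm y}$ you get witness menus $A_k \ni {\bm y}$ with $\rho({\bm y}, A_k) > \rho({\bm y}, A_k \cup \lbrace {\bm x}^k \rbrace)$, and passing to the limit yields at best $\rho({\bm y}, A) \geq \rho({\bm y}, A \cup \lbrace {\bm x} \rbrace)$ for one limiting menu. But that weak inequality is already guaranteed by \emph{Regularity} (Remark \ref{menu_acyclicity_regularity}) for every menu, so it carries no information. More importantly, the target conclusion ${\bm x} \succsim_n^\rho {\bm y}$ unpacks as ${\bm y} \nsucc_n^\rho {\bm x}$, i.e., a universally quantified statement that \emph{no} menu $B \ni {\bm x}$ with ${\bm y} \notin B$ satisfies $\rho({\bm x}, B) > \rho({\bm x}, B \cup \lbrace {\bm y} \rbrace)$ --- a statement about the probability of ${\bm x}$ across all menus containing ${\bm x}$, not about the probability of ${\bm y}$ in the limiting witness menu. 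Your limit argument therefore does not connect to the conclusion; fixing a single convenient witness menu, as you propose, cannot close this quantifier mismatch. To be fair, the paper's own Claim \ref{norm_continuity} is also terse here: it asserts that the strict contour sets are open ``if $\succsim_n^\rho$ is continuous'' without deriving that continuity from Axiom \ref{continuity}, so your attempt to ground the step in the primitive is more ambitious than the paper's, but as written it does not go through.
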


To prove the lemma, we show the following three claims. 
\paragraph*{\textsf{Utility Representation}.}
First, we show that $\succsim_n^{\rho}$ is represented by $\varphi: X \rightarrow \mathbb{R}$. 
\begin{claim}
\label{norm_utility}
There exists $\varphi: X \rightarrow \mathbb{R}$ that represents $\succsim_n^{\rho}$. 
\end{claim}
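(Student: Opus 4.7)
The plan is to verify that $\succsim_n^\rho$ is a weak order on $X$ and then invoke the classical utility representation theorem (as in \citet{K_1988}). The definitional setup already gives $\succsim_n^\rho \; := \; \nsucc_n^\rho$ with the roles reversed, so the two parts of Axiom \ref{personal_norm} should translate directly into completeness and transitivity.

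First, I would check completeness. By the asymmetry of $\succ_n^\rho$ (Axiom \ref{personal_norm}(i)), for any ${\bm x}, {\bm y} \in X$ it cannot be that both ${\bm x} \succ_n^\rho {\bm y}$ and ${\bm y} \succ_n^\rho {\bm x}$; hence at least one of ${\bm x} \nsucc_n^\rho {\bm y}$ or ${\bm y} \nsucc_n^\rho {\bm x}$ holds, which immediately gives ${\bm y} \succsim_n^\rho {\bm x}$ or ${\bm x} \succsim_n^\rho {\bm y}$. Next, for transitivity, suppose ${\bm x} \succsim_n^\rho {\bm y}$ and ${\bm y} \succsim_n^\rho {\bm z}$, i.e., ${\bm y} \nsucc_n^\rho {\bm x}$ and ${\bm z} \nsucc_n^\rho {\bm y}$. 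Applying negative transitivity (Axiom \ref{personal_norm}(ii)) to the triple $({\bm z}, {\bm y}, {\bm x})$ yields ${\bm z} \nsucc_n^\rho {\bm x}$, that is, ${\bm x} \succsim_n^\rho {\bm z}$. Hence $\succsim_n^\rho$ is a weak order.

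Second, to produce a real-valued representation, I would exhibit a countable order-dense subset of $X$. Since $X \subseteq \mathbb{R}^2$ is a compact (in particular, separable) metric space, it contains a countable topologically dense subset $D$. To upgrade topological density to order-density relative to $\succsim_n^\rho$, I would lean on Continuity (Axiom \ref{continuity}): whenever ${\bm y} \succ_n^\rho {\bm x}$ is witnessed by some $A \in \mathcal{A}$ with $\rho({\bm x}, A) > \rho({\bm x}, A \cup \{{\bm y}\})$, the strict gap is preserved under small perturbations of ${\bm y}$, so one can slide to a nearby ${\bm d} \in D$ that still separates ${\bm x}$ and ${\bm y}$ in $\succsim_n^\rho$. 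Having order-density of a countable $D$, the existence of $\varphi : X \to \mathbb{R}$ with ${\bm x} \succsim_n^\rho {\bm y} \Leftrightarrow \varphi({\bm x}) \geq \varphi({\bm y})$ follows from the standard weak-order representation theorem in \citet{K_1988}.

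The main obstacle is the order-density step, because $\succ_n^\rho$ is defined only through the existence of a witnessing menu rather than as a primitive closed or open relation on $X$. The delicate point is to chain together the strict inequality in Definition \ref{norm} with Axiom \ref{continuity} so that perturbing the ``added'' allocation ${\bm y}$ to a nearby $D$-point preserves both the strict drop in $\rho({\bm x}, \cdot)$ and the feasibility of the menu construction. Once this is handled, completeness and transitivity are routine, and the representation theorem delivers $\varphi$; continuity and monotonicity of $\varphi$ are deferred to Claims \ref{norm_continuity} and the subsequent step, so here the goal is only the ordinal representation.
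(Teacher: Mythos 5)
Your overall route is essentially the paper's: verify from Axiom \ref{personal_norm}(i)--(ii) that $\succsim_n^{\rho}$ is a complete transitive preorder, exhibit a countable order-dense subset, and invoke the classical Cantor--Birkhoff representation theorem. The paper (following \citet{BM_1995}) makes the last step explicit by constructing $\varphi({\bm x}) = \sum_n 2^{-n} r({\bm z}^n,{\bm x})$ over the order-dense set plus the jump endpoints, whereas you black-box the theorem; your completeness/negative-transitivity verification is correct and in fact more explicit than the paper's, which simply asserts that $\succsim_n^{\rho}$ is a total preorder.

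The one substantive gap is exactly where you locate it, but it is worse than a matter of ``feasibility of the menu construction.'' Birkhoff order-density requires, for ${\bm y} \succ_n^{\rho} {\bm x}$, a point ${\bm d} \in D$ sandwiched as ${\bm y} \succsim_n^{\rho} {\bm d} \succsim_n^{\rho} {\bm x}$. Your sliding argument uses Axiom \ref{continuity} to preserve the strict drop $\rho({\bm x},A) > \rho({\bm x}, A \cup \lbrace {\bm d} \rbrace)$ for ${\bm d}$ near ${\bm y}$, which delivers ${\bm d} \succ_n^{\rho} {\bm x}$ --- but nothing in that argument rules out ${\bm d} \succ_n^{\rho} {\bm y}$, since topological proximity of ${\bm d}$ to ${\bm y}$ does not control their relative position in a revealed order. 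So topological density does not upgrade to order-density by this route. What your continuity observation \emph{does} establish is that the strict upper (and, perturbing the base allocation and its menu instead, the strict lower) contour sets of $\succ_n^{\rho}$ are open; since $X \subseteq \mathbb{R}^2$ is second countable, you can then conclude via Debreu's (or Rader's) theorem rather than via order-density, which would close the argument cleanly. To be fair, the paper does not fully close this step either: its proof \emph{supposes} Birkhoff order separability of $(X,\succsim_n^{\rho})$ and only proves the converse implication (representation implies order separability) in Fact \ref{Birkhoff_orde-dense}, so your proposal is no less rigorous on this point, and arguably more honest about where the work lies.
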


The proof of Claim \ref{norm_utility} is standard in decision theory. We apply the result (Theorem 1.4.8) in \citet{BM_1995}. 
\paragraph*{($\Rightarrow$):}
The so-called \textit{Birkhoff's order separability} is necessary and sufficient for the existence of an \textit{order isomorphism}; That is, for any ${\bm x}, {\bm y} \in X$, if ${\bm x} \succsim_n^{\rho} {\bm y}$, then $\varphi({\bm x}) \geq \varphi({\bm y})$. Such a function $\varphi$ is called an \textit{order homomorphism}. Birkhoff's order separability in our setting is as follows.
\begin{AXIOM}
(\textsf{Birkhoff's Order Separability}): There exists a countable subset $Z \subset X$ such that for any ${\bm x}, {\bm y} \in X \setminus Z$ there exists ${\bm z} \in Z$ with ${\bm x} \succ_n^{\rho} {\bm z} \succ_n^{\rho} {\bm y}$. 
\end{AXIOM}

Remember that $X \subseteq \mathbb{R}^2$ is a \textit{compact} set of allocations. Suppose that $X$ has a countable order-dense subset $Z'$ in the sense of Birkhoff. Let $Z''$ be the set of \textit{end point} of all the \textit{jumps} of $X$. Let $({\bm x}, {\bm y})$ be the open interval denoted by 
\begin{align*}
({\bm x}, {\bm y}) := \lbrace {\bm z} \in X \mid {\bm x} \prec_n^{\rho} {\bm z} \prec_n^{\rho} {\bm y}\rbrace. 
\end{align*}
We say that the open interval $({\bm x}, {\bm y})$ with ${\bm x}, {\bm y} \in X$ is called a \textit{jump} with end points ${\bm x}$ and ${\bm y}$, if it is empty.

The set $Z''$ is \textit{countable}. By the axiom of \textit{Personal Norm Ranking}, $\succsim_n^{\rho}$ is a \textit{total preorder}; That is, it is (i) reflexive, (ii) transitive, and (iii) connected. The binary relation $\succsim_n^{\rho}$ of personal norm rankings satisfies the three conditions. Hence, $(X, \succsim_n^{\rho})$ is a totally preordered set. There are countably many jumps in a totally preordered set $(X, \succsim_n^{\rho})$ \citep{BM_1995}. 

Let $Z := Z' \cup Z'' = \lbrace {\bm z}^1, {\bm z}^2, \cdots \rbrace$. Define a function $r: X \times X \rightarrow \lbrace 0, 1 \rbrace$ by
\begin{align*}
r({\bm x}, {\bm y}) =
\begin{cases}
    1 & \text{if} \ {\bm y} \succ_n^{\rho} {\bm x} \\
    0 & \text{otherwise.}
  \end{cases}
\end{align*}
Then, define a real-valued function $\varphi: X \rightarrow \mathbb{R}$ by
\begin{align*}
\varphi({\bm x}) := \sum_{n=1}^{\infty} 2^{-n} r({\bm z}^n, {\bm x}).
\end{align*}

We show that $\varphi$ is \textit{order embedding}; That is, for any ${\bm x}, {\bm y} \in X$, ${\bm x} \succsim_n^{\rho} {\bm y}$ if and only if $\varphi({\bm x}) \geq \varphi({\bm y})$. Take arbitrary ${\bm x}, {\bm y} \in X$. 

Suppose that ${\bm x} \succsim_n^{\rho} {\bm y}$. Since $(X, \succsim_n^{\rho})$ is a totally preordered set, there exists a natural number $n$ such that ${\bm y} \succ_n^{\rho} {\bm z}^n$. Then, since $\succsim_n^{\rho}$ is transitive, we have ${\bm x} \succ_n^{\rho} {\bm z}^n$. Hence, by definition, we obtain $\varphi({\bm x}) \geq \varphi({\bm y})$. 

Suppose that ${\bm x} \succ_n^{\rho} {\bm y}$. We show $\varphi({\bm x}) > \varphi({\bm y})$. To prove it, we consider the two cases. First, the open interval $({\bm y}, {\bm x})$ is a jump. Second, the open interval $({\bm y}, {\bm x})$ is not a jump.

Consider the first case. Then, we have ${\bm y} \in Z'' \subset Z$. There exists a natural number $k$ such that ${\bm y} = {\bm z}^k$, but $\neg({\bm y} \succ_n^{\rho} {\bm z}^k)$, i.e., ${\bm y} \nsucc_n^{\rho} {\bm z}^k$. By the definition of $\succsim_n^{\rho}$, we have ${\bm z}^k \succsim_n^{\rho} {\bm y}$. By the above argument, we have $\varphi({\bm z}^k) \geq \varphi({\bm y})$. Since ${\bm x} \succ_n^{\rho} {\bm y}$, ${\bm x} \neq {\bm y}$. Hence, ${\bm x} \neq {\bm z}^k$. If ${\bm z}^k \succsim_n^{\rho} {\bm x}$, we have $\varphi({\bm z}^k) \geq \varphi({\bm x})$. By the definition of $\varphi$, it is a contradiction. Thus, ${\bm x} \succ_n^{\rho} {\bm z}^k$. We obtain $\varphi({\bm x}) > \varphi( {\bm z}^k) \geq \varphi({\bm y})$. Therefore, $\varphi({\bm x}) > \varphi({\bm y})$.

Consider the second case. There exists ${\bm \nu} \in X$ such that ${\bm x} \succ_n^{\rho} {\bm \nu}$ and ${\bm \nu} \succ_n^{\rho} {\bm y}$. Suppose that ${\bm y} \in Z$. Then, we obtain $\varphi({\bm x} > \varphi({\bm y})$. Suppose that ${\bm y} \notin Z$. If ${\bm \nu} \in Z$, then there exists a natural number $j$ such that ${\bm \nu} = {\bm z}^j$, ${\bm x} \succ_n^{\rho} {\bm z}^j$, and ${\bm y} \nsucc_n^{\rho} {\bm z}^j$. Then, by the above argument, we have $\varphi({\bm x}) > \varphi({\bm y})$. Suppose ${\bm \nu} \notin Z$. Then, by the Birkhoff's order separability, there exists ${\bm x}^m \in Z'$ such that ${\bm x} \succ_n^{\rho} {\bm z}^m \succ_n^{\rho} {\bm y}$. Hence, we obtain ${\bm x} \succ_n^{\rho} {\bm z}^m$ and ${\bm y} \nsucc_n^{\rho} {\bm z}^m$. By the above argument, we have $\varphi({\bm x}) > \varphi({\bm y})$.

\paragraph*{($\Leftarrow$):}
Suppose that there exists $\varphi: X \rightarrow \mathbb{R}$. Take an arbitrary pair of rational numbers $r, r' \in \mathbb{Q}$ with $r < r'$. Let
\begin{align*}
A_{r, r'} := \lbrace {\bm x} \in X \mid r < \varphi({\bm x}) < r' \rbrace.
\end{align*}
If the set $A_{r, r'}$ is non-empty, we can choose an element ${\bm a}= (r, r') \in A_{r,r'}$. We construct 
\begin{align*}
A := \cup_{r,r'} \lbrace {\bm a}_{r, r'} \rbrace,
\end{align*}
which is the union of the sets $\lbrace {\bm a}_{r, r'} \rbrace$ with $A_{r, r'} \neq \lbrace \varnothing \rbrace$. Then, the set $A$ is countable. 

Let $K$ be the set of end points of all jumps in $X$. Suppose that the open interval $({\bm x}, {\bm y})$ is a jump in $X$. Then, the interval $(\varphi({\bm x}), \varphi({\bm y}))$ in $\mathbb{R}$ contains a rational number $r_{{\bm x}, {\bm y}}$. Hence, there is an injection from the set of all jumps in $X$ into the countable set of rational numbers. It follows that the set $K$ is countable. 

We have the following fact.
\begin{fact}
\label{Birkhoff_orde-dense}
The set $Z := A \cup K$ is a Birkhoff's order-dense subset. 
\end{fact}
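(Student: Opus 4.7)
The plan is to verify the two defining properties of a Birkhoff order-dense subset, namely countability of $Z$ and the strict interleaving property, exploiting the real-valued representation $\varphi$ that is assumed to exist in this direction of the claim.

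For countability, the set $A = \bigcup_{r,r'} \lbrace {\bm a}_{r,r'} \rbrace$ is countable because the indexing runs over the countable collection of pairs $(r, r') \in \mathbb{Q}^2$ with $r < r'$, and at most one representative is chosen for each such pair. The set $K$ was just observed to be countable via the injection ${\bm x}, {\bm y}) \mapsto r_{{\bm x}, {\bm y}}$ from jumps into $\mathbb{Q}$. Hence $Z = A \cup K$ is countable as a finite union of countable sets.

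For the density property, I would fix ${\bm x}, {\bm y} \in X \setminus Z$ with ${\bm x} \succ_n^{\rho} {\bm y}$ and split into two cases. If the open interval $({\bm y}, {\bm x}) = \lbrace {\bm w} \in X \mid {\bm y} \prec_n^{\rho} {\bm w} \prec_n^{\rho} {\bm x} \rbrace$ is empty, then it is by definition a jump with endpoints ${\bm x}$ and ${\bm y}$, so both endpoints lie in $K \subseteq Z$, contradicting ${\bm x}, {\bm y} \in X \setminus Z$. Otherwise, pick any ${\bm w} \in ({\bm y}, {\bm x})$. Since $\varphi$ represents $\succsim_n^{\rho}$, we have $\varphi({\bm y}) < \varphi({\bm w}) < \varphi({\bm x})$, and by density of $\mathbb{Q}$ I can choose rationals $r < r'$ with $\varphi({\bm y}) < r < \varphi({\bm w}) < r' < \varphi({\bm x})$. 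Then $A_{r,r'}$ contains ${\bm w}$ and is non-empty, so by construction a representative ${\bm a}_{r,r'} \in A \subseteq Z$ was selected, satisfying $r < \varphi({\bm a}_{r,r'}) < r'$. Consequently $\varphi({\bm y}) < \varphi({\bm a}_{r,r'}) < \varphi({\bm x})$, and because $\varphi$ represents the order, ${\bm y} \prec_n^{\rho} {\bm a}_{r,r'} \prec_n^{\rho} {\bm x}$, giving the required element of $Z$ strictly between ${\bm y}$ and ${\bm x}$.

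The main obstacle is in fact quite limited: the bulk of the argument is a standard construction, but one must be careful with the jump case, which is precisely why $K$ is adjoined to $A$ in the definition of $Z$. Without $K$, pairs straddling a jump could not be separated by any element of $A$, since $A_{r,r'}$ would in general still be populated by elements on only one side of the jump. Adding the endpoints of jumps eliminates this possibility by forcing such pairs out of $X \setminus Z$ entirely.
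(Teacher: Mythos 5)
Your proof is correct and follows essentially the same route as the paper's: rule out the jump case because its endpoints would lie in $K \subseteq Z$, then pick an intermediate element ${\bm \nu}$ (your ${\bm w}$), sandwich it between rationals using the order-embedding $\varphi$, and use the selected representative of the resulting nonempty $A_{r,r'}$ as the separating element. The only differences are presentational — you make the countability of $Z$ and the jump-case contradiction explicit, which the paper handles in the surrounding text.
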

\begin{proof}
Observe that if there exist ${\bm x}, {\bm y} \in X \setminus Z$ with ${\bm y} \succ_n^{\rho}$, then the open interval $({\bm x}, {\bm y})$ is not a jump. Hence, there exists ${\bm \nu} \in X$ such that ${\bm y} \succ_n^{\rho} {\bm \nu} \succ_n^{\rho} {\bm x}$. Since the function $\varphi$ is order embedding, there exist rational numbers $r, r' \in \mathbb{Q}$ such that
\begin{align*}
\varphi({\bm y}) > r > \varphi({\bm \nu}) > r' > \varphi({\bm x}).
\end{align*}
This implies that the set $A_{r,r'}$ is non-empty. Moreover, we have ${\bm y} \succ_n^{\rho} {\bm a}_{r,r'} \succ_n^{\rho} {\bm x}$. Therefore, $X$ is order separable in the sense of Birkhoff; That is, it is shown that there exists a Birkhoff order-dense subset.
\end{proof}

\paragraph*{\textsf{Continuity}.}
Next, we show that $\varphi: X \rightarrow \mathbb{R}$ is continuous. The proof of Claim \ref{norm_continuity} is based on the procedure of Claim \ref{norm_monotonicity}. We show that the defined $\varphi$ is continuous. 
\begin{claim}
\label{norm_continuity}
$\varphi: X \rightarrow \mathbb{R}$ is continuous. 
\end{claim}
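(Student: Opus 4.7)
\textbf{Proof proposal for Claim \ref{norm_continuity}.} My plan is to obtain continuity of $\varphi$ by first upgrading the behavioral \textit{Continuity} axiom on $\rho$ into topological continuity of the preorder $\succsim_n^\rho$, and then invoking Debreu's utility representation theorem on the separable metric space $X$. Concretely, I will show that every upper and lower contour set of $\succsim_n^\rho$ is closed in $X \subseteq \mathbb{R}^{2}$, so that a continuous representation exists; since \textit{any} two representations of the same total preorder differ by a strictly increasing transformation, I may replace the $\varphi$ constructed in Claim \ref{norm_utility} by this continuous one without affecting the preceding arguments.

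First, I would fix ${\bm x} \in X$ and verify that the upper contour set $U({\bm x}) := \{ {\bm y} \in X : {\bm y} \succsim_n^\rho {\bm x} \} = \{ {\bm y} \in X : {\bm x} \nsucc_n^\rho {\bm y} \}$ is closed. Take a sequence $\{{\bm y}_k\} \subseteq U({\bm x})$ with ${\bm y}_k \to {\bm y}$ and assume, for contradiction, that ${\bm x} \succ_n^\rho {\bm y}$. By Definition \ref{norm}, there exists a finite menu $B = \{{\bm y}, {\bm z}_1, \ldots, {\bm z}_\ell\}$ with ${\bm x} \notin B$ such that $\rho({\bm y}, B) > \rho({\bm y}, B \cup \{{\bm x}\})$. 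Define the perturbed menus $B_k := \{{\bm y}_k, {\bm z}_1, \ldots, {\bm z}_\ell\}$; for $k$ large enough, ${\bm y}_k$ is distinct from each ${\bm z}_j$ and from ${\bm x}$, so $B_k$ has the same cardinality as $B$ and ${\bm x} \notin B_k$. Applying \textit{Continuity} (with the trivial constant sequence on $\{{\bm z}_1,\dots,{\bm z}_\ell\}$ and, separately, also on ${\bm x}$) yields $\rho({\bm y}_k, B_k) \to \rho({\bm y}, B)$ and $\rho({\bm y}_k, B_k \cup \{{\bm x}\}) \to \rho({\bm y}, B \cup \{{\bm x}\})$, so for large $k$ we have $\rho({\bm y}_k, B_k) > \rho({\bm y}_k, B_k \cup \{{\bm x}\})$, i.e.\ ${\bm x} \succ_n^\rho {\bm y}_k$, contradicting ${\bm y}_k \in U({\bm x})$. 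A symmetric argument (with the roles of ${\bm x}$ and ${\bm y}$ swapped in the definition of $\succ_n^\rho$) shows that the lower contour set $L({\bm x}) := \{ {\bm y} : {\bm x} \succsim_n^\rho {\bm y} \}$ is closed.

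Having established that $\succsim_n^\rho$ is a total preorder (from asymmetry and negative transitivity in the axiom of \textit{Personal Norm Ranking}) with closed contour sets on the separable metric space $X$, Debreu's representation theorem supplies a continuous function that represents $\succsim_n^\rho$; I take this function to be $\varphi$. The main obstacle is the first step: because $\succ_n^\rho$ is defined by existential quantification over menus, converting pointwise continuity of $\rho$ into continuity of the preorder requires perturbing the witnessing menu $B$ by substituting ${\bm y}_k$ for ${\bm y}$ while holding the other elements (including ${\bm x}$) fixed. The finiteness of $B$ and the disjointness ${\bm x} \notin B$ are what guarantee that the perturbed menus remain admissible for all sufficiently large $k$, which is precisely the configuration the \textit{Continuity} axiom can handle.
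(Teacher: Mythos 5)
Your proposal is correct, and it takes a genuinely different (and in one respect more complete) route than the paper. The paper's own proof first normalizes $\varphi$ on the diagonal via certainty equivalents, setting $\varphi(a,a)=a$ (this leans on the construction in Claim \ref{norm_monotonicity}), writes $\varphi^{-1}((a,b))$ as the intersection of the strict upper contour set of $(a,a)$ and the strict lower contour set of $(b,b)$, and then simply asserts that these sets are open ``if $\succsim_n^{\rho}$ is continuous'' --- it never derives that topological continuity of the induced preorder from the behavioral \textit{Continuity} axiom on $\rho$. Your perturbation argument supplies exactly that missing step: because $\succ_n^{\rho}$ is witnessed by a single finite menu $B$ with ${\bm y}\in B$ and ${\bm x}\notin B$, you can substitute ${\bm y}_k$ for ${\bm y}$ (holding the other elements and ${\bm x}$ fixed), check that for large $k$ the perturbed menus retain the same cardinality and still exclude ${\bm x}$, and pass to the limit in the strict inequality $\rho({\bm y}_k,B_k)>\rho({\bm y}_k,B_k\cup\{{\bm x}\})$; asymmetry rules out ${\bm x}={\bm y}$, so the configuration is always admissible for Axiom \ref{continuity}. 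Having closed contour sets for a total preorder on the compact (hence second countable) set $X\subseteq\mathbb{R}^2$, Debreu's theorem gives a continuous representation, and your remark that all representations of the same total preorder are ordinally equivalent justifies swapping it in for the $\varphi$ of Claim \ref{norm_utility}, since the later steps (in particular the cost construction in Step 4) use $\varphi$ only through its ordinal content. What the paper's route buys is an explicit, canonical $\varphi$ (the certainty equivalent on fair allocations), which it reuses in Claim \ref{norm_monotonicity}; what your route buys is a self-contained proof that the preorder is actually continuous, which is the substantive content of the claim and is left implicit in the published argument.
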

\begin{proof}
It suffices to show that $\varphi^{-1}((a, b))$ is open for all $a, b \in \mathbb{R}$, where $(a, b) $ is an open interval. Without loss of generality, for any $a \in \mathbb{R}$, let $\varphi(a, a) = \alpha(a) = a$. And, 
\begin{align*}
\varphi^{-1}((a, b)) = \varphi^{-1}((a, \infty) \cap (-\infty, b)) = \varphi^{-1}((a, \infty)) \cap \varphi^{-1}((-\infty, b)).
\end{align*}
We have $\varphi(a, a) = a$, so
\begin{align*}
\varphi^{-1}((a, b)) = \varphi^{-1}((\varphi(a, a), \infty)) = \lbrace {\bm x} \in X_+ \vert {\bm x} \succ_n^{\rho} (a, a) \rbrace.
\end{align*}

The set $\lbrace {\bm x} \in X_+ \vert {\bm x} \succ_n^{\rho} (a, a) \rbrace$ is open. The strict upper contour set of $(a, a)$ is open if $\succsim_n^{\rho}$ is continuous. In the same way, the strict lower contour set of $(b, b)$ is also open. $\varphi^{-1}((a, b))$ is, therefore, open since it is the intersection of two open sets. 
\end{proof}

\paragraph*{\textsf{Monotonicity with respect to Fair Allocations}.}
Finally, we show that $\varphi \rightarrow \mathbb{R}$ is \textit{monotone with respect to fair allocations}. Remember that we say that $\varphi: X \rightarrow \mathbb{R}$ is \textit{monotone with respect to fair allocations} if for any ${\bm x}, {\bm y} \in X_+$ (or $X_-$), ${\bm x} \geq {\bm y} \Rightarrow \varphi({\bm x}) \geq \varphi({\bm y})$. 

\begin{claim}
\label{norm_monotonicity}
$\varphi: X \rightarrow \mathbb{R}$ is monotone with respect to fair allocations. 
\end{claim}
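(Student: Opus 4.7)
The plan is to derive Claim~\ref{norm_monotonicity} as an immediate corollary of axiom (iii) of \textit{Personal Norm Ranking} together with the order-embedding property of $\varphi$ established in Claim~\ref{norm_utility}. The heavy lifting (constructing a numerical representation of $\succsim_n^{\rho}$, verifying asymmetry, negative transitivity, and Birkhoff order separability) has already been carried out in the preceding claims, so here essentially nothing remains except transporting a strict preference across the representation.

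Concretely, I would proceed in two short steps. First, fix any $x, y \in \mathbb{R}$ with $x > y$ such that $(x,x), (y,y) \in X$. By axiom (iii) of Personal Norm Ranking, we immediately have $(x,x) \succ_n^{\rho} (y,y)$. Second, by Claim~\ref{norm_utility}, the function $\varphi$ represents $\succsim_n^{\rho}$, i.e., ${\bm a} \succsim_n^{\rho} {\bm b} \Leftrightarrow \varphi({\bm a}) \geq \varphi({\bm b})$, and by asymmetry (axiom (i)) the strict part $\succ_n^{\rho}$ corresponds precisely to strict inequality of $\varphi$. Thus $(x,x) \succ_n^{\rho} (y,y)$ yields $\varphi(x,x) > \varphi(y,y)$, which is exactly the desired monotonicity of $\varphi$ along the diagonal of fair allocations.

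There is no substantive obstacle to this claim; the only potential subtlety is a bookkeeping one, namely to confirm that the order-embedding obtained in Claim~\ref{norm_utility} is strict whenever $\succ_n^{\rho}$ is strict. This is handled by noting that if one had $\varphi(x,x) = \varphi(y,y)$, then by the representation one would obtain $(y,y) \succsim_n^{\rho} (x,x)$, contradicting the asymmetry of $\succ_n^{\rho}$ established from axiom (i). Hence the representation transports strict preference to strict inequality, completing the proof. If the weak variant stated in the reminder (on $X_+$ or $X_-$) is intended, the same argument applies with the non-strict version of axiom (iii) and weak inequality of $\varphi$.
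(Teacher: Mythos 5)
Your two-step argument is correct as far as it goes: axiom (iii) of \textit{Personal Norm Ranking} gives $(x,x) \succ_n^{\rho} (y,y)$ for $x>y$, and the order-embedding property of $\varphi$ from Claim \ref{norm_utility} (which does transport strict preference to strict inequality) yields $\varphi(x,x) > \varphi(y,y)$. If ``monotone with respect to fair allocations'' meant only monotonicity along the diagonal, this would be a complete and much shorter proof than the paper's.

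However, the definition the paper states immediately before Claim \ref{norm_monotonicity} is stronger: for any ${\bm x}, {\bm y} \in X_+$ (or $X_-$), ${\bm x} \geq {\bm y} \Rightarrow \varphi({\bm x}) \geq \varphi({\bm y})$, where ${\bm x}$ and ${\bm y}$ are arbitrary (not necessarily fair) allocations compared componentwise. Your argument says nothing about how $\varphi$ orders two non-diagonal allocations, and your closing hedge --- that ``the same argument applies'' to the weak variant on $X_+$ or $X_-$ --- does not hold, because axiom (iii) only constrains $\succ_n^{\rho}$ on pairs of the form $((x,x),(y,y))$. This is precisely the gap the paper's proof is built to close: for each ${\bm x} \in X_+$ it constructs a fair-allocation equivalent $\alpha({\bm x})$ with ${\bm x} \sim_n^{\rho} (\alpha({\bm x}), \alpha({\bm x}))$ (using closedness of the contour sets $A^-$ and $A^+$ and connectedness of $\mathbb{R}_+$), sets $\varphi({\bm x}) = \alpha({\bm x})$, and then pulls the diagonal monotonicity of axiom (iii), together with a weak monotonicity of $\succsim_n^{\rho}$, back to arbitrary comparable allocations. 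To repair your proof you would need this calibration step (or some other way to relate $\varphi({\bm x})$ for non-fair ${\bm x}$ to values of $\varphi$ on the diagonal); transporting axiom (iii) through the representation alone cannot reach allocations off the diagonal.
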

\begin{proof}
Consider the case of $X_+$. Take arbitrary ${\bm x} \in X_+$. By the definition of $\succ_n^{\rho}$, ${\bm x} \succsim_n^{\rho} (0, 0)$. Then there exists $\overline{\alpha} \in \mathbb{R}_+$ such that $(\overline{\alpha}, \overline{\alpha}) \succsim_n^{\rho} {\bm x}$. 

We show the following: For any ${\bm x} \in X_+$, there exists $\alpha({\bm x}) \in \mathbb{R}_+$ such that ${\bm x} \sim_n^{\rho} (\alpha({\bm x}), \alpha({\bm x}))$. Let $A^{-} := \lbrace \alpha \in \mathbb{R} \vert {\bm x} \succsim_n^{\rho} (\alpha, \alpha) \rbrace$ and $A^{+} := \lbrace \alpha \in \mathbb{R} \vert (\alpha, \alpha) \succsim_n^{\rho}  {\bm x} \rbrace$. By the weak monotonicity of $\succsim_n^{\rho}$ in definition, we have $A^{-} \neq \varnothing$ and $A^{+} \neq \varnothing$. Since $\succsim_n^{\rho}$ is a closed preorder, both $A^{-}$ and $A^{+}$ are closed sets. By the connectedness of $\mathbb{R}_+$, $A^{-} \cap A^{+} \neq \varnothing$. $A^{-} \cap A^{+}$ is a singleton set. 

Define $\alpha({\bm x})$ to be the element of $A^{-} \cap A^{+}$. Now, let us define $\varphi: X \rightarrow \mathbb{R}$ by $\varphi({\bm x}) = \alpha({\bm x})$ for each ${\bm x} \in X_+$. For any ${\bm x}, {\bm y} \in X_+$, 
\begin{align*}
{\bm x} \succsim_n^{\rho} {\bm y} &\Leftrightarrow (\alpha({\bm x}), \alpha({\bm x})) \succsim_n^{\rho} (\alpha({\bm y}), \alpha({\bm y})) \\
&\Leftrightarrow \alpha({\bm x}) \geq \alpha({\bm y}) \\
&\Leftrightarrow \varphi({\bm x}) \geq \varphi({\bm y}).
\end{align*}
Hence, for any ${\bm x}, {\bm y} \in X_+$ if ${\bm x} \geq {\bm y}$, then $\varphi({\bm x}) \geq \varphi({\bm y})$. In the same way, we can show the case of $X_-$, so we skip the part.
\end{proof}

\subsubsection{Step 3}
\label{sufficiency_step3}
In this step, we show that $u: \mathbb{R} \rightarrow \mathbb{R}$ is selfish and continuous (Lemma \ref{selfish_utility}).

\begin{lemma}
\label{selfish_utility}
There exists a continuous function $u: \mathbb{R} \rightarrow \mathbb{R}$ by $u(x_1)$ for each ${\bm x} \in X$.
\end{lemma}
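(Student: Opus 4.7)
The plan is to produce $u: \mathbb{R} \to \mathbb{R}$ by exploiting the gauge freedom of the menu-invariant APU obtained in Step 1, and to secure both selfishness and continuity separately.

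From Lemma \ref{apu_menu}, $\rho$ already admits a menu-invariant APU representation with some $u^{\ast}: X \to \mathbb{R}$ and cost functions $(c^{\ast}_{\bm x})_{{\bm x} \in X}$. A crucial observation is that the APU objective is invariant under gauge shifts: for any function $g: X \to \mathbb{R}$, replacing $u^{\ast}({\bm x})$ by $u^{\ast}({\bm x}) + g({\bm x})$ and $c^{\ast}_{\bm x}(q)$ by $c^{\ast}_{\bm x}(q) + g({\bm x}) q$ leaves the sum
\begin{equation*}
\sum_{{\bm x} \in A} \bigl( u^{\ast}({\bm x}) \rho({\bm x}, A) - c^{\ast}_{\bm x}(\rho({\bm x}, A)) \bigr)
\end{equation*}
unchanged, since the linear correction cancels inside the maximization. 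Adding a linear-in-$q$ term also preserves strict convexity and $C^1$-regularity of the cost functions. Thus I have the freedom to redefine the utility as any function depending only on $x_1$ and to absorb the $x_2$-dependence of $u^{\ast}$ into the cost functions.

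Using this freedom, I set $\tilde{u}({\bm x}) := u(x_1)$ for a function $u:\mathbb{R} \to \mathbb{R}$ to be chosen, and $\tilde{c}_{{\bm x}}(q) := c^{\ast}_{\bm x}(q) + (u(x_1) - u^{\ast}({\bm x})) q$. Selfishness of $u$ is then imposed by Axiom \ref{selfishness}: whenever some menu $A$ admits $\rho({\bm x}, A) = 1$, the Selfishness axiom forces $x_1 > y_1$ for all other ${\bm y} \in A$, which combined with the FOC of the APU prevents any anti-monotone choice of $u$ on $\mathbb{R}$ and implies that the revealed private ranking on $\mathbb{R}$ agrees with the natural order $>$. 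Choosing $u$ strictly increasing (for concreteness, one may take $u(x_1) = x_1$, or any continuous strictly increasing function consistent with the Fenchnerian scale used later for uniqueness in Proposition \ref{uniqueness_SC}) yields a selfish utility function in the sense of the paper.

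Continuity of $u$ I would obtain by invoking Axiom \ref{continuity}. Given any sequence $x_1^n \to x_1$, fix a reference second coordinate $x_2^0$ and some ${\bm y} \in X$ with $y_1 < \inf_n x_1^n$; then $\rho\bigl((x_1^n, x_2^0), \{(x_1^n, x_2^0), {\bm y}\}\bigr) \to \rho\bigl((x_1, x_2^0), \{(x_1, x_2^0), {\bm y}\}\bigr)$ by Axiom \ref{continuity}, and the FOC together with the continuity of $c'_{\bm x}$ on $(0,1)$ and the continuity of $\lambda$ forces $u(x_1^n) \to u(x_1)$. Alternatively, selecting $u$ to be the identity short-circuits the argument entirely. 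The main obstacle is the careful interplay between Selfishness --- which only restricts \emph{deterministic} choices and may be vacuous when all allocations are randomized --- and the FOC characterization at interior choice probabilities; this is resolved by exploiting the gauge freedom to align $u$ with the order on $\mathbb{R}$, and then letting the cost-function adjustments encode all residual $x_2$-dependence, to be further characterized in Step 4 via the personal norm ranking $\succsim_n^{\rho}$.
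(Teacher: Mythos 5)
Your proof is correct but takes a genuinely different route from the paper's. The paper proves Claim \ref{selifish_utility} by contradiction: it supposes $u$ depends on $x_2$ (taking $u({\bm x})=x_1+x_2$ as the representative counterexample), deduces $\rho({\bm x},A)<\rho({\bm y},A)$ for a suitable pair, pushes this to a deterministic choice $\rho({\bm y},A)=1$, and then invokes Axiom \ref{selfishness} to get a contradiction with $x_1>y_1$; continuity is then shown separately (Claim \ref{selfish_utility_continuity}) by checking that preimages of open intervals under $u$ are open, via Axiom \ref{continuity}. You instead exploit the gauge (non-identification) freedom of the menu-invariant APU from Lemma \ref{apu_menu} --- replacing $u^{\ast}({\bm x})$ by $u(x_1)$ and absorbing the difference into a linear-in-$q$ correction of $c^{\ast}_{\bm x}$, which preserves the argmax, strict convexity, and $C^1$-regularity --- so that existence of a continuous selfish $u$ becomes essentially definitional. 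This is legitimate, and it is in fact the honest reading of the situation: the paper itself concedes in the remark following Proposition \ref{uniqueness_SC} that a menu-invariant APU does not identify $u$, and its own contradiction argument is fragile precisely where you say it is (the ``without loss of generality, assume $\rho({\bm y},A)=1$'' step is not innocuous, since Axiom \ref{selfishness} is vacuous when no deterministic choice is ever observed). What your route buys is a shorter, gap-free Step 3 at the price of deferring all substantive content to Step 4, where one must verify that the gauge-adjusted costs can be organized as $c_{\varphi(\cdot)}$ with the monotone-marginal-cost property; what the paper's route attempts to buy is the claim that selfishness of $u$ is behaviorally forced rather than chosen, which is only true on the (possibly empty) region where choices are deterministic. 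Your continuity argument via Axiom \ref{continuity} and the FOC is somewhat loose (continuity of $\lambda$ is asserted, not proved), but choosing $u$ to be the identity, as you note, renders that step unnecessary.
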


We show the two claims. 

\begin{claim}
\label{selifish_utility}
$u: \mathbb{R} \rightarrow \mathbb{R}$ is selfish, i.e., for each ${\bm x} \in X$, $u := u(x_1)$. 
\end{claim}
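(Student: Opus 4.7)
My plan is to exploit a reparametrization indeterminacy in the menu-invariant APU representation from Lemma \ref{apu_menu}, and then use Selfishness to obtain strict monotonicity in $x_1$. The key observation is that for any function $k : X \to \mathbb{R}$, the pair $(\tilde u, (\tilde c_{\bm x}))$ with $\tilde u({\bm x}) := u({\bm x}) + k({\bm x})$ and $\tilde c_{\bm x}(p) := c_{\bm x}(p) + k({\bm x}) p$ represents the same $\rho$, since the linear terms $k({\bm x}) p$ cancel in the objective
\[
\sum_{{\bm x} \in A} \bigl( u({\bm x}) \rho({\bm x}, A) - c_{\bm x}(\rho({\bm x}, A)) \bigr),
\]
while strict convexity and $C^1$-smoothness of $c_{\bm x}$ on $(0,1)$ are preserved. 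Hence the utility function and the cost functions carry redundant information that can be repackaged.

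Using this freedom, for each first-coordinate value $x_1 \in \pi_1(X)$ I would pick a canonical representative ${\bm x}^*(x_1) \in X$ with $\pi_1({\bm x}^*(x_1)) = x_1$, define $\tilde u(x_1) := u({\bm x}^*(x_1))$, and set $k({\bm x}) := \tilde u(\pi_1({\bm x})) - u({\bm x})$. Folding the residual into the cost functions yields an equivalent menu-invariant APU in which the utility depends only on the decision-maker's own payoff $x_1$. For strict monotonicity, I would then invoke Selfishness: for $x_1 > y_1$ with representatives ${\bm x}, {\bm y} \in X$, Selfishness rules out $\rho({\bm y}, A) = 1$ in any menu $A \ni {\bm x}$; converting this deterministic statement through the first-order conditions of the menu-invariant APU (namely $\tilde u(x_1) - c'_{\bm x}(\rho({\bm x}, A)) + \lambda(A) = 0$ at interior choices and the corresponding inequalities at corners) forces $\tilde u(x_1) > \tilde u(y_1)$, at least whenever suitably asymmetric menus are available.

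The principal obstacle is bridging the gap between Selfishness, which binds only on menus exhibiting a deterministic outcome, and the APU representation, which permits purely interior behavior. To resolve this, I would exploit the boundary data $\underline w({\bm x})$ and $\overline w({\bm x})$ used in the construction of $g_{\bm x}$ and $c_{\bm x}$ in Lemma \ref{apu_menu}: on sufficiently asymmetric menus the first-order condition yields a corner solution, at which Selfishness can be applied directly; otherwise, a limiting argument sending one allocation to a sufficiently extreme profile transfers the strict inequality from deterministic submenus back to the interior, ensuring that $\tilde u$ is strictly increasing on all of $\pi_1(X)$.
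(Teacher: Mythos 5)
Your reparametrization observation is correct and is a genuinely different route from the paper for the ``depends only on $x_1$'' half of the claim: since $\sum_{{\bm x}\in A}\bigl(k({\bm x})\rho({\bm x},A)-k({\bm x})\rho({\bm x},A)\bigr)=0$, the split between $u$ and $(c_{\bm x})_{{\bm x}\in X}$ in a menu-invariant APU is indeed not identified, and the paper itself concedes exactly this in the remark following Proposition \ref{uniqueness_SC}. The paper instead argues by contradiction: it posits a specific non-selfish form $u({\bm x})=x_1+x_2$, picks ${\bm x},{\bm y}$ with $x_1>y_1$ but $u({\bm x})<u({\bm y})$ and $\varphi({\bm x})<\varphi({\bm y})$, concludes $\rho({\bm x},A)<\rho({\bm y},A)$, and then asserts ``without loss of generality, assume $\rho({\bm y},A)=1$'' so that \textit{Selfishness} yields $y_1>x_1$, a contradiction. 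Your version is more honest about what is actually being proved (a representation \emph{can be chosen} with $u$ a function of $x_1$ alone), whereas the paper's argument at best addresses one hand-picked alternative functional form rather than all of them.

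The gap is in the monotonicity half, and it is the same gap the paper's own proof has: \textit{Selfishness} (Axiom \ref{selfishness}) is conditional on the existence of a menu $A$ with $\rho({\bm x},A)=1$. If $\rho$ is interior on every non-singleton menu --- which is exactly the situation under \textit{Positivity}, the hypothesis of the uniqueness result --- then the axiom is vacuous and places no restriction whatsoever on $u$; combined with your own observation that $u$ can be freely repackaged into the costs, nothing then forces $\tilde u$ to be strictly increasing in $x_1$. Your proposed repair does not close this: $X$ is a fixed compact set and $\rho$ is a given primitive, so you cannot ``send one allocation to a sufficiently extreme profile'' to manufacture a corner solution that the data does not contain, and the boundary values $\underline{w}({\bm x}),\overline{w}({\bm x})$ in Lemma \ref{apu_menu} are defined precisely to be degenerate ($0$ and $1$) when no deterministic choice ever occurs. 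The paper's ``WLOG assume $\rho({\bm y},A)=1$'' papers over the identical hole. To actually obtain strict monotonicity one needs an axiom with bite at interior probabilities --- e.g., something in the spirit of selfishness-based FOSD (Definition \ref{FOSD}) or the ordering of marginal costs via $\varphi$ in Step 4 --- rather than \textit{Selfishness} alone.
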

\begin{proof}
Suppose not. Let $u({\bm x}) = x_1 + x_2$. Take ${\bm x}, {\bm y} \in X$ such that $x_1 > y_1$ and $x_1+ x_2 < y_1 + y_2$. Without loss of generality, assume that $\varphi({\bm x}) < \varphi({\bm y})$. Consider an arbitrary menu $A \in \mathcal{A}$ with ${\bm x}, {\bm y} \in A$. Then, $u({\bm x}) = x_1 + x_2 < y_1 + y_2 = u({\bm y})$ and $\varphi({\bm x}) < \varphi({\bm y})$ hold. By construction, we have  $\rho({\bm x}, A) < \rho({\bm y}, A)$. Without loss of generality, assume $\rho({\bm y}, A) = 1$. The axiom of \textit{Selfishness} requires that $y_1 > x_1$. However, $x_1 > y_1$. This is a contradiction.
\end{proof}
\begin{claim}
\label{selfish_utility_continuity}
$u: \mathbb{R} \rightarrow \mathbb{R}$ is continuous.
\end{claim}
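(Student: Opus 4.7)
The plan is to express $u(x_1^n) - u(x_1)$ via the first-order condition from the APU(SP) representation, and then use the Continuity axiom together with Claim~\ref{norm_continuity} (continuity of $\varphi$) to pass to the limit. Since the FOC ties $u$ to $\rho$ through $c'$ and the Lagrange multiplier $\lambda$, the key is to eliminate $\lambda$ and reduce continuity of $u$ to continuity of observable and already-established quantities.

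Concretely, I would start with a convergent sequence $x_1^n \to x_1$ in $\pi_1(X)$ and pick ${\bm x}^n \to {\bm x}$ with first coordinates $x_1^n$ and $x_1$, respectively (compactness of $X$ allows this up to subsequences). I would fix a reference allocation ${\bm y} \in X$ and a menu $A \ni {\bm x},{\bm y}$ such that $\rho({\bm x}, A), \rho({\bm y}, A) \in (0, 1)$, then form $A^n := (A\setminus \{{\bm x}\}) \cup \{{\bm x}^n\}$. By Axiom~\ref{continuity}, $\rho({\bm x}^n, A^n) \to \rho({\bm x}, A)$ and $\rho({\bm y}, A^n) \to \rho({\bm y}, A)$, and hence both probabilities remain interior for large $n$. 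Subtracting the FOC for ${\bm y}$ from the FOC for ${\bm x}^n$ in menu $A^n$ eliminates $\lambda(A^n)$:
\begin{align*}
u(x_1^n) - u(y_1) = c'_{\varphi({\bm x}^n)}(\rho({\bm x}^n, A^n)) - c'_{\varphi({\bm y})}(\rho({\bm y}, A^n)).
\end{align*}
The second term on the right converges to $c'_{\varphi({\bm y})}(\rho({\bm y}, A))$ by the $C^1$ property of $c_{\varphi({\bm y})}$, so it only remains to pass to the limit in the first term.

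To conclude, I would establish joint continuity of $({\bm x}, p) \mapsto c'_{\varphi({\bm x})}(p)$ at $({\bm x}, \rho({\bm x}, A))$. Continuity in $p$ is immediate from the $C^1$ hypothesis. For continuity in the parameter, I would invoke Claim~\ref{norm_continuity} (so $\varphi({\bm x}^n) \to \varphi({\bm x})$) and revisit the construction in Step~1: since $g_{\bm z}(\rho({\bm z}, B)) = \lambda(B)$ whenever $\rho({\bm z}, B) \in (0,1)$, applying this identity at the fixed allocation ${\bm y}$ in the menus $A^n \to A$ yields $\lambda(A^n) \to \lambda(A)$ (because $\rho({\bm y}, A^n) \to \rho({\bm y}, A)$ and $c_{\varphi({\bm y})}$ is $C^1$). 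Combined with continuity of $\varphi$ and the cross-monotonicity property (ii) of the cost family, this gives the joint continuity needed to conclude $c'_{\varphi({\bm x}^n)}(\rho({\bm x}^n, A^n)) \to c'_{\varphi({\bm x})}(\rho({\bm x}, A))$, and therefore $u(x_1^n) \to u(x_1)$.

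The main obstacle is precisely the joint continuity of $c'$ in its two arguments, because the APU(SP) definition only imposes strict convexity in $p$, $C^1$ in $p$, and cross-monotonicity in the parameter—none of which immediately guarantees continuity of $c'_{\varphi({\bm x}^n)}(p_n)$ as ${\bm x}^n \to {\bm x}$. My plan to overcome this is to anchor the argument on the fixed reference allocation ${\bm y}$: continuity of $\rho$ along $A^n \to A$ together with the $C^1$ property at the single point ${\bm y}$ is enough to transport continuity of $\lambda$ to convergent menu sequences, and the continuity of $\varphi$ (Claim~\ref{norm_continuity}) propagates this through the parameter of the cost family. If a direct joint continuity argument proves cumbersome, I would instead use a squeeze argument based on the cross-monotonicity property (ii), bracketing $\varphi({\bm x}^n)$ between two $\varphi$-values that agree with $\varphi({\bm x})$ in the limit, which forces $c'_{\varphi({\bm x}^n)}(\rho({\bm x}^n, A^n))$ to squeeze to $c'_{\varphi({\bm x})}(\rho({\bm x}, A))$.
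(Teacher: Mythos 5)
Your route is genuinely different from the paper's, but it has a gap that neither of your proposed workarounds closes. The paper proves this claim by a purely topological argument: it shows $u^{-1}((a,b))$ is open by writing it as the intersection of a revealed strict upper contour set and a revealed strict lower contour set, each of which is open by Axiom~\ref{continuity} (continuity of $\rho$). Your argument instead runs through the first-order condition and therefore needs $c'_{\varphi({\bm x}^n)}(\rho({\bm x}^n, A^n)) \to c'_{\varphi({\bm x})}(\rho({\bm x}, A))$, i.e., joint continuity of the cost derivative in its \emph{parameter} argument. The definition of APU(SP) gives you only $C^1$-ness in $p$ for each fixed parameter value, strict convexity in $p$, and the cross-monotonicity $\varphi({\bm x}) < \varphi({\bm y}) \Rightarrow c'_{\varphi({\bm x})}(p) > c'_{\varphi({\bm y})}(p)$. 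Cross-monotonicity makes $a \mapsto c'_a(p)$ strictly decreasing, but a strictly decreasing function can jump; bracketing $\varphi({\bm x}^n)$ between two values converging to $\varphi({\bm x})$ only squeezes $c'_{\varphi({\bm x}^n)}(p)$ between the one-sided limits of that decreasing map at $\varphi({\bm x})$, which need not coincide with $c'_{\varphi({\bm x})}(p)$. So the squeeze does not close the gap.

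The $\lambda$-transport idea is circular for a related reason. You correctly get $\lambda(A^n) \to \lambda(A)$ from the fixed reference ${\bm y}$ (its cost function is a single $C^1$ function and $\rho({\bm y}, A^n) \to \rho({\bm y}, A)$). But the FOC at ${\bm x}^n$ reads $u(x_1^n) = c'_{{\bm x}^n}(\rho({\bm x}^n, A^n)) - \lambda(A^n)$, so given convergence of $\lambda(A^n)$, convergence of the cost-derivative term is \emph{equivalent} to convergence of $u(x_1^n)$ --- which is the claim you are trying to prove. Indeed, in the Step~1 construction the cost derivative at the realized probability is pinned down by $\lambda$ and $u$ themselves, so there is no independent analytic handle on it. There is also an ordering issue: this claim sits in Step~3 of the sufficiency direction, before the $\varphi$-indexed cost family is constructed in Step~4, so the FOC of the final representation is not yet available (only the menu-invariant APU of Step~1 is, and the same joint-continuity obstruction applies to $c_{\bm x}$ there). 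The fix is to abandon the FOC and argue as the paper does: use Axiom~\ref{continuity} directly to show that the sets $\lbrace x_1 \mid u(x_1) > u(y_1) \rbrace$ and $\lbrace x_1 \mid u(x_1) < u(z_1) \rbrace$, which are revealed by strict choice-probability comparisons, are open, and hence that preimages of open intervals under $u$ are open.
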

\begin{proof}
To show the claim, we show that for any open set $Y \subset \mathbb{R}$, $u^{-1}(Y)$ is open. Let $Y = (a, b)$ be the open interval where $a, b \in \mathbb{R}$ with $a < b$. Consider $u^{-1}(Y) = \lbrace x_1 \in \mathbb{R} \vert u(x_1) \in Y \rbrace$. Take $x_1 \in u^{-1}(Y)$. By the continuity of $\rho$, both $E = \lbrace x_1 \in \mathbb{R} \vert u(x_1) > u(y_1) \rbrace$ and $F = \lbrace x_1 \in \mathbb{R} \vert u(z_1) > u(x_1) \rbrace$ are open. Without loss of generality, let $b = u(y_1)$, and $a = u(z_1)$. By the continuity of $\rho$, the set $u^{-1}(Y) \cup E \cup F$ is open. Hence, $u^{-1}(Y)$ is open. 
\end{proof}

\subsubsection{Step 4}
\label{sufficiency_step4}
In this step, we complete the proof and obtain the desired utility representation (APU(SP)). We incorporate the personal-norm ranking ($\succsim_n^{\rho}$) into menu-invariant APU. In the first-order optimality condition (FOC) for $\rho$, we construct the cost function by replacing $c_{{\bm x}}$ with $c_{\varphi({\bm x})}$ for each ${\bm x} \in X$. 

Given an arbitrary menu $A$ with ${\bm x} \in A$, take an allocation ${\bm y} \in X \setminus A$ such that ${\bm y} \succ_n^{\rho} {\bm x}$. Fix ${\bm x}$ and ${\bm y}$. For notational convenience, let $B := A \cup \lbrace {\bm y} \rbrace$. By the definition of  the personal-norm ranking ($\succsim_n^{\rho}$), we obtain $\rho({\bm x}, A) > \rho({\bm x}, B)$. Without loss of generality, assume that $\rho({\bm x}, A), \rho({\bm x}, B) \in (0, 1)$. Then, by the menu-invariant utility representation with a purely selfish utility $u$, we have
\begin{align*}
\rho({\bm x}, A) > \rho({\bm x}, B) &\Leftrightarrow \lambda(A) > \lambda(B) \\
	&\Leftrightarrow - \lambda(A) < - \lambda(B) \\
	&\Leftrightarrow u(x_1) - c'_{{\bm x}}(\rho({\bm x}, A)) < u(x_1) - c'_{{\bm y}}(\rho({\bm x}, B)) \\
	&\Leftrightarrow c'_{{\bm x}}(\rho({\bm x}, A)) > c'_{{\bm y}}(\rho({\bm x}, B)).
\end{align*}
As in Step 3, we can restrict the utility function $u$ to the selfish utility, i.e., $u({\bm x}) := u(x_1)$. Remember that, by the FOC for $\rho$, we have
\begin{align*}
u(x_1) - c'_{{\bm x}}(\rho({\bm x}, A)) = u(y_1) - c'_{{\bm y}}(\rho({\bm y}, A)).
\end{align*}
Without loss of generality, assume that $u(x_1) > u(y_1)$. Then, we have $c'_{{\bm x}}(\rho({\bm x}, A)) > c'_{{\bm y}}(\rho({\bm y}, A))$. By the strict convexity of $c_{{\bm x}}$ for each ${\bm x} \in X$, we have $c''_{{\bm x}}(\rho({\bm x}, A)) > c''_{{\bm y}}(\rho({\bm y}, A))$. Hence, it is verified that $\rho({\bm x}, A) > \rho({\bm y}, A)$. Let $p = \rho({\bm x}, A)$ and $q = \rho({\bm y}, A)$. Since $c'_{{\bm x}}(\cdot)$ is monotone, we have $c'_{{\bm x}}(p) > c'_{{\bm y}}(q)$. 

Finally, we construct $c_{\varphi(\cdot)}(\cdot)$. Let $p = \rho({\bm x}, A)$ and $q = \rho({\bm y}, A)$ with $p > q$. Define $c_{\varphi(\cdot)}(\cdot)$ with $c'_{\varphi({\bm x})}(p) > c'_{\varphi({\bm y})}(q)$ if $c'_{{\bm x}}(p) > c'_{{\bm y}}(q)$. 
\begin{claim}
\label{cost_well-defined}
$c_{\varphi(\cdot)}(\cdot)$ is well-defined.
\end{claim}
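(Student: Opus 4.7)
The plan is to verify two properties of the family $\lbrace c_{\varphi({\bm x})} \rbrace_{{\bm x} \in X}$ constructed in Step 4: consistency (so that $c_{\varphi({\bm x})}$ depends only on the numerical value $\varphi({\bm x})$, not on the chosen representative ${\bm x}$) and the monotonicity property (ii) in the definition of APU(SP), namely $c'_{\varphi({\bm x})}(p) > c'_{\varphi({\bm y})}(p)$ whenever $\varphi({\bm x}) < \varphi({\bm y})$.

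For consistency, I would use the equivalence ${\bm x} \sim_n^{\rho} {\bm x}' \Leftrightarrow \varphi({\bm x}) = \varphi({\bm x}')$ established in Lemma \ref{representation_norm}. By asymmetry of $\succ_n^{\rho}$ (from Personal Norm Ranking) together with Regularity (Remark \ref{menu_acyclicity_regularity}), indifference ${\bm x} \sim_n^{\rho} {\bm x}'$ forces $\rho({\bm x}, A) = \rho({\bm x}, A \cup \lbrace {\bm x}' \rbrace)$ for every admissible $A$, and symmetrically for ${\bm x}'$: one direction (the weak inequality) is the negation of the strict personal-norm ranking, and the other direction comes from Regularity. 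Condition (ii) of Lemma \ref{apu_menu} then delivers $\lambda(A) = \lambda(A \cup \lbrace {\bm x}' \rbrace)$. Combining these equalities with the FOC $c'_{{\bm x}}(\rho({\bm x},A)) = u(x_1) + \lambda(A)$ from the menu-invariant APU and the selfish utility $u$ derived in Step 3, I would deduce $c'_{{\bm x}}(p) = c'_{{\bm x}'}(p)$ at every $p$ simultaneously attained as a choice probability for ${\bm x}$ and for ${\bm x}'$, then extend this equality to all $p \in (0,1)$ using continuity of $\rho$ (Axiom \ref{continuity}) and continuity of the cost functions built in Step 1.

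For property (ii), I would appeal directly to the definition of $\succ_n^{\rho}$. Whenever $\varphi({\bm x}) < \varphi({\bm y})$, there exists a menu $A \ni {\bm x}$ with ${\bm y} \notin A$ such that $\rho({\bm x}, A) > \rho({\bm x}, A \cup \lbrace {\bm y} \rbrace)$, whence $\lambda(A) > \lambda(A \cup \lbrace {\bm y} \rbrace)$ by Lemma \ref{apu_menu}. Substituting into the FOC at ${\bm x}$ in $A$ and at ${\bm y}$ in $A \cup \lbrace {\bm y} \rbrace$, and invoking strict convexity of each $c_{{\bm x}}$, yields the strict comparison $c'_{{\bm x}}(p) > c'_{{\bm y}}(p)$ at witnessing probabilities. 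By varying the auxiliary menu $A$ so that $\rho({\bm x}, A)$ traces out an open subset of $(0,1)$, and again appealing to continuity, the strict inequality propagates to every $p \in (0,1)$.

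The main obstacle is precisely this extension from pointwise relations at \emph{attainable} probabilities to the full interval $(0,1)$: it rests on a density argument for attainable probabilities obtained by perturbing menus, together with continuity of $\rho$ and of the cost functions. A subsidiary technical point is that the Step 1 construction leaves additive freedom in each $c_{{\bm x}}$; one must exploit this freedom uniformly (for instance by normalizing at a common benchmark probability within each $\varphi$-equivalence class) so that the $\varphi$-indexed family is unambiguously determined.
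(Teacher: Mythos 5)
Your proposal is correct in its core mechanism and that mechanism matches the paper's: both arguments run through the FOC for $\rho$, the Lagrange-multiplier comparison of Lemma \ref{apu_menu}, and the definition of $\succ_n^{\rho}$ to force $c'_{\varphi({\bm x})}(p) > c'_{\varphi({\bm y})}(p)$ when $\varphi({\bm x}) < \varphi({\bm y})$. But your decomposition is genuinely different and, in two respects, more demanding than what the paper actually does. First, you isolate the representative-independence question --- that $c_{\varphi({\bm x})}$ must depend only on the numerical value $\varphi({\bm x})$, so that $c_{{\bm x}}$ and $c_{{\bm x}'}$ can be identified whenever ${\bm x} \sim_n^{\rho} {\bm x}'$ --- which is arguably the literal content of ``well-defined'' and which the paper's proof never addresses; note, though, that your sketch of this step still has to reconcile $u(x_1) \neq u(x'_1)$ with $\varphi({\bm x}) = \varphi({\bm x}')$, since indifference under $\succsim_n^{\rho}$ does not equalize the selfish utilities, so the two FOCs pin down the marginal costs at \emph{different} attained probabilities rather than delivering $c'_{{\bm x}}(p) = c'_{{\bm x}'}(p)$ at a common $p$ directly. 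Second, you explicitly flag that the FOC only compares marginal costs at the attained probabilities $\rho({\bm x}, A)$ and $\rho({\bm y}, \cdot)$, which generally differ, and that passing to the uniform comparison $c'_{\varphi({\bm x})}(p) > c'_{\varphi({\bm y})}(p)$ at the same $p \in (0,1)$ requires a density-plus-continuity argument; the paper's proof silently makes this jump (it derives $c'_{\varphi({\bm x})}(\rho({\bm x},A)) > c'_{\varphi({\bm y})}(\rho({\bm y},A))$ from $u(x_1) > u(y_1)$ and then asserts the same-$p$ inequality without comment). What the paper does that you omit is the boundary case analysis: it separately treats $\rho({\bm y}, A) = 0$ and $\rho({\bm x}, A) = 1$, where the FOC holds only as an inequality, so your interior-probability argument would need those cases added (or \emph{Positivity} invoked) to cover the construction on all of $[0,1]$.
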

\begin{proof}
Consider ${\bm x}$ and ${\bm y}$ with a menu $A$ with ${\bm x} \in A$, and ${\bm y} \in X \setminus A$. 

To begin with, suppose ${\bm y} \succ_n^{\rho} {\bm x}$, as in Step 4. By the way of contradiction, suppose that $c'_{\varphi({\bm x})}(p) \leq c'_{\varphi({\bm y})}(p)$. Assume that $\rho({\bm x}, A), \rho({\bm y}, A) \in (0, 1)$. By FOC for $\rho$ (See Equation (\ref{eq_FOC}).), we have $u(x_1) - c'_{\varphi({\bm x})}(\rho({\bm x}, A)) = u(y_1) - c'_{\varphi({\bm y})}(\rho({\bm y}, A))$. By Step 4, $u(x_1) > u(y_1)$. We must have $c'_{\varphi({\bm x})}(p) > c'_{\varphi({\bm y})}(p)$. Hence, the assumption of $c'_{\varphi({\bm x})}(p) \leq c'_{\varphi({\bm y})}(p)$ is a contradiction. 

Next, in the case of $x_1 > y_1$ and ${\bm x} \succ_n^{\rho} {\bm y}$ satisfying the monotonicity with respect to fair allocations, it can be a case of $\rho({\bm y}, A) = 0$. By the FOC for $\rho$, $u(x_1) - c'_{\varphi({\bm x})}(\rho({\bm x}, A)) \geq u(y_1) - c'_{\varphi({\bm y})}(\rho({\bm y}, A))$ holds. Notice that $C_{\varphi(\cdot)}^1$ over $(0, 1)$. We thus obtain $u(x_1) - c'_{\varphi({\bm x})}(\rho({\bm x}, A)) \geq u(y_1)$. 

Finally, if $\rho({\bm x}, A) = 1$, by FOC for $\rho$, we obtain $\rho$ through $u(x_1) > u(y_1)$.  
\end{proof}

To complete the proof, we verify that $c_{\varphi(\cdot)}(\cdot)$ is a cost function. By definition, the strict convexity of $c_{\varphi(\cdot)}(\cdot)$ stems from that of $c_{\cdot}(\cdot)$. 

We can obtain the conditions (i) and (ii) in Definition \ref{apu_sp}. The first condition stems from the differentiability of $c_{\cdot}(\cdot)$. The second condition is as follows: $c'_{\varphi({\bm x})}(p) > c'_{\varphi({\bm y})}(p)$ for each $p \in (0, 1)$, if $\varphi({\bm x}) < \varphi({\bm y})$. Hence, we have the desired utility representation. \qed

\subsection{Necessity Part}
Suppose that $\rho$ is represented by a pair $(u, \varphi)$.
\paragraph*{\textsf{Continuity}.}
Let $u: \mathbb{R} \rightarrow \mathbb{R}$ and $\varphi: X \rightarrow \mathbb{R}$ be \textit{continuous} functions. $\rho$ is represented by
	\begin{equation}
	\rho_{u,\varphi}(A) = \arg \max_{\rho \in \Delta(A)} \sum_{{\bm x} \in A} \Bigl( u(x_1) \rho({\bm x}) - c_{\varphi({\bm x})}(\rho({\bm x}) \Bigr). \nonumber
	\end{equation}
Remember that $X$ is compact. The inverse function $u^{-1}(\cdot)$ is bounded. $u^{-1}(\cdot)$ is a subset of the menu $A$, i.e., $u^{-1}(\cdot) \subseteq A$, and $A$ is bounded and closed. Since $u$ and $c_{\varphi(\cdot)}$ are continuous, the inverse image of a closed set is closed. Hence, $\rho_{u,\varphi}$ is closed. 

\paragraph*{\textsf{Selfishness}.}
Take an arbitrary menu $A$ with $\rho({\bm x}, A) = 1$. For each $x_1, y_1 \in \mathbb{R}$, $x_1 > y_1$ if and only if $u(x_1) > u(y_1)$ holds. Hence, $x_1 > y_1$ for all ${\bm y} \in A \setminus \lbrace {\bm x} \rbrace$. 

\paragraph*{\textsf{Menu Acyclicity}.}
Since $c_{\varphi({\bm x})}$ is a special case of $c_{\bm{x}}$ for each ${\bm x} \in X$, $\rho$ satisfies Menu Acyclicity \citep{FIS_2014}. 

\paragraph*{\textsf{Personal Norm Ranking}.}
By definition, we have $C^1_{\varphi(\cdot)}$ over $(0, 1)$. For each ${\bm x}, {\bm y} \in X$, if $\varphi({\bm x}) < \varphi({\bm y})$, i.e., ${\bm x} \prec_n^{\rho} {\bm y}$, then  $c'_{\varphi({\bm x})}(p) > c'_{\varphi({\bm y})}(p)$ for all $p \in (0, 1)$. 

$\varphi: X \rightarrow \mathbb{R}$ represents $\succsim_n^{\rho}$; That is, for each ${\bm x}, {\bm y} \in X$, $\varphi({\bm x}) \geq \varphi({\bm y})$ if and only if ${\bm x} \succsim_n^{\rho} {\bm y}$. Hence, $\succsim_n^{\rho}$ is a weak order. Moreover, since $\varphi$ is monotone with respect to fair allocations, it is straightforward that $\succsim_n^{\rho}$ satisfies the monotonicity with respect to fair allocations. \qed

\section{Proofs of Propositions}
\subsection{Proof of Proposition \ref{uniqueness_SC}}
\label{app_uniqueness}
We prove Proposition \ref{uniqueness_SC}.  We exploit the fact that under \textit{Positivity}, APU can be seen as an extension of the ``Fechnerian'' utility representation (Definition \ref{fenchnerian}) of stochastic choice from doubletons to general menus. 

Suppose that $\rho$ satisfies the axioms in Theorem \ref{representation_SP}. Both $u$ and $\varphi$ are unique up to positive affine transformations with the same unit. That is, there exist $\alpha > 0$, $\beta_u, \beta_{\varphi} \in \mathbb{R}$ such that
\begin{enumerate}
\renewcommand{\labelenumi}{(\roman{enumi})}
	\item $\widehat{u} = \alpha u + \beta_u$, and
	\item $\widehat{\varphi} = \alpha \varphi + \beta_{\varphi}$. 
\end{enumerate}
The first result (i) follows from \citet{FIS_2015}'s Corollary 1, and the second result (ii) is a standard uniqueness result. By using this result, we show the following statement: \begin{lemma}
\label{cost_uniqueness}
Suppose that $(u, c_{\varphi})$ and $(\widehat{u}, \widehat{c}_{\varphi})$ represent the same $\rho$. Then, there exists real numbers $\alpha > 0$ and $\beta_u, \beta_{\varphi}, \gamma, \delta \in \mathbb{R}$ such that $\widehat{u} = \alpha u + \beta_u$ and $\widehat{\varphi} = \alpha \varphi + \beta_{\varphi}$ with
\begin{align*}
\widehat{c}_{\widehat{\varphi}}(p) = \alpha c_{\varphi}(p) + \gamma p + \delta
\end{align*}
for all $p \in (0, 1)$.
\end{lemma}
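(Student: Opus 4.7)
The plan is to exploit the first-order conditions (FOCs) of the two APU(SP) representations under \textit{Positivity}, which forces every $\rho({\bm x},A)$ to lie in $(0,1)$ so that the interior FOC is in force everywhere. The existence of a common positive scale $\alpha$ in $\widehat{u}=\alpha u+\beta_u$ and $\widehat{\varphi}=\alpha\varphi+\beta_\varphi$ I would take from adapting the uniqueness argument of \citet{FIS_2015}'s Corollary~1 to the selfish utility, together with the standard uniqueness of the utility representing $\succsim_n^\rho$ established in Lemma~\ref{representation_norm}; the common scale is inherited because utility differences in the selfish component enter the FOC side-by-side with marginal costs that are indexed by $\varphi$, tying the cardinal units of the two utilities to each other through monotonicity condition (ii) of APU(SP).

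Under Positivity, the FOCs for the two representations give, for every menu $A$ and every ${\bm x}\in A$,
\[
c'_{\varphi({\bm x})}(\rho({\bm x},A))=u(x_1)+\lambda(A),\qquad \widehat{c}'_{\widehat{\varphi}({\bm x})}(\rho({\bm x},A))=\alpha u(x_1)+\beta_u+\widehat{\lambda}(A),
\]
where $\lambda,\widehat{\lambda}:\mathcal{A}\rightarrow\mathbb{R}$ are the associated Lagrange multipliers. Taking differences across two allocations ${\bm x},{\bm y}\in A$ eliminates the multipliers and yields
\[
\widehat{c}'_{\widehat{\varphi}({\bm x})}(\rho({\bm x},A))-\alpha c'_{\varphi({\bm x})}(\rho({\bm x},A)) = \widehat{c}'_{\widehat{\varphi}({\bm y})}(\rho({\bm y},A))-\alpha c'_{\varphi({\bm y})}(\rho({\bm y},A)),
\]
so the quantity $\Phi(A):=\widehat{c}'_{\widehat{\varphi}({\bm x})}(\rho({\bm x},A))-\alpha c'_{\varphi({\bm x})}(\rho({\bm x},A))$ depends only on $A$. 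Substituting back into the FOCs gives $\Phi(A)=\widehat{\lambda}(A)-\alpha\lambda(A)+\beta_u$, and by comparing two menus $A,B$ that share an allocation ${\bm x}$ with interior probability one shows that $\widehat{\lambda}-\alpha\lambda$ is in fact constant on $\mathcal{A}$; let $\gamma$ denote this common value plus $\beta_u$.

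Fixing ${\bm x}$ and letting $A$ vary, \textit{Continuity} (Axiom~\ref{continuity}) combined with the strict monotonicity of $c'_{\varphi({\bm x})}$ makes $\rho({\bm x},A)$ sweep an open subinterval of $(0,1)$, on which the identity reads $\widehat{c}'_{\widehat{\varphi}({\bm x})}(p)=\alpha c'_{\varphi({\bm x})}(p)+\gamma$. By the $C^1$ property of both cost functions on $(0,1)$ and a routine continuity/extension argument, the identity extends to every $p\in(0,1)$. Integrating with respect to $p$ from a reference point then yields
\[
\widehat{c}_{\widehat{\varphi}({\bm x})}(p)=\alpha c_{\varphi({\bm x})}(p)+\gamma p+\delta,
\]
with $\delta$ the constant of integration; since additive constants in the cost function do not affect the maximizer in the APU(SP) program, $\delta$ can be normalized to the same value for every ${\bm x}$, producing the claimed form.

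The main obstacle I expect is showing that a \emph{single} $\alpha$ rescales both $u$ and $\varphi$, and that $\widehat{\lambda}-\alpha\lambda$ is constant across \emph{all} menus rather than only across menus that happen to share a common interior-probability allocation. The first point requires carefully threading the cardinal content of $u$ (read off from FOC differences within a menu) through condition~(ii) of APU(SP), which ties rescaling of $\varphi$ to the ordering of marginal costs and hence, via the FOC, to rescaling of $u$. The second point requires pasting the identity across overlapping menus using the finite-menu framework $\mathcal{A}$, for which \textit{Menu Acyclicity} and \textit{Continuity} together provide enough connectedness.
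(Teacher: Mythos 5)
Your proposal has the same skeleton as the paper's proof: obtain $\widehat{u}=\alpha u+\beta_u$ from the binary-menu (``Fechnerian'') uniqueness argument adapted from \citet{FIS_2015}, use the interior FOC under \textit{Positivity} to difference away the Lagrange multipliers, derive a pointwise identity $\widehat{c}\,'_{\widehat{\varphi}({\bm x})}(p)=\alpha c'_{\varphi({\bm x})}(p)+\gamma$, and integrate from a reference point to produce $\gamma p+\delta$. However, there is a genuine gap at the step you describe as ``a routine continuity/extension argument.'' The FOC only identifies $c'_{\varphi({\bm x})}$ and $\widehat{c}\,'_{\widehat{\varphi}({\bm x})}$ at probabilities $p$ that are actually attained as $\rho({\bm x},A)$ for some menu $A$. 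Off the attained set the cost functions are simply not pinned down by $\rho$, and the fact that both sides of the identity are continuous does not force them to agree there: two continuous functions that agree on a proper subset of $(0,1)$ need not agree elsewhere. Since $X$ is compact and $u$ is therefore bounded, there is no a priori reason the attained probabilities are dense in $(0,1)$. The paper closes exactly this hole by introducing an explicit \textit{Richness} axiom (for any ${\bm x}$ and any $p,q\in(0,1)$ with $p+q\leq 1$ there exist allocations and menus realizing these probabilities); your extension argument is where that assumption is silently doing the work, and it cannot be replaced by \textit{Continuity} alone.

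A second, related weak point is your claim that $\widehat{\lambda}-\alpha\lambda$ is constant on all of $\mathcal{A}$ by ``comparing two menus that share an allocation with interior probability.'' That comparison only yields $\Phi(A)=\Phi(B)$ when the shared allocation has the \emph{same} choice probability in both menus, so that the two marginal-cost terms coincide. When the shared allocation ${\bm x}$ has different probabilities in $A$ and $B$ (the generic case, by \textit{Regularity}), the difference $\Phi(A)-\Phi(B)$ involves $\widehat{c}\,'_{\widehat{\varphi}({\bm x})}$ and $c'_{\varphi({\bm x})}$ evaluated at distinct points, and concluding that it vanishes presupposes the identity $\widehat{c}\,'=\alpha c'+\gamma$ that you are in the middle of establishing --- a circularity. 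The paper avoids this by using \textit{Richness} to construct, for arbitrary $p,p'$, two menus containing auxiliary allocations ${\bm y},{\bm y}'$ with equal selfish payoff and equal choice probability $q$, so that the cross-menu comparison telescopes through utility differences rather than through the multipliers. You correctly flag the common-$\alpha$ issue as delicate (the ordinal uniqueness of $\varphi$ from Lemma \ref{representation_norm} alone gives only monotone, not affine, uniqueness), but the richness issue is the step that actually breaks the argument as written.
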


\begin{proof}
First, we show (i). By Corollary 1 in \citet{FIS_2015}, $\rho$ on binary menus has a ``Fechnerian utility'' representation. 
\begin{definition}
\label{fenchnerian}
(\textsf{Fenchnerian Utility}): We say that a stochastic choice rule $\rho$ on binary menus has a \textit{Fechenerian utility representation} if there exists a utility function $u: X \rightarrow \mathbb{R}$ and a strictly increasing transformation function $g$ such that
\begin{align*}
\rho({\bm x}, \lbrace {\bm x}, {\bm y} \rbrace) = g(u({\bm x}) - u({\bm y})).
\end{align*}
\end{definition}

\begin{lemma}
\label{lemma_fenchnerian}
(\textsf{Fenchnerian Utility Representation}): Suppose that $\rho$ is defined on binary menus (doubletons) and satisfies Positivity and Continuity hold. Then $\rho$ satisfies Menu Acyclicity if and only if $\rho$ on binary menus has a Fechnerian utility representation.
\end{lemma}
\begin{proof}
($\Leftarrow$): We show the necessity part. Take arbitrary allocations ${\bm x}, {\bm y}, {\bm y}' \in X$. Suppose that $\rho({\bm x}, \lbrace {\bm x}, {\bm y} \rbrace) \geq \rho({\bm x}, \lbrace {\bm x}, {\bm y}' \rbrace)$. Since $\rho$ has a Fenchnerian utility representation on binary menus,
\begin{align*}
	g(u(x_1) - u(y_1)) \geq g(u(x_1) - u(y'_1)) &\Leftrightarrow u(x_1) - u(y_1) \geq u(x_1) - u'y'_1) \\
		&\Leftrightarrow u(x_1) - u(y_1) \geq u(x_1) - u(y'_1).
\end{align*}
For each binary menu, let us define
\begin{align*}
	\lambda(\lbrace {\bm x}, {\bm y} \rbrace) := - \frac{u(x_1)+u(y_1)}{2}. 
\end{align*}
Since $u(y_1) = - u(x_1) - 2 \lambda (\lbrace {\bm x}, {\bm y} \rbrace)$, 
\begin{align*}
	\rho({\bm x}, \lbrace {\bm x}, {\bm y} \rbrace) \geq \rho({\bm x}, \lbrace {\bm x}, {\bm y}' \rbrace) \Leftrightarrow u(x_1) + \lambda (\lbrace {\bm x}, {\bm y} \rbrace) \geq u(x_1) + \lambda (\lbrace {\bm x}, {\bm y}' \rbrace).
\end{align*}
Moreover, we obtain $u(y_1) \leq u(y'_1)$. By definition, we obtain
\begin{align*}
- \frac{u(x_1)+u(y_1)}{2} > - \frac{u(x_1)+u(y'_1)}{2} \Leftrightarrow \lambda(\lbrace {\bm x}, {\bm y} \rbrace) > \lambda(\lbrace {\bm x}, {\bm y}' \rbrace).
\end{align*}
And, 
\begin{align*}
- \frac{u(x_1)+u(y_1)}{2} = - \frac{u(x_1)+u(y'_1)}{2} \Leftrightarrow \lambda(\lbrace {\bm x}, {\bm y} \rbrace) = \lambda(\lbrace {\bm x}, {\bm y}' \rbrace).
\end{align*}
Hence, we have $\lambda(\lbrace {\bm x}, {\bm y} \rbrace) > \lambda(\lbrace {\bm x}, {\bm y}' \rbrace)$ if $\rho({\bm x}, \lbrace {\bm x}, {\bm y} \rbrace) > \rho({\bm x}, \lbrace {\bm x}, {\bm y}' \rbrace)$, and $\lambda(\lbrace {\bm x}, {\bm y} \rbrace) = \lambda(\lbrace {\bm x}, {\bm y}' \rbrace)$ if $\rho({\bm x}, \lbrace {\bm x}, {\bm y} \rbrace) = \rho({\bm x}, \lbrace {\bm x}, {\bm y}' \rbrace)$. Thus, this separable utility representation ensures that $\rho$ satisfies \textit{Menu Acyclicity}.
\\\\
($\Rightarrow$): We show the sufficiency part. Suppose that $\rho$ satisfies \textit{Menu Acyclicity}. Then, $\rho$ has a Menu-Invariant APU representation. By the first order condition (FOC) for $\rho$, for any allocations ${\bm x}, {\bm y} \in X$, 
\begin{align*}
u(x_1) - u(y_1) = c'_{\bm x}(\rho({\bm x}, \lbrace {\bm x}, {\bm y} \rbrace) - c'_{\bm y}(1-\rho({\bm x}, \lbrace {\bm x}, {\bm y} \rbrace)).
\end{align*}
Without loss of generality, assume that $\rho({\bm x}, \lbrace {\bm x}, {\bm y} \rbrace)  > \rho({\bm y}, \lbrace {\bm x}, {\bm y} \rbrace)$. For each ${\bm x} \in X$, $c'_{{\bm x}}$ is strictly increasing, if $\rho({\bm x}, \lbrace {\bm x}, {\bm y} \rbrace)  > \rho({\bm y}, \lbrace {\bm x}, {\bm y} \rbrace)$ holds, then we have
\begin{align*}
u(x_1) - u(y_1) > u(x_1) - u(y'_1)
\end{align*}
for some ${\bm y}' \in X$. Then, we obtain
\begin{align*}
\rho({\bm x}, \lbrace {\bm x}, {\bm y} \rbrace)  > \rho({\bm x}, \lbrace {\bm x}, {\bm y}' \rbrace),
\end{align*}
which is a desired result.
\end{proof}

We apply the result of Lemma \ref{lemma_fenchnerian} to (i). Take ${\bm x}, {\bm y}, {\bm z}, {\bm t} \in X$. Suppose $\rho({\bm x}, \lbrace {\bm x}, {\bm y} \rbrace) \geq \rho({\bm z}, \lbrace {\bm z}, {\bm t} \rbrace)$. Then,
\begin{align*}
\rho({\bm x}, \lbrace {\bm x}, {\bm y} \rbrace) \geq \rho({\bm z}, \lbrace {\bm z}, {\bm t} \rbrace) &\Leftrightarrow u(x_1) - u(y_1) \geq u(z_1) - u(t_1) \\
&\Leftrightarrow \widehat{u}(x_1) - \widehat{u}(y_1) \geq \widehat{u}(z_1) - \widehat{u}(t_1).
\end{align*}
By the uniqueness result of $u$, there exist $\alpha > 0$ and $\beta_{u} \in \mathbb{R}$ such that $\widehat{u} = \alpha u + \beta$.

Next, we show (ii). In \citet{FIS_2015}, to obtain the uniqueness result, we need a richer technical condition. Intuitively, the range of observed stochastic behavior is rich enough.
\begin{axiom}
\label{richness}
(\textsf{Richness}): For any ${\bm x} \in X$ and $p, q \in (0, 1)$ such that $p + q \leq 1$, there exist ${\bm y}, {\bm z} \in X$ (not necessary distinct) such that
\begin{itemize}
	\item $\rho({\bm x}, \lbrace {\bm x}, {\bm y} \rbrace) = p$, and
	\item $\rho({\bm y}, \lbrace {\bm x}, {\bm y} \rbrace) = q$. 
\end{itemize}
\end{axiom}
The axiom of \textit{Richness} implies that the range of the utility function $u$ equals $\mathbb{R}$, i.e., $u(X) = \mathbb{R}$, and that there are at least three allocations with each utility level. However, APU(SP) restricts $u$ to be purely selfish, i.e., $u(x_1)$ for each ${\bm x} \in X$, so without loss of generality, suppose $u(X) = \mathbb{R}$. 

Take $p, p' \in (0, 1)$. For all $q < 1-p, 1-p'$, we have 
\begin{itemize}
	\item $\rho({\bm x}, A) = p$, 
	\item $\rho({\bm x}', A') = p'$,
	\item $\rho({\bm y}, A) = q$, and
	\item $\rho({\bm y}', A') = q$, 
\end{itemize}
for some ${\bm x}, {\bm x}', {\bm y}, {\bm y}' \in X$ with $y_1 = y'_1$, and $A, A' \in \mathcal{A}$ with ${\bm x}, {\bm y} \in A$ and ${\bm x}', {\bm y}' \in A'$. By the first order condition (FOC) for $\rho$,
\begin{align*}
\widehat{c}'_{\widehat{\varphi}({\bm x})}(p) - \widehat{c}'_{\widehat{\varphi}({\bm x}')}(p') &= \widehat{c}'_{\widehat{\varphi}({\bm x})}(p) - \widehat{c}'_{\widehat{\varphi}({\bm y}')}(q) + \widehat{c}'_{\widehat{\varphi}({\bm y}')}(q) - \widehat{c}'_{\widehat{\varphi}({\bm x}')}(p') \\
&= \widehat{u}(x_1) - \widehat{u}(y_1) + \widehat{u}(y'_1) - \widehat{u}(x'_1) \\
&= \alpha(u(x_1) - u(y_1) + u(y'_1) - u(x'_1)) \\
%&= \alpha(c'_{\varphi({\bm x})}(p) - c'_{\varphi({\bm y})}(q) + c'_{\varphi({\bm y}')}(q) - c'_{\varphi({\bm x}')}(p')) \\
&= \alpha(c'_{\varphi({\bm x})}(p) - c'_{\varphi({\bm x}')}(p')).
\end{align*}

Fix two allocations ${\bm x}, {\bm x}' \in X$ and let $p = \frac{1}{2}$. First, let $\gamma := \widehat{c}'_{\widehat{\varphi}({\bm x})}(\frac{1}{2}) - \alpha c'_{\varphi({\bm x}')}(\frac{1}{2})$. For all $p \in (0, 1)$,
\begin{align*}
\widehat{c}'_{\widehat{\varphi}({\bm x})}(p) - \widehat{c}'_{\widehat{\varphi}({\bm x}')}\Big(\frac{1}{2}\Bigr) &= \alpha \Bigl(c'_{\varphi({\bm x})}(p) - c'_{\varphi({\bm x}')}\Big(\frac{1}{2}\Bigr) \Bigr) \\
\Leftrightarrow \widehat{c}'_{\widehat{\varphi}({\bm x})}(p) &= \alpha c'_{\varphi({\bm x})}(p) + \widehat{c}'_{\widehat{\varphi}({\bm x}')}\Big(\frac{1}{2}\Bigr) - \alpha c'_{\varphi({\bm x}')}\Big(\frac{1}{2}\Bigr).
\end{align*}
Then, we obtain
\begin{align*}
\widehat{c}'_{\widehat{\varphi}({\bm x})}(p) = \alpha c'_{\varphi({\bm x}')}(p) + \gamma.
\end{align*}

Next, define $\delta := \widehat{c}_{\widehat{\varphi}({\bm x})}(\frac{1}{2}) - \alpha c_{\varphi({\bm x}')}(\frac{1}{2}) - \frac{\gamma}{2}$. For all $p \in (0, 1)$, 
\begin{align*}
\widehat{c}_{\widehat{\varphi}({\bm x})}(p) - \widehat{c}_{\widehat{\varphi}({\bm x})} \Bigl(\frac{1}{2} \Bigr) &= \int^{p}_{\frac{1}{2}} \widehat{c}'_{\widehat{\varphi}({\bm x})}(q) dq \\
&=  \int^{p}_{\frac{1}{2}} (\alpha c'_{\varphi({\bm x})}(q) + \gamma) dq \\
&= \alpha \Bigl(c_{\varphi({\bm x})}(p) - c_{\varphi({\bm x})}\Big(\frac{1}{2}\Bigr) \Bigr) + \Bigl(p-\frac{1}{2} \Bigr) \gamma.
\end{align*}
Thus, we obtain
\begin{align*}
\widehat{c}_{\widehat{\varphi}({\bm x})}(p) =  \alpha c_{\varphi({\bm x})}(p) + \delta + \gamma p.
\end{align*}
This result restricts the uniqueness of $c_{\varphi(\cdot)}$ to the same unit $\alpha$ of $u$. Let us write down the cost function in the following way: $c_{\varphi(\cdot)} := \mathcal{C}(\varphi(\cdot), p)$ for each $p \in (0, 1)$. The uniqueness result requires that the cost function is affine in the first argument, i.e., 
\begin{align*}
\mathcal{C}(\alpha \varphi(\cdot), p) &= \alpha \mathcal{C}(\varphi(\cdot), p) \\
&= \widehat{\mathcal{C}}(\varphi(\cdot), p).
\end{align*}
By re-writing $c_{\varphi(\cdot)}$, the desired result is obtained.
\end{proof}

\subsection{Proof of Proposition \ref{cost_selfishness}}
\label{app_cost_selfishness}
We prove Proposition \ref{cost_selfishness}.

\paragraph*{\textsf{the Necessity part} ($\Leftarrow$):} 
We show the necessity part. We show that the statement in Axiom \ref{axiom_selfishness}: $x_1 > y_1$ if and only if ${\bm x} \succ_n^{\rho} {\bm y}$.
 
Suppose that there exists a pair $(u, c_u)$. Given an arbitrary menu $A \in \mathcal{A}$, fix ${\bm x}, {\bm y} \in A$ with $x_1 > y_1$. By the first order condition (FOC) for $\rho$
\begin{align*}
u(x_1) - c'_{u(x_1)}(\rho({\bm x}, A)) = u(y_1) - c'_{u(y_1)}(\rho({\bm y}, A)).
\end{align*}
Since $x_1 > y_1$, we have $u(x_1) > u(y_1)$. Then, we obtain $c'_{u(x_1)}(\rho({\bm x}, A)) > c'_{u(y_1)}(\rho({\bm y}, A))$. Since $c'_{u(\cdot)}$ is a strictly increasing function, we obtain $\rho({\bm x}, A) > \rho({\bm y}, A)$.

The property (ii) states that $c'_{u(x_1)}(p) > c'_{u(y_1)}(p)$ for each $p \in (0, 1)$, if
				\begin{align*}
				u(x_1) > u(y_1).
				\end{align*}
Consider another menu $B := A \cup \lbrace {\bm z} \rbrace$ such that $x_1 > z_1$. Since APU(SP) satisfies \textit{Regularity}, suppose $\rho({\bm x}, A) > \rho({\bm x}, B)$. Then, we verify that it works. 
\begin{align*}
	c'_{u(x_1)}(\rho({\bm x}, A)) > c'_{u(x_1)}(\rho({\bm x}, B)).
\end{align*}
Consider the FOC for $\rho$: 
\begin{align*}
u(x_1) - c'_{u(x_1)}(\rho({\bm x}, A)) > u(y_1) - c'_{u(x_1)}(\rho({\bm x}, B)) \Leftrightarrow \lambda(A) = \lambda(B).
\end{align*}
By definition, we thus obtain $\rho({\bm x}, A) > \rho({\bm x}, B)$. 

Suppose $\rho({\bm x}, B)  = \rho({\bm z}, B)$. Then, by the FOC for $\rho$, we obtain
\begin{align*}
	c'_{u(x_1)}(\rho({\bm x}, B)) > c'_{u(z_1)}(\rho({\bm z}, B)). 
\end{align*}
In this argument, we obtain $x_1 > z_1$ if and only if ${\bm x} \succ_n^{\rho} {\bm z}$. 

\paragraph*{\textsf{the Sufficiency part}  ($\Rightarrow$):}
We show the sufficiency part. Suppose that $\rho$ satisfies \textit{Selfishness as a Norm}. Then, we have $\varphi = u$. Given a menu $A, B \in \mathcal{A}$ with ${\bm x} \in A \cap B$, suppose that $\rho({\bm x}, A) > \rho({\bm x}, B)$. We can write it down as follows. By the FOC for $\rho$, 
\begin{align*}
c'_{u(x_1)}(\rho({\bm x}, A)) < c'_{u(x_1)}(\rho({\bm x}, B)).
\end{align*}
Moreover, take arbitrary ${\bm x}, {\bm y} \in X$ with $x_1 > y_1$. Suppose that $p = \rho({\bm x}, A) = \rho({\bm y}, A') \in (0, 1)$. Such two pairs $({\bm x}, A)$ and $({\bm y}, A')$ exist due to the rich our setting, as in the proof of Proposition \ref{app_uniqueness}. Then, by the FOC for $\rho$, 
\begin{align*}
u(x_1) - c'_{u(x_1)}(\rho({\bm x}, A)) = u(y_1) - c'_{u(y_1)}(\rho({\bm y}, A')). 
\end{align*}
Since $u(x_1) > u(y_1)$, we obtain $c'_{u(x_1)}(\rho({\bm x}, A)) > c'_{u(y_1)}(\rho({\bm y}, A'))$. Hence, condition (ii) in $c_{u(\cdot)}$ is satisfied. \qed

\section{Proofs of Corollaries}
\subsection{Proof of Corollary \ref{fosd_selfishness}}
\label{app_fosd}
The definition of \textit{selfishness-based FOSD} is as follows: For each $A \in \mathcal{A}$, if $x_1 > y_1$, then $\rho({\bm x}, A) > \rho({\bm y}, A)$.

Suppose that $\rho$ has an APU form in Proposition \ref{cost_selfishness}. Take two allocations ${\bm x}, {\bm y} \in X$ with $x_1 > y_1$. By the first-order condition (FOC) for $\rho$, we obtain
\begin{align*}
u(x_1) - c'_{u(x_1)}(\rho({\bm x}, A)) = u(y_1) - c'_{u(y_1)}(\rho({\bm y}, A)). 
\end{align*}
Since $u(x_1) > u(y_1)$, we obtain
\begin{align*}
c'_{u(x_1)}(\rho({\bm x}, A)) > c'_{u(y_1)}(\rho({\bm y}, A)). 
\end{align*}
By the strict convexity of $c'_{u(\cdot)}$, we have $\rho({\bm x}, A) > \rho({\bm y}, A)$. \qed

\subsection{Proof of Corollary \ref{apu}}
\label{app_apu}
($\Leftarrow$): First, we show the necessity part. Suppose that $\rho$ satisfies the axioms in Theorem \ref{representation_SP}. To begin, \textit{Positivity} is satified due to the definition of $c$, which guarantees that $c$ is \textit{steep}, i.e., $\lim_{q \rightarrow 0} c'(q) = - \infty$. 

Next, to show \textit{Selfishness as a Norm}, we apply the result of \citet{FIS_2015}.  APU is characterized by \textit{Acyclicity}. The axiom implies the ordinal property: $\rho({\bm x}, A) > \rho({\bm y}, A) \Leftrightarrow \rho({\bm x}, B) > \rho({\bm y}, B)$ for any ${\bm x}, {\bm y} \in A$ with $x_1 > y_1$. Consider arbitrary two menus $A \in \mathcal{A}$ with ${\bm x}, {\bm y} \in A$ with $x_1 > y_1$. Without loss of generality, suppose $\rho({\bm x}, A) > \rho({\bm y}, A)$. By the first-order condition (FOC) for $\rho$, 
\begin{align*}
u(x_1) - c'(\rho({\bm x}, A)) = u(y_1) - c'(\rho({\bm y}, A)). 
\end{align*}
Since $u(x_1) > u(y_1)$, $c'(\rho({\bm x}, A)) > c'(\rho({\bm y}, A))$. Since $c'$ is monotone, the property (ii) of the cost function in Proposition \ref{cost_selfishness} is satisfied. This property holds for any menus becasuse of the ordinal propery. Thus, it is shown that \textit{Selfishness as a Norm} is satisfied. 

Finally, we verify that APU satisfies \textit{Order Independence}. Take arbitrary two menus $A, B \in \mathcal{A}$ with ${\bm x}, {\bm y} \in A \setminus B$. And, take an arbitrary allocation ${\bm z} \in B$. Suppose that $\rho({\bm z}, B \cup \lbrace {\bm x} \rbrace) \leq \rho({\bm z}, B \cup \lbrace {\bm y} \rbrace)$. Then, by the FOC for $\rho$, 
\begin{align*}
u(z_1) - c'(\rho({\bm z}, B \cup \lbrace {\bm x} \rbrace)) &\leq u(z_1) - c'(\rho({\bm z}, B \cup \lbrace {\bm y} \rbrace)) \\
\Leftrightarrow c'(\rho({\bm z}, B \cup \lbrace {\bm x} \rbrace)) &\geq c'(\rho({\bm z}, B \cup \lbrace {\bm y} \rbrace)) \\
\Leftrightarrow \lambda(B \cup \lbrace {\bm x} \rbrace) &\leq \lambda(B \cup \lbrace {\bm y} \rbrace)).
\end{align*}
Without loss of generality, suppose $x_1 > y_1$. By the way of contradiction, suppose
\begin{align*}
u(x_1) - c'(\rho({\bm x}, B \cup \lbrace {\bm x} \rbrace)) &> u(y_1) - c'(\rho({\bm y}, B \cup \lbrace {\bm y} \rbrace)) \\
\Leftrightarrow  c'(\rho({\bm x}, B \cup \lbrace {\bm x} \rbrace)) &< c'(\rho({\bm y}, B \cup \lbrace {\bm y} \rbrace)).
\end{align*}
Since $c'$ is monotone, we have $\rho({\bm x}, B \cup \lbrace {\bm x} \rbrace) < \rho({\bm y}, B \cup \lbrace {\bm y} \rbrace)$. This is consistent with the assumption of $x_1 > y_1$. Thus, 
\begin{align*}
u(x_1) - c'(\rho({\bm x}, B \cup \lbrace {\bm x} \rbrace)) &\leq u(y_1) - c'(\rho({\bm y}, B \cup \lbrace {\bm y} \rbrace)) \\
\Leftrightarrow c'(\rho({\bm x}, B \cup \lbrace {\bm x} \rbrace)) &\geq c'(\rho({\bm z}, B \cup \lbrace {\bm y} \rbrace)) \\
\Leftrightarrow \rho({\bm x}, B \cup \lbrace {\bm x} \rbrace) &\geq \rho({\bm y}, B \cup \lbrace {\bm y} \rbrace).
\end{align*}
Hence, we obtain $\rho({\bm x}, A) \geq \rho({\bm y}, A)$ because, by the FOC for $\rho$, $u(x_1) - c'(\rho({\bm x}, A) = u(y_1) - c'(\rho({\bm y}, A)$. Thus, it is shown that APU satisfies \textit{Order Independence}.
\\\\
($\Rightarrow$): Next, we show the sufficiency part. Under APU(SP), we show that \textit{Positivity} , \textit{Selfishness as a Norm, }and \textit{Order Independence} imply \textit{Acyclicity}.

Under APU(SP), \textit{Acyclicity} is equivalent to the ordinal properties in the following manner. 
\begin{enumerate}
		\renewcommand{\labelenumi}{(\alph{enumi})}
			\item $\rho({\bm x}, A) > \rho({\bm y}, A) \Leftrightarrow \rho({\bm x}, B) > \rho({\bm y}, B)$ for any ${\bm x}, {\bm y} \in A$;
			\item $\rho({\bm x}, A) > \rho({\bm x}, B) \Leftrightarrow \rho({\bm y}, A) > \rho({\bm y}, B)$ for any ${\bm x}, {\bm y} \in A$. 
\end{enumerate}
First, we consider the property (a), whichis straightforward. Suppose that $x_1 > y_1$. Then, 
\begin{align*}
\rho({\bm x}, A) > \rho({\bm y}, A) &\Leftrightarrow c'_{u(x_1)}(\rho({\bm x}, A)) > c'_{u(y_1)}(\rho({\bm y}, A)) \\
&\Leftrightarrow u(x_1) - c'_{u(x_1)}(\rho({\bm x}, A)) = u(y_1) - c'_{u(y_1)}(\rho({\bm y}, A)) \\
&\Leftrightarrow c'_{u(x_1)}(\rho({\bm x}, B)) > c'_{u(y_1)}(\rho({\bm y}, B)) \\
&\Leftrightarrow \rho({\bm x}, B) > \rho({\bm y}, B).
\end{align*}

Second, we consider the property (b). Given a menu $B' \in \mathcal{A}$, take ${\bm z} \in B'$. By \textit{Order Independence}, we have
\begin{align*}
\rho({\bm z}, B' \cup \lbrace {\bm x} \rbrace) > \rho({\bm z}, B' \cup \lbrace {\bm y} \rbrace) &\Leftrightarrow \rho({\bm x}, A) > \rho({\bm y}, A).
\end{align*}
Hence, we have $x_1 > y_1$. Let $\widehat{B} := B' \cup \lbrace {\bm x} \rbrace$. By the FOC for $\rho$, 
\begin{align*}
\rho({\bm x}, A) > \rho({\bm x}, \widehat{B}) &\Leftrightarrow c'_{u(x_1)}(\rho({\bm x}, A)) < c'_{u(y_1)}(\rho({\bm y},  \widehat{B})) \\
&\Leftrightarrow u(x_1) - c'_{u(x_1)}(\rho({\bm x}, A)) > u(y_1) - c'_{u(y_1)}(\rho({\bm y}, \widehat{B})) \\
&\Leftrightarrow \lambda(A) >  \lambda(\widehat{B}). 
\end{align*}
We thus obtain $\rho({\bm y}, A) > \rho({\bm y}, B' \cup \lbrace {\bm y} \rbrace)$. Letting $B := B' \cup \lbrace {\bm y} \rbrace$, the desired result is obtained. Therefore, $\rho$ satisfies \textit{Acyclicity}, so let $c_{u(\cdot)} = c$. \qed

\subsection{Proof of Corollary \ref{eut}}
\label{app_eut}
($\Leftarrow$): First, we show the necessity part. For each $A \in \mathcal{A}$, suppose that 
\begin{align*}
\rho(A) = \arg \max_{\rho \in \Delta(A)} \sum_{{\bm x} \in A} u(x_1) \rho({\bm x}, A). 
\end{align*}
This utility representation states that the decision maker chooses the most selfish allocation with certainty; That is, for any $A \in \mathcal{A}$, there exists ${\bm x} \in A$ such that $x_1 > y_1 \in A \setminus \lbrace {\bm x} \rbrace$, $\rho({\bm x}, A) = 1$. It is shown that \textit{Deterministic Choice} is satisfied. 
\\\\
($\Rightarrow$): Next, we show the sufficiency part. Suppose that $\rho$ satisfies \textit{Deterministic Choice}. By \textit{Selfishness}, for each $A \in \mathcal{A}$, the allocation ${\bm x} \in A$ such that $\rho({\bm x}, A) = 1$ satisfies $x_1 > y_1$ for any ${\bm y} \in A \setminus \lbrace {\bm x} \rbrace$. Hence, $\rho(A) = \arg \max_{\rho \in \Delta(A)} \sum_{{\bm x} \in A} u(x_1) \rho({\bm x}, A)$. \qed

\bibliographystyle{econ-aea.bst}
\bibliography{literature}

\end{document}